\documentclass[a4paper,11pt]{article}
\expandafter\let\csname equation*\endcsname\relax
\expandafter\let\csname endequation*\endcsname\relax

\usepackage{algorithm}      
\usepackage{algpseudocode}
\usepackage{amsmath}
\usepackage{amsthm}
\usepackage{amssymb}
\usepackage{graphicx}
\usepackage{epstopdf}
\usepackage{multirow}
\usepackage{appendix}
\usepackage{color}
\usepackage{hyperref}
\usepackage{tikz}
\usepackage{rotating}
\usepackage{graphicx}
\usepackage{booktabs}
\usepackage{array}
\usepackage{subcaption}
\usepackage{makecell}
\usepackage{bm}
\usepackage{adjustbox}
\usepackage{array}
\usepackage{url}

\renewcommand{\varepsilon}{\epsilon}
\usepackage{mathtools}
\mathtoolsset{showonlyrefs}

\newcommand{\R}{{\mathbb R}}

\usepackage{mathrsfs}
\usepackage{xcolor}
\usepackage{enumerate}

\def\Tau{\mathscr{T}}

\theoremstyle{definition}

\theoremstyle{plain}

\newtheorem{lemma}{Lemma}

\newtheorem{proposition}{Proposition}

\theoremstyle{remark}
\newtheorem{rem}{Remark}

\usepackage[a4paper,left=2.cm,right=2.cm,top=3.cm,bottom=3.cm]{geometry}

\usepackage[a4paper,left=2.cm,right=2.cm,top=3.cm,bottom=3.cm]{geometry}

\newenvironment{keywords}
{\par\medskip\noindent\textbf{Keywords:}\ }
{\par\medskip}
\newcommand{\MSC}[1]{%
\par\medskip\noindent\textbf{MSC: }#1\par\medskip
}
\begin{document}

\title{Diffusion Graph Posterior Sampling for Nonlinear Inverse Problems with Application to Electrical Impedance Tomography}

\date{Last edit:  \today}

\author{
 Giovanni S. Alberti\thanks{ORCID ID 0000-0002-8612-3663. MaLGa Center, Department of Mathematics, University of Genova, Italy. 
    Email: giovanni.alberti@unige.it }
\and
Damiana Lazzaro\thanks{ORCID ID 0000-0002-2029-9842. Department of Mathematics, University of Bologna, Bologna, Italy. Email: damiana.lazzaro@unibo.it. }
      \and
    Serena Morigi\thanks{ORCID ID 0000-0001-8334-8798. Department of Mathematics, University of Bologna, Bologna, Italy.
   Email: serena.morigi@unibo.it.}
  \and
   Matteo Santacesaria\thanks{ORCID ID 0000-0001-7257-604X. MaLGa Center, Department of Mathematics, University of Genova, Italy. 
  	Email: matteo.santacesaria@unige.it}
     \and
    Shibo Wang\thanks{Department of Mathematics, Harbin Institute of Technology, China, Department of Mathematics, University of Bologna, Bologna, Italy. 
     	Email: wangbobo1012@gmail.com }
}
\date{ }

\maketitle

\begin{abstract}
Deep generative models have emerged as state-of-the-art for solving inverse problems, but applying them to inverse problems for PDEs, like electrical impedance tomography (EIT) remains challenging. Because physical domains are naturally discretized as unstructured meshes rather than regular grids, standard convolutional architectures are often inadequate. In this paper, we propose a novel framework that extends diffusion posterior sampling (DPS) to graph-structured data. We develop an unconditional score-based diffusion model directly on a 2D triangular mesh to learn an accurate prior over the physical solution space. Furthermore, we introduce a regularized variant, RDPS, which incorporates explicit regularization terms, such as total variation and generalized Tikhonov, to complement the implicit diffusion prior and mitigate severe ill-posedness. Extensive experiments on synthetic and real 2D EIT datasets demonstrate that RDPS produces stable, physically plausible reconstructions. Our approach generalizes well to out-of-distribution inclusion geometries, is highly robust to measurement noise, and outperforms current state-of-the-art solvers (e.g., GPnP-BM3D, DP-SGS) in reconstruction accuracy and artifact reduction.
\end{abstract} 

\begin{keywords}
Electrical Impedance Tomography,
Regularized Diffusion Posterior Sampling,
PDE-constrained Inverse Problems,
Graph-based Diffusion Models,
Finite Element Meshes.
\end{keywords}

\MSC{65N21; 65N75; 35R30; 62F15; 68T07}

\section{Introduction}
Inverse problems in imaging science are inherently characterized by severe ill-posedness leading to extreme numerical instabilities during reconstruction. To address this challenge, traditional methodologies rely on classical variational optimization, enforcing stability through handcrafted regularization terms. Conversely, recent data-drive paradigms—including purely generative methods---learn complex data priors directly from examples \cite{arridge2019solving,bubba-book}. Both approaches, however, exhibit critical limitations. While classical regularization often fails to capture the rich complexity and high-dimensional features of real-world data distributions, data-driven strategies inherently risk introducing `hallucinations', producing outputs that appear visually plausible but remain physically incorrect.

Among generative models, diffusion models currently represent the state-of-the-art for learning complex data distributions and have been successfully employed as expressive implicit priors \cite{ho2020denoising, song2020score}.  
For inverse problems, the appeal of diffusion posterior sampling (DPS) \cite{DPS24}  extends far beyond its exceptional generative accuracy. The true breakthrough of this methodology lies in its ability to integrate the physical forward model of the measurement operator directly into the solution generation process. In severely ill-posed systems, where multiple distinct configurations can explain the exact same observed data, DPS stabilizes inference by enforcing physical constraints.
This continuous guidance prevents artificial artifacts, yielding solutions that are both inherently stable and strictly faithful to the underlying physics.

This intrinsic stability is coupled with two profound operational advantages. First, DPS can encode rich, highly non-Gaussian prior information, capturing intricate geometries and textures that elude traditional mathematical frameworks. Second, it offers flexibility through zero-shot generalization. Because the generative model learns the underlying data manifold independently of the measurement operator, there is no need to train a fully supervised, end-to-end reconstruction network for each specific task. The exact same pre-trained model can be immediately deployed for inpainting, deblurring, or super-resolution simply by swapping the physical forward model during inference, effectively uniting statistical power with physical integrity.

In this work, we are interested in exploiting diffusion models to solve nonlinear inverse problems governed by partial differential equations (PDEs), with a focus on electrical impedance tomography (EIT) \cite{cheney-isaacson-newell-1999,borcea-2002}. Beyond the inherent nonlinearity and severe ill-posedness of EIT, a primary challenge is that the physical parameters of interest, such as the electrical conductivity, are naturally defined on continuous domains discretized via unstructured triangular meshes or graphs. This structural peculiarity renders standard pixel-based convolutional methods suboptimal.
In particular, mapping finite-element quantities from an unstructured mesh to a regular image grid by interpolations may introduce geometric inaccuracies, loss of mesh-level information, and artifacts that are not consistent with the numerical discretization used by the forward PDE solver.

Despite the contemporary success of DPS methods, a general framework that operates natively on unstructured finite element (FE) meshes for nonlinear PDE-constrained ill-posed inverse problems, such as EIT, is missing from the literature. This is precisely the gap addressed in this work: we combine an unconditional graph-based diffusion prior, learned directly on the FE computational mesh, with a physics-driven posterior guidance step based on the nonlinear forward model.
Our starting point DPS \cite{DPS24}. While highly successful at sampling from the posterior distribution by incorporating likelihood scores into the reverse-time process, the method was originally designed to operate in a grid-based, pixel setting.  
By employing multi-scale graph neural networkss (GNNs), we model the denoising process directly on the triangular FE mesh used for domain discretization.

Furthermore, we extend the standard DPS formulation by introducing an explicit regularization term. This extension is crucial for highly ill-posed, underdetermined nonlinear inverse problems like EIT, where incomplete data often cause distinct conductivity distributions to produce nearly identical boundary measurements.
From a unified Bayesian perspective, this explicit regularizer acts as an additional prior that complements the learned implicit diffusion prior. This regularized diffusion posterior sampling (RDPS) effectively suppresses spurious oscillations and guides the sampling process toward physically meaningful reconstructions when the likelihood provides insufficient information.

The main contributions of this paper are summarized as follows.
\begin{enumerate}
    \item We develop an unconditional diffusion model operating directly on a 2D triangular mesh to capture the prior distribution of PDE parameters over the solution space. This allows the learned prior to be expressed on the same geometric representation used by the finite-element forward solver.
    \item We introduce RDPS, a regularized extension of the diffusion posterior sampling (DPS) framework for solving ill-posed nonlinear inverse problems on graph-structured domains. 
    The new sampling algorithm extends the DPS framework with explicit regularization terms (such as total variation and Tikhonov).
    \item We extensively validate the framework on the inverse problem of EIT with numerical experiments on both synthetic and real measurements. Our results demonstrate that the proposed RDPS is highly robust to additive Gaussian and Laplacian noise, generalizes well to out-of-distribution shape geometries, and significantly outperforms current state-of-the-art Bayesian and deep learning baselines (including RTO-MH \cite{bardsley2015randomize}, GPnP-BM3D \cite{bouman2023generative}, and DP-SGS \cite{ling2025split}). 
\end{enumerate}

The remainder of this paper is organized as follows. Section \ref{sec:related} reviews related work. Section \ref{sec:diffusion} provides a brief introduction to score-based diffusion models, including a description of unconditional and conditional sampling schemes. Section \ref{sec:proposed} details our proposed regularized conditional sampling method. In Section~\ref{sec:graph_network}, we describe the architecture of the graph-based neural network used to approximate the score function. In Section \ref{sec:eit}, we describe the mathematical model of EIT. Finally, in Section \ref{sec:experiments} we present the numerical experiments with synthetic and real data. Section \ref{sec:concl} will draw limits and conclusions.

\section{Related Work}
\label{sec:related}

\paragraph{Deep Learning for Inverse Problems.}
The integration of deep learning into inverse problems has rapidly evolved from naive end-to-end mapping to structurally informed architectures. Methods such as unrolled networks \cite{monga2021algorithm} and Plug-and-Play (PnP) priors \cite{venkatakrishnan2013plug} explicitly separate the forward physics from the learned regularization, typically using a pretrained denoiser to impose structural priors. We refer to \cite{hertrich2025learning} and to the references therein for a comparison of learned regularization approaches for inverse problems. Recent advances have extended PnP frameworks to generative models, such as the GPnP-BM3D algorithm \cite{bouman2023generative}, which leverages generative prior sampling but generally assumes grid-structured data.

\paragraph{Diffusion Models for Inverse Problems.}
Diffusion models \cite{ho2020denoising} have recently become the standard for high-fidelity generative modeling. Their application to inverse problems typically involves conditional sampling techniques. Score-based generative models \cite{song2020score} enable Bayesian posterior sampling by combining the learned score of the prior with the gradient of the log-likelihood. The diffusion posterior sampling (DPS) method \cite{DPS24} provides a robust approximation for this conditional sampling, even for nonlinear forward operators. However, standard DPS applications are overwhelmingly image-centric and rely on regular 2D/3D grids. For a review on other generative models approaches for inverse problems, including generative adversarial networks and variational autoencoders, the reader is referred to \cite{duff-etal-2024}.

\paragraph{Machine Learning for Electrical Impedance Tomography.}
EIT is notoriously challenging due to its severe nonlinearity and ill-posedness. Classical approaches rely on variational methods like regularized Gauss-Newton (RGN-TV) \cite{borsic2009vivo}, and on Bayesian methods like the Randomize-Then-Optimize (RTO-MH) proposal in \cite{bardsley2015randomize}. Recent data-driven approaches have sought to improve resolution and speed. While end-to-end networks, acting as black boxes without knowledge of the underlying physics, fail to achieve physically plausible solutions, hybrid methods for stabilizing the ill-conditioned EIT inverse problem have been proposed as variational networks in \cite{RGN2022} and PnP in \cite{Col2023DeepplugandplayPG}. However, these approaches typically require supervised training. 
Notably, the split Gibbs diffusion posterior sampling (DP-SGS) \cite{ling2025split} approach applies diffusion priors to EIT, but requires complex splitting schemes. Our work differs by directly embedding the unsupervised diffusion process into the native finite-element mesh structure of the EIT domain providing a numerically stable solution to the challenging EIT inverse problem. For reviews on machine learning approaches to EIT, see \cite{TanyuNingHauptmannJinMaass+2025+437+470,denker2025deep}.

\paragraph{Graph Neural Networks for PDEs.}
Graph neural networks (GNNs) have become the key tool to respect the continuous, often irregular geometries of physical domains governed by PDEs \cite{pfaff2021learning,eliasof-etal-2021}. GNNs operate naturally on unstructured meshes, making them ideal for physics-based learning \cite{Lazzaro2025APG}. For example,  \cite{Lino22,LINO2025} proposed multi-scale, U-Net-inspired GNNs for fluid dynamics. Our network architecture adapts this graph-based paradigm, allowing the diffusion prior to be learned accurately on the same triangular finite-element mesh used by the forward PDE solver, thereby eliminating grid-interpolation artifacts. Other works focusing on using GNNs for inverse problems in PDEs, but not in the context of diffusion models, include \cite{qzhao2022graphpde,lauga2025graph}. We explicitly note that GNNs have been used to solve graph inverse problems, beyond the PDE context \cite{eliasof2025learning}.

\section{Background: Score-Based Diffusion Models}\label{sec:diffusion}

Diffusion models decompose the generative modeling task into two continuous Markov processes. The \textit{forward process} gradually corrupts the data with Gaussian noise until it reaches a tractable prior distribution, while the \textit{reverse process} learns to iteratively invert this corruption to generate new samples. This section briefly reviews the standard unconditional diffusion framework, which forms the foundation for the conditional sampling approach, discussed at the end of the section.

\subsection{Forward Diffusion Process}
\label{sec:fd}

Let $x_0 \in \mathbb{R}^N$ denote a clean data sample drawn from an unknown data distribution $p_{\mathrm{data}}$. The forward diffusion process gradually corrupts $x_0$ by adding Gaussian noise over a sequence of discrete time steps $t = 1,\dots,T$. Specifically, the transition kernel is defined recursively as:
\begin{equation}\notag
    q(x_t \mid x_{t-1}) = \mathcal{N}\!\left(x_t; \sqrt{1-\beta_t}\,x_{t-1}, \beta_t I\right),
\end{equation}
where $\{\beta_t\}_{t=1}^T$ is a predefined variance schedule with $\beta_t \in (0,1)$, typically chosen to be monotonically increasing. 
By the reproductive property of Gaussian distributions, the forward process admits a closed-form marginal distribution given $x_0$:
\begin{equation} \label{eq:forward_diffusion_closed}
    x_t = \sqrt{\bar\alpha_t}\,x_0 + \sqrt{1-\bar\alpha_t}\,\varepsilon, \qquad \varepsilon \sim \mathcal{N}(0,I),
\end{equation}
where $\alpha_t = 1 - \beta_t$ and $\bar\alpha_t = \prod_{s=1}^t \alpha_s$. Equation \eqref{eq:forward_diffusion_closed} shows that $x_t$ can be interpreted as a noisy observation of the scaled clean data $\sqrt{\bar{\alpha}_t} x_0$, corrupted by isotropic Gaussian noise with variance $(1-\bar{\alpha}_t)I$. This structure allows us to leverage Tweedie's formula to estimate the original signal. Recall that for a latent variable $\mu$ and an observation $z \sim \mathcal{N}(\mu, \Sigma)$, Tweedie's formula provides the posterior mean $\mathbb{E}[\mu \mid z] = z + \Sigma \nabla_z \log p(z)$, with $p(z)$ the marginal density of $z$. Applying this to the diffusion forward process yields:
$$\mathbb{E}[\sqrt{\bar{\alpha}_t} x_0 \mid x_t] = x_t + (1-\bar{\alpha}_t)\nabla_{x_t}\log p_t(x_t).$$
Dividing by $\sqrt{\bar{\alpha}_t}$, we obtain the Bayes optimal (minimum mean squared error) estimate of the original clean data $x_0$ given the noisy observation $x_t$:
\begin{equation}
    \label{eq:twed_diff}
    \mathbb{E}[x_0 \mid x_t] = \frac{1}{\sqrt{\bar{\alpha}_t}} \Big( x_t + (1-\bar{\alpha}_t)\nabla_{x_t}\log p_t(x_t) \Big),
\end{equation}
where $\nabla_{x_t} \log p_t(x_t)$ is the score function of the distribution $p_t(x_t)$, representing the direction of steepest ascent toward higher data density.

\subsection{Reverse  Process}
\label{sec:rp}

Rather than estimating the score function $\nabla_{x_t}\log p_t(x_t)$ directly, diffusion models typically adopt a noise-prediction parameterization. A neural network $\varepsilon_\theta(x_t,t)$ is trained to predict the forward noise $\varepsilon$ injected at step $t$. The network parameters $\theta$ are optimized by minimizing a mean squared error objective:
\begin{equation}
    \label{eq:ddpm_loss}
    \mathcal{L}(\theta) = \mathbb{E}_{x_0,t,\varepsilon} \Big[ \big\| \varepsilon - \varepsilon_\theta(x_t,t) \big\|_2^2 \Big].
\end{equation}
This loss encourages $\varepsilon_\theta(x_t,t)$ to approximate the conditional expectation $\mathbb{E}[\varepsilon \mid x_t,t]$. Consequently, minimizing the objective \eqref{eq:ddpm_loss} yields an implicit estimator for the score function:
\begin{equation}
    \label{eq:scoreest}
    \nabla_{x_t}\log p_t(x_t) \approx -\frac{1}{\sqrt{1-\bar\alpha_t}}\, \varepsilon_\theta(x_t,t).
\end{equation}

Using Tweedie's formula, one can estimate the clean signal $x_0$ directly from the noisy state $x_t$ via the conditional expectation $x_0^{\,\star}(x_t) := 
\mathbb{E}[x_0 \mid x_t]$, defined in \eqref{eq:twed_diff}. By substituting the network's score approximation in \eqref{eq:twed_diff}, we obtain the computable Tweedie estimator:
\begin{equation}
\hat x_0(x_t)
=
\frac{1}{\sqrt{\bar\alpha_t}}
\left(
x_t - \sqrt{1-\bar\alpha_t}\,\varepsilon_\theta(x_t,t)
\right).
\label{eq:tweedie_estimator}
\end{equation}
Once trained, the neural network $\varepsilon_\theta$ predicts the noise added at step $t$, and $\hat{x}_0$ is the corresponding estimate of the clean image.

 As detailed in the following sections, this learned score function is the core component of the sampling process, enabling both unconditional data generation and posterior inference.

\subsection{Unconditional Sampling}
The parameterized score function is employed to solve the reverse stochastic differential equation or its discrete counterparts, such as denoising diffusion probabilistic models
 (DDPM) \cite{ho2020denoising} and denoising diffusion implicit models (DDIM) \cite{song2021denoising}. This mechanism transforms Gaussian noise $x_T \sim \mathcal{N}(0,I)$ into high-fidelity samples $x_0 \sim p_{\mathrm{data}}$.

In DDPM the reverse step is stochastic. Starting from the forward transition $q(x_t \mid x_{t-1})$, one can show that the mean of the reverse distribution $q(x_{t-1} \mid x_t, x_0)$ is:
\begin{equation}
    \tilde{\mu}_t(x_t, x_0)
    = \frac{\sqrt{\bar{\alpha}_{t-1}}\,\beta_t}{1-\bar{\alpha}_t}\,x_0
    + \frac{\sqrt{\alpha_t}(1-\bar{\alpha}_{t-1})}{1-\bar{\alpha}_t}\,x_t,
    \label{eq:ddpm_mean}
\end{equation}
where $\beta_t = 1 - \alpha_t$ and $\alpha_t = \bar{\alpha}_t / \bar{\alpha}_{t-1}$.
The full reverse step is:
\begin{equation}
    x_{t-1} = \tilde{\mu}_t\!\big(x_t,\,\hat{x}_0(x_t)\big) + \sqrt{\tilde{\beta}_t}\,z,
    \qquad z \sim \mathcal{N}(0,I),
    \label{eq:ddpm_step}
\end{equation}
with posterior variance $\tilde{\beta}_t = \dfrac{1-\bar{\alpha}_{t-1}}{1-\bar{\alpha}_t}\beta_t$.
By combining \eqref{eq:ddpm_mean} and \eqref{eq:ddpm_step} we obtain
\begin{equation}
    \boxed{x_{t-1}^{\mathrm{DDPM}}
    = \frac{\sqrt{\bar{\alpha}_{t-1}}\,\beta_t}{1-\bar{\alpha}_t}\,\hat{x}_0(x_t)
    + \frac{\sqrt{\alpha_t}(1-\bar{\alpha}_{t-1})}{1-\bar{\alpha}_t}\,x_t
    + \sqrt{\tilde{\beta}_t}\,z.}
    \label{eq:ddpm_final}
\end{equation}
This is a linear combination of $\hat{x}_0$ and $x_t$, plus stochastic noise.
The linearity in $\hat{x}_0$ is precisely what allows for the additive propagation of the guidance correction.

DDIM removes stochasticity and constructs a deterministic reverse process consistent with the same forward dynamics. The step reads:
\begin{equation}
    \boxed{x_{t-1}^{\mathrm{DDIM}}
    = \sqrt{\bar{\alpha}_{t-1}}\,\hat{x}_0(x_t)
    + \sqrt{1-\bar{\alpha}_{t-1}}\,
      \frac{x_t - \sqrt{\bar{\alpha}_t}\,\hat{x}_0(x_t)}{\sqrt{1-\bar{\alpha}_t}}.}
    \label{eq:ddim_final}
\end{equation}
In both cases the reverse step can be written compactly as:
\begin{equation}
    x_{t-1} = A_t \cdot \hat{x}_0(x_t) + B_t \cdot x_t + C_t,
    \label{eq:linear}
\end{equation}
where the coefficients $A_t$, $B_t$, and $C_t$ depend only on the noise schedule $\bar{\alpha}_t$
(and on $z$ in the DDPM case).
The explicit values are:

\medskip
\begin{center}
\begin{tabular}{lccc}
\toprule
\textbf{Sampler} & $A_t$ & $B_t$ & $C_t$ \\
\midrule
DDPM &
$\dfrac{\sqrt{\bar{\alpha}_{t-1}}\,\beta_t}{1-\bar{\alpha}_t}$ &
$\dfrac{\sqrt{\alpha_t}(1-\bar{\alpha}_{t-1})}{1-\bar{\alpha}_t}$ &
$\sqrt{\tilde{\beta}_t}\,z$ \\[10pt]
DDIM &
$\sqrt{\bar{\alpha}_{t-1}} - \sqrt{1-\bar{\alpha}_{t-1}}\,\dfrac{\sqrt{\bar{\alpha}_t}}{\sqrt{1-\bar{\alpha}_t}}$ &
$\dfrac{\sqrt{1-\bar{\alpha}_{t-1}}}{\sqrt{1-\bar{\alpha}_t}}$ &
$0$ \\
\bottomrule
\end{tabular}
\end{center}

\subsection{Diffusion Posterior Sampling}
\label{sec:dps}

While standard diffusion models generate samples from the unconditional prior $p_{\mathrm{data}}$, solving inverse problems requires sampling from the posterior distribution $p(x_0 \mid y)$, where 
\begin{equation}
y = \mathcal{F}(x_0) + \eta
\label{eq:degmod}
\end{equation}
represents the observed measurements corrupted by noise $\eta$. Here, $\mathcal{F}$ denotes the forward map, modeling the measurement process. Diffusion posterior sampling (DPS) adapts the generative process to perform this conditional sampling directly within the diffusion space, \cite{DPS24}.

A key observation underlying DPS is that the posterior distribution at each diffusion
time admits a score decomposition into a prior term and a likelihood term.
The prior contribution is naturally provided by the diffusion model through the
learned score, while the likelihood contribution can be incorporated via a
gradient-based correction.
This results in a principled and flexible posterior sampling method that retains the
generative capabilities of diffusion models while enforcing consistency with the
observed data.

Rather than sampling in the space of the clean signal $x_0$, DPS considers the posterior distribution of the noisy variable $x_t$ conditioned on the observation $y$, denoted as $p_t(x_t \mid y)$. Applying Bayes' theorem, this diffusion-space posterior admits the factorization:
\begin{equation}\label{eq:bayes-pt}
p_t(x_t \mid y) \propto p_t(x_t)\,p_t(y \mid x_t),
\end{equation}
where $p_t(x_t)$ is the prior distribution at time $t$ induced by the diffusion model, and $p_t(y \mid x_t)$ represents the likelihood expressed in the diffusion space.

Since in score-based diffusion models sampling is driven by the gradients of the log densities, taking the logarithm and differentiating with respect to $x_t$ yields the posterior score decomposition:
\begin{equation}
\label{eq:posterior_score_dps}
\nabla_{x_t} \log p_t(x_t \mid y)
=
\nabla_{x_t} \log p_t(x_t)
+
\nabla_{x_t} \log p_t(y \mid x_t).
\end{equation}
In this decomposition, the first term on the right-side is the unconditional prior score, which is estimated by the pre-trained neural network as $-\frac{1}{\sqrt{1-\bar\alpha_t}}\,\varepsilon_\theta(x_t,t)$ and encodes the statistics of the data distribution. 
The central challenge in evaluating \eqref{eq:posterior_score_dps} is the second term: the measurement model \eqref{eq:degmod} is defined on the clean signal $x_0$, making the exact marginal likelihood $p_t(y \mid x_t)$ intractable. DPS elegantly resolves this by leveraging the properties of the forward diffusion process.

The Tweedie estimate \eqref{eq:tweedie_estimator}  allows us to approximate the intractable diffusion-space likelihood in \eqref{eq:posterior_score_dps} with the clean-space likelihood evaluated at $\hat{x}_0(x_t)$:
\begin{equation}\label{eq:plugin}
p_t(y \mid x_t) \approx p\!\left(y \mid \hat{x}_0(x_t)\right).
\end{equation}
Consequently, the required likelihood gradient in \eqref{eq:posterior_score_dps} can be approximated via backpropagation through the forward operator and the neural network:
\begin{equation}\label{eq:plugin1}
\nabla_{x_t}\log p_t(y \mid x_t) \approx \nabla_{x_t}\log p\!\left(y \mid \hat{x}_0(x_t)\right).
\end{equation}
By combining the learned diffusion prior with this tractable likelihood gradient, DPS effectively guides the reverse diffusion trajectory to ensure consistency with the measured data $y$.

\section{Regularized Diffusion Posterior Sampling (RDPS)}\label{sec:proposed}

Given the severe ill-posedness of many inverse problems, such as EIT, 
standard data-fidelity constraints are often insufficient to yield reliable and physically meaningful reconstructions. 
In these regimes, the likelihood provides limited information, and explicit regularization is required to suppress spurious oscillations and artifacts.

Within the diffusion posterior sampling framework discussed in section~\ref{sec:dps}, we propose augmenting the implicit, data-driven diffusion prior with an explicit, physics-informed regularizer $R(x)$. From a Bayesian perspective, this corresponds to defining a ``guided" target posterior $\tilde{p}(x \mid y)$ that is proportional to the standard posterior multiplied by a reference distribution $p_R(x) \propto \exp(-\lambda R(x))$ with $\lambda >0$:
\begin{equation}
	\label{eq:comp}
	\tilde{p}(x \mid y) \propto p(x \mid y) \cdot p_R(x) = \frac{p(y \mid x)\,p(x)\,p_R(x)}{p(y)}.
\end{equation}

Because diffusion models operate over a sequence of noisy states, we must translate this regularized target distribution into the diffusion space. We define the regularized diffusion-space posterior for the noisy variable $x_t$ conditioned on the observation $y$, i.e.
\begin{equation}	
	\label{eq:preg}
	p_t(x_t \mid y) \propto p_t(y \mid x_t)\,p_t(x_t)\,p_R(x_t),
\end{equation}
which  differs from \eqref{eq:bayes-pt} because of the addition of the prior $p_R(x_t)$.
Taking the logarithm and differentiating with respect to $x_t$ yields the regularized posterior score decomposition:
\begin{equation}
	\label{eq:posterior_scoredps1}
	\nabla_{x_t} \log p_t(x_t \mid y)
	=
	\nabla_{x_t} \log p_t(x_t)
	+
	\nabla_{x_t} \log p_t(y \mid x_t)
	+
	\nabla_{x_t} \log p_R(x_t).
\end{equation}
To construct a practical sampling algorithm, we approximate the three terms on the right-hand side of \eqref{eq:posterior_scoredps1} using the pre-trained score network $\varepsilon_\theta(x_t,t)$ and the Tweedie estimator $\hat{x}_0(x_t)$ introduced in \eqref{eq:tweedie_estimator}.

\noindent \textbf{Implicit Diffusion Prior.} The unconditional score is directly approximated by the neural network:
	\begin{equation}\notag
		\nabla_{x_t}\log p_t(x_t) \approx -\frac{1}{\sqrt{1-\bar\alpha_t}}\,\varepsilon_\theta(x_t,t).
	\end{equation}
	
\noindent \textbf{Data Likelihood.} Assuming the Gaussian degradation model $y = \mathcal{F}(x) + \eta$ with $\eta \sim \mathcal{N}(0,\sigma_y^2 I)$, we approximate the intractable diffusion-space likelihood $p_t(y \mid x_t)$ by evaluating the clean-space likelihood  ($ p(y \mid x)  \propto \exp \left(- \frac{1}{2\sigma_y^2}\|y-\mathcal{F}(x)\|_2^2 \right) $ at the Tweedie estimate $\hat{x}_0(x_t)$ given in \eqref{eq:tweedie_estimator}, eq. \eqref{eq:plugin1} reads as:
	 \begin{equation}\label{eq:plugin2}
	 	\nabla_{x_t}\log p_t(y \mid x_t) \approx - \nabla_{x_t} \left( \frac{1}{2\sigma_y^2}\|y-\mathcal{F}(\hat{x}_0(x_t))\|_2^2 \right).
	 \end{equation}
	
\noindent \textbf{Explicit Regularization.} Similarly, we approximate the gradient of the explicit prior evaluated on the noisy state by evaluating it on the denoised estimate:
	\begin{equation}
		\label{eq:reg}
		\nabla_{x_t}\log p_R(x_t) \approx -\lambda \nabla_{x_t} R(\hat{x}_0(x_t)).
	\end{equation}

To integrate the regularized score \eqref{eq:posterior_scoredps1} into a generative diffusion step, we formally extend the standard unconditioned  Tweedie formulation for $\mathbb{E}[x_0 \mid x_t] $ utilized in a generic sampling  to account for the explicit reference prior and the measurement model.

\begin{proposition}[Regularized Conditional Posterior Mean]
\label{pro:cpm}
Assuming the forward diffusion transition $p_t(x_t \mid x_0) = \mathcal{N}(x_t \mid \sqrt{\bar{\alpha}_t} x_0, (1-\bar{\alpha}_t)I)$, the conditional posterior mean for the regularized distribution $p_t(x_t \mid y)$ defined in \eqref{eq:preg} is given by:
\begin{equation}
\label{eq:Ec}	
\mathbb{E}[x_0 \mid x_t, y] = \mathbb{E}[x_0 \mid x_t] + \frac{1-\bar{\alpha}_t}{\sqrt{\bar{\alpha}_t}} \Big(\nabla_{x_t} \log p_t(y \mid x_t) + \nabla_{x_t} \log p_R(x_t)\Big).
\end{equation}	
\end{proposition}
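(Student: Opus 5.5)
The plan is to apply Tweedie's formula, recalled in Section~\ref{sec:fd}, to the regularized diffusion-space posterior instead of the unconditional marginal. The forward model \eqref{eq:degmod} depends only on $x_0$, and the regularized construction \eqref{eq:preg} alters only the marginal law of $x_t$. Consequently, conditionally on $y$ (and on the reference prior), the noisy variable still satisfies $x_t = \sqrt{\bar\alpha_t}x_0 + \sqrt{1-\bar\alpha_t}\,\varepsilon$ with $\varepsilon\sim\mathcal N(0,I)$ independent of $x_0$. In other words, the Gaussian transition $p_t(x_t\mid x_0)$ is unchanged by conditioning. Tweedie's formula for $z=x_t$, $\mu=\sqrt{\bar\alpha_t}x_0$, $\Sigma=(1-\bar\alpha_t)I$ therefore holds with the marginal density of $z$ replaced by the conditional density $p_t(x_t\mid y)$ of \eqref{eq:preg}. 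Dividing by $\sqrt{\bar\alpha_t}$, exactly as in the derivation of \eqref{eq:twed_diff}, gives
\begin{equation}\notag
\mathbb{E}[x_0\mid x_t,y] = \frac{1}{\sqrt{\bar\alpha_t}}\Big(x_t + (1-\bar\alpha_t)\nabla_{x_t}\log p_t(x_t\mid y)\Big).
\end{equation}

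Next I would insert the regularized score decomposition \eqref{eq:posterior_scoredps1}. Since $\nabla_{x_t}$ annihilates the normalizing constant of \eqref{eq:preg}, the score splits as $\nabla\log p_t(x_t) + \nabla\log p_t(y\mid x_t) + \nabla\log p_R(x_t)$. I would then regroup the terms. The combination $\frac{1}{\sqrt{\bar\alpha_t}}\big(x_t + (1-\bar\alpha_t)\nabla\log p_t(x_t)\big)$ is precisely $\mathbb{E}[x_0\mid x_t]$ by \eqref{eq:twed_diff}. What remains is $\frac{1-\bar\alpha_t}{\sqrt{\bar\alpha_t}}$ multiplied by the likelihood and reference-prior scores, which is exactly \eqref{eq:Ec}.

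The main obstacle is justifying the first step rigorously: I must show that conditioning on $y$ and tilting by $p_R$ do not disturb the Gaussian relation between $x_t$ and $x_0$. The tilt $p_R(x_t)$ is a function of $x_t$, not of $x_0$, so it does not factor naturally through $x_0$. I would address this by taking \eqref{eq:preg} as the definition of the conditional density of $x_t$ and working directly with the Tweedie identity $\nabla_z p(z) = \int \Sigma^{-1}(\mu - z)\,p(z\mid\mu)\,p(\mu)\,d\mu$. This identity holds whenever $p(z\mid\mu)$ is the stated Gaussian and the joint law is formed with it; the Gaussian form of the transition is precisely what the proposition assumes. I would therefore read the statement as holding under the modeling assumption that, in the regularized joint model, $p_t(x_t\mid x_0)$ keeps its Gaussian form. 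Under that assumption the proof reduces to the two algebraic steps above. I would also add a remark noting that the identity is exact under this assumption, and that approximations enter only later through \eqref{eq:plugin2} and \eqref{eq:reg}.
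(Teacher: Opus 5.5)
Your route is the paper's. The appendix proves a conditional Tweedie formula (Lemma 1) from $p_t(x_t\mid y)=\int p_t(x_t\mid x_0)\,p(x_0\mid y)\,dx_0$ and $p_t(x_t\mid x_0,y)=p_t(x_t\mid x_0)$, solves it for $\mathbb{E}[x_0\mid x_t,y]$, inserts \eqref{eq:posterior_scoredps1} and recognizes \eqref{eq:twed_diff}. That is exactly your algebra. However, the step you single out is a genuine gap, and the paper passes over the same gap silently.

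Conditioning on $y$ alone is harmless, as you say: Lemma 1 is proved for the joint $p(x_0)p(y\mid x_0)p_t(x_t\mid x_0)$, whose $x_t$-marginal given $y$ is $\propto p_t(x_t)\,p_t(y\mid x_t)$. The extra factor $p_R(x_t)$ in \eqref{eq:preg} is not harmless, and your modeling assumption cannot absorb it.
\begin{itemize}
\item Any joint law in which $x_t\mid x_0\sim\mathcal N(\sqrt{\bar\alpha_t}x_0,(1-\bar\alpha_t)I)$ has a Gaussian convolution as its $x_t$-marginal, hence a real-analytic one.
\item For the TV prior \eqref{eq:regTV}, the product $p_t(x_t\mid y)\,p_R(x_t)$ has kinks across the hyperplanes $\{x_i=x_j\}$, since the first factor is smooth and strictly positive.
\item So for TV your assumption is void, not merely unverified.
\end{itemize}
Conversely, the joint that literally realizes \eqref{eq:preg}, namely $p(x_0\mid y)\,p_t(x_t\mid x_0)\,p_R(x_t)$, leaves $p(x_0\mid x_t,y)$ unchanged, because a tilt depending only on $x_t$ cancels in the conditional. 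Its posterior mean therefore has no $p_R$ term at all. This is the precise consequence of your own remark that $p_R(x_t)$ does not factor through $x_0$.

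A consistent version tilts $x_0$ instead, as in \eqref{eq:comp}. Take $\tilde p(x_0\mid y)\propto p(y\mid x_0)\,p(x_0)\,p_R(x_0)$ with the Gaussian transition; Lemma 1 then applies verbatim. Moreover $\tilde p_t(x_t\mid y)\propto p_t(x_t)\,\mathbb{E}[p(y\mid x_0)\,p_R(x_0)\mid x_t]$, with the expectation taken under the unregularized $p(x_0\mid x_t)$. Your two algebraic steps then give the exact identity
\[
\mathbb{E}_{\tilde p}[x_0\mid x_t,y]=\mathbb{E}[x_0\mid x_t]+\frac{1-\bar\alpha_t}{\sqrt{\bar\alpha_t}}\,\nabla_{x_t}\log\mathbb{E}\bigl[p(y\mid x_0)\,p_R(x_0)\mid x_t\bigr].
\]
Equation \eqref{eq:Ec} follows from this only after replacing the conditional expectation by $p_t(y\mid x_t)\,p_R(x_t)$, an approximation of the same kind as \eqref{eq:plugin}.

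So your closing remark should be reversed. Either \eqref{eq:Ec} is taken as the definition of the Tweedie-type denoiser of the density \eqref{eq:preg}, in which case your algebra is the whole proof. Or an approximation already enters at the level of \eqref{eq:preg}, before \eqref{eq:plugin2} and \eqref{eq:reg}.
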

\textit{Proof.} See Appendix~\ref{appendix}.\smallskip

We now describe how Proposition~\ref{pro:cpm} can be used to develop a new conditional sampling scheme.

For notational convenience, define the \emph{guidance correction}:
\begin{equation}
    \Delta_t
    \;:=\;
    \frac{1-\bar{\alpha}_t}{\sqrt{\bar{\alpha}_t}}
    \Big(
        \nabla_{x_t}\log p_t(y \mid x_t)
        + \nabla_{x_t}\log p_R(x_t)
    \Big),
    \label{eq:delta}
\end{equation}
so that Proposition~\ref{pro:cpm} reads compactly as
\begin{equation}
\label{eq:rcpm}
\mathbb{E}[x_0 \mid x_t, y] = \hat{x}_0(x_t) + \Delta_t.
\end{equation}

\medskip

In the \emph{conditional} case, the sampler uses \eqref{eq:rcpm}
in place of the unconditional $\hat{x}_0(x_t)$.
By the linearity \eqref{eq:linear}:
\begin{align}
    x_{t-1}^{\text{cond}}
    &= A_t \cdot \big(\hat{x}_0(x_t) + \Delta_t\big) + B_t \cdot x_t + C_t \notag\\
    &= \underbrace{A_t \cdot \hat{x}_0(x_t) + B_t \cdot x_t + C_t}_{x_{t-1}^{\mathrm{DD}}}
       + A_t \cdot \Delta_t.
    \label{eq:step1}
\end{align}
Hence:
\begin{equation}
    x_{t-1}^{\text{cond}} = x_{t-1}^{\mathrm{DD}} + A_t \cdot \Delta_t.
    \label{eq:step1b}
\end{equation}

Substituting the definition \eqref{eq:delta} of $\Delta_t$ into \eqref{eq:step1b}:
\begin{equation}
    x_{t-1}^{\text{cond}}
    = x_{t-1}^{\mathrm{DD}}
    + \underbrace{A_t \cdot \frac{1-\bar{\alpha}_t}{\sqrt{\bar{\alpha}_t}}}_{\displaystyle=:\,\eta_t}
      \Big(
          \nabla_{x_t}\log p_t(y \mid x_t)
          + \nabla_{x_t}\log p_R(x_t)
      \Big).
    \label{eq:step2}
\end{equation}
The scalar $\eta_t := A_t \cdot \frac{1-\bar{\alpha}_t}{\sqrt{\bar{\alpha}_t}} > 0$
depends only on the noise schedule and is treated as a (tunable) step size.

Both the likelihood score in \eqref{eq:plugin2} and the regularization score in \eqref{eq:reg} are evaluated at the
Tweedie estimate $\hat{x}_0(x_t)$ rather than at the intractable diffusion-space
distribution.

Inserting \eqref{eq:plugin2} and \eqref{eq:reg} into \eqref{eq:step2},
and noting that both gradient approximations carry a negative sign:
\begin{align}
    \nabla_{x_t}\log p_t(y \mid x_t) + \nabla_{x_t}\log p_R(x_t)
    &\approx
    -\,\nabla_{x_t}\!\left[
        \frac{1}{2\sigma_y^2}\|y-\mathcal{F}(\hat{x}_0(x_t))\|_2^2
        + \lambda\,R(\hat{x}_0(x_t))
    \right].
    \label{eq:combined_grad}
\end{align}
Substituting \eqref{eq:combined_grad} into \eqref{eq:step2} yields the final updated formula.

Let DD denote a general unconditional sampling step (e.g., DDPM or DDIM). The proposed \textbf{regularized diffusion posterior sampling (RDPS)} reads as:
\begin{align}
    x_{t-1}^{\mathrm{DD}} &= \texttt{unconditional\_step}\!\left(x_t,\,\varepsilon_\theta(x_t,t)\right), \label{eq:rdps} \\
    x_{t-1} &= x_{t-1}^{\mathrm{DD}} - \eta_t\,\nabla_{x_t}\!\left[ \dfrac{1}{2\sigma_y^2} \left\|y - \mathcal{F}\big(\hat{x}_0(x_t)\big)\right\|_2^2 + \lambda\,R\big(\hat{x}_0(x_t)\big) \right]. 
    \label{eq:rdps_2}
\end{align}

The conditional sampling process can be decoupled into two distinct phases at each timestep: a standard unconditional generative step, followed by a gradient-based ``guidance" step that nudges the sample toward the measurement manifold and the regularized subspace. 

This formulation essentially embeds a single step of gradient descent on a regularized variational objective directly into the reverse diffusion trajectory.

\subsection{Algorithm DDIM-RDPS}

Algorithm \ref{algorithm_reg} summarizes the complete procedure, incorporating the DDIM deterministic sampler. 

\begin{algorithm}[h]
    \caption{DDIM-RDPS (Regularized Diffusion Posterior Sampling)}
    \begin{algorithmic}[1] 
        \Require Measurements $y$, score network $\varepsilon_\theta$, forward model $\mathcal{F}$, timesteps $T$, step size $\eta \in [0,1]$, regularization parameter $\lambda$, regularizer $R(\cdot)$, positive constant $\epsilon$
        \State $x_T \sim \mathcal{N}(0, \mathbf{I})$
        \For{$t = T$ \textbf{to} $1$}
        \State \textbf{Score Estimation:} $\varepsilon_{\theta} \leftarrow \varepsilon_\theta(x_t, t)$
        \State \textbf{Tweedie's Formula:} $\hat{x}_{0} \leftarrow \frac{x_{t}-\sqrt{1-\bar{\alpha}_t}\varepsilon_{\theta}}{\sqrt{\bar{\alpha}_t}}$
        \State \textbf{Reverse Transition:} $\tilde{x}_{t-1} \leftarrow \sqrt{\bar{\alpha}_{t-1}}\hat{x}_{0} + \sqrt{1-\bar{\alpha}_{t-1}} \varepsilon_{\theta}$
        \State \textbf{Forward Solve:} \quad $y_0 \leftarrow \mathcal{F}(\hat{x}_{0})$
        \State \textbf{Adaptive step size:} $\eta_t \leftarrow \frac{\eta}{\|y - y_0\|_2 + \epsilon}$
        \State \textbf{Gradient Step:} \quad $x_{t-1} \leftarrow \tilde{x}_{t-1} - \eta_t \nabla_{x_t} \big( \|y - y_0\|_2^2 + \lambda R(\hat{x}_0) \big)$
        \State \textbf{Project (Domain Constraint):} $x_{t-1} \leftarrow \max(x_{t-1}, \epsilon)$
        \EndFor
        \State \Return $x_0^*$
    \end{algorithmic}
\label{algorithm_reg}
\end{algorithm}

\paragraph{Adaptive Step Size.} The choice of $\eta_t$ in Algorithm \ref{algorithm_reg} addresses the severe ill-conditioning of the nonlinear forward map $\mathcal{F}$. The gradient magnitude can vary drastically across the diffusion trajectory. By normalizing the step size by the current residual $\|y - y_0\|_2$, we provide a stable guidance that prevents the likelihood term from destabilizing the reverse process, particularly in the high-noise regimes characteristic of early sampling phases.

\paragraph{Positivity Projection.} Line 9 implements a projection operator $x_{t-1} \leftarrow \max(x_{t-1}, \epsilon)$. Standard diffusion models often operate on unconstrained domains (e.g., image pixels scaled to $[-1, 1]$). However, when the target parameter in specific applications  is required to be strictly positive physically, this projection ensures the intermediate samples remain within the physically valid domain required by the forward solver.

\vspace{0.2cm}

Figure \ref{fig:grafico1}(b) illustrates the conditional sampling workflow followed by Algorithm 1 which takes as input the measurement $y$ and starting from a noisy mesh $x_T$ produces after $T$ steps a sample $x_0^*$ conditioned by the given measurements and by the physical model (forward operator $\mathcal{F}$).

\begin{rem}
The reverse DDIM transition in line 5 is equivalent to equation \eqref{eq:ddim_final}. Specifically, by rearranging the Tweedie estimator in \eqref{eq:tweedie_estimator} to isolate $\varepsilon_\theta(x_t,t) = \frac{x_t - \sqrt{\bar\alpha_t}\,\hat x_0(x_t)}{\sqrt{1-\bar\alpha_t}}$, the DDIM update can be expressed directly in terms of the predicted noise $\varepsilon_\theta(x_t,t)$, namely $x_{t-1} = \sqrt{\bar\alpha_{t-1}}\,\hat x_0(x_t) + \sqrt{1-\bar\alpha_{t-1}}\,\varepsilon_\theta(x_t,t).$
 \end{rem}
\subsubsection{Explicit Regularizers}
The proposed framework is highly modular, and the explicit regularization term $R(\hat{x}_0)$ can be tailored to the specific structural priors of the inverse problem. For a given vector $x \in \R^n$, defined on the nodes of a graph 
 $G = (\mathcal{V},\mathcal{E})$  composed of nodes/vertices $\mathcal{V}$ and edges $\mathcal{E}$, we investigate three established choices:

\paragraph{(Tik) Tikhonov Regularization.} A simple zeroth-order $\ell_2$ penalty:
\begin{equation}
	R(x) = \|x\|_2^2,
\label{eq:regT}
\end{equation}
which penalizes large solution values and biases the solution towards small-norm reconstructions. In fact, in order to resemble the $L^2$ norm of the function that $x$ discretizes, this norm should take into account the structure of the graph. However, since in this work all the meshes considered will be approximately uniform, this has a minor impact and will not be considered.

\paragraph{(GTik)  Generalized Tikhonov Regularization.} 
To enforce spatial smoothness over the discretized domain, we penalize the squared $\ell_2$-norm of the discrete gradient of $x$:
\begin{equation}
R(x) = \|D x\|_2^2
\label{eq:regGT}
\end{equation}
where $D \in \mathbb{R}^{|\mathcal{E}| \times |\mathcal{V}|}$ represents the discrete gradient matrix (incidence matrix) defined on the triangular mesh connectivity, with $|\mathcal{V}|$ and $|\mathcal{E}|$ denoting the number of nodes and edges, respectively. This discrete quadratic form effectively suppresses unphysical, high-frequency oscillations across adjacent mesh elements while ensuring a computationally efficient optimization landscape.

\paragraph{(TV) Total Variation Regularization.} To preserve sharp transitions in the solution, we adopt an anisotropic total variation penalty defined over the mesh graph structure. The regularization term is expressed as the sum of directional discrete derivatives along the edges connecting each vertex to its 1-ring neighborhood:
\begin{equation}R(x) = \sum_{i \in \mathcal{V}} \sum_{j \in \mathfrak{N}(i)} |x_j - x_i|,\label{eq:regTV}
\end{equation}
where $\mathcal{V}$ is the set of mesh vertices, $\mathfrak{N}(i)$ denotes the 1-ring neighborhood of node $i$, and $x_j - x_i$ represents the discrete directional derivative along the shared edge $e_{i,j}$. Although this graph-based TV functional is non-smooth at zero, we employ automatic differentiation on a slightly smoothed approximation during optimization. This approach provides robust piecewise-constant regularization that effectively penalizes oscillations while preserving the sharp, blocky boundaries typical of EIT reconstructions.

\vspace{0.5cm}

In Figure \ref{fig:grafico1}(a) we illustrate the overall diffusion  process composed by the forward diffusion, described in Section \ref{sec:fd}, and the reverse process, detailed in Section \ref{sec:rp}.
The forward process progressively corrupts the input data $x_0$ into noise $x_T$ on the mesh, and
the reverse process is trained to predict the noise added at each step $t$ during the forward diffusion using the loss function \eqref{eq:ddpm_loss}. In the following, we describe the graph neural network that is learned to approximate the score function $\varepsilon_{\theta}(x_t, t)$.

\section{Graph-Based Unconditional Score Network}
\label{sec:graph_network}

In this section we describe the neural architecture used to approximate the score
function
\[
\varepsilon_\theta(x_t, t)
\;\approx\;
\nabla_{x_t} \log p_t(x_t),
\]
at each time $t \in [0,T]$.
To respect the continuous geometry of the physical domain without resorting to grid interpolations, we parameterize the score function using a multi-scale denoising graph network, denoted as $\text{DGN}_{\theta}$.

In this work, we focus on the case when the signal $x$ is defined on the nodes of a graph 
 $G = (\mathcal{V},\mathcal{E})$. 
 For example, the graph may correspond to an unstructured mesh, arising from the discretization of a PDE. The graph is augmented with node features $V = \{ v_i \in \mathbb{R}^{F_v} \mid i \in \mathcal{V} \}$ and edge features $E = \{ e_{i,j} \in \mathbb{R}^{F_e} \mid (i,j) \in \mathcal{E} \}$. The initial node feature set is populated by the vector $x \in \mathbb{R}^{N}$ (where $N = |\mathcal{V}|$ and we consider one feature per vertex, $F=1$).

Adapted from the DGN4CFD architecture proposed in \cite{LINO2025}, the network is designed as a hierarchical non-linear operator that predicts the noise $\epsilon_\theta^{t-1}$ (and variance interpolation $s_\theta^{t-1}$) during the reverse diffusion process:
\begin{equation}
\label{eq:dgneq}
[ \epsilon_{\theta}^{t-1}, s_{\theta}^{t-1}] \leftarrow DGN_{\theta}(G, V, E, t),
\end{equation}
for each diffusion time $t=T,\ldots,1.$
We remark that as in DDIM approaches, the variance  $ s_{\theta}$ is not used.

\subsection{Feature Embeddings $\mathcal{E}_{in}$}
Before processing the spatial data, the inputs are mapped into a high-dimensional hidden space by means of the input embedding operator 
\[
\mathcal{E}_{in}\colon \R^{|\mathcal{V}|+1}\to \R^{F_h}\times (\R^{F_h})^{|\mathcal{V}|}\times (\R^{F_h})^{|\mathcal{E}|},\qquad (x_t, t) \mapsto (t_{\mathrm{emb}},V,E),
\]
 which we now define. 
To account for the varying noise levels across the diffusion trajectory, the time step $t$ is transformed into a latent embedding $t_{\mathrm{emb}}$, as follows:
\begin{equation}
t_{\mathrm{emb}} = W_{\mathrm{proj}}^t\,\phi\bigl(W_t\,\mathrm{PE}(t)\bigr),
\label{eq:temb}
\end{equation}
where $\mathrm{PE}(\cdot)$ is the sinusoidal positional encoding, $\phi$ denotes the Scaled Exponential Linear Unit (SELU) \cite{Tallman2020}, $W_t$ projects the encoding into the embedding space, and $W_{\mathrm{proj}}^t$ maps it to the network's hidden dimension $F_h$.

The input embedding operator $\mathcal{E}_{in}$ incorporates this time embedding alongside the physical node and edge data to obtain the feature sets $V,E$:
\begin{equation}
\begin{aligned}
[v_i] &\leftarrow W_{\mathrm{proj}}^n \Bigl( \phi\bigl([W_{n}x_t^i \mid {t}_{\mathrm{emb}}]\bigr) \Bigr), \qquad &&i \in \mathcal{V}, \\
[e_{i,j}] &\leftarrow W_{\mathrm{proj}}^e \, ( |\mbox{edge}_{i,j} |\,), \qquad &&(i,j)\in\mathcal{E},
\end{aligned}
\end{equation}
where $W_{n} \in \mathbb{R}^{ F_h \times F } $ is the node encoder, $F_h$ is the hidden feature dimension, and $W_{\mathrm{proj}}^n \in \mathbb{R}^{ F_h\times 2F_h}$ is a learnable projection matrix. The notation $[\,\cdot \mid \cdot\,]$ denotes feature concatenation, $|\mbox{edge}_{i,j} |$ represents the physical edge length, and $W_{\mathrm{proj}}^e \in \mathbb{R}^{F \times F_h}$ maps the edge lengths to the hidden space.

\subsection{Multi-Scale GU-Net Architecture}
Once embedded, the features are processed through a multi-scale graph U-Net configuration, illustrated in Fig.~\ref{fig:grafico1}(a), inside the orange box. The graph network is defined as a composition of functions:
\begin{equation}
DGN_\theta = \mathcal{E}_{out} \circ \underbrace{(T_{2L-1} \circ \dots \circ T_L \circ \dots \circ T_1)}_{\text{multi-scale GU-Net}} \circ \; \mathcal{E}_{in}
\label{eq:dgn}
\end{equation}
where the output embedding operator $\mathcal{E}_{out}$ simply maps the processed hidden features back to the required output dimension (e.g., $\epsilon_{\theta}^{t-1} \in \mathbb{R}^{N}$).  Note that within the multi-scale GU-Net block, the operators $T_k$ are interconnected via skip connections that transfer features from the contracting (encoder) layers to the expanding (decoder) layers.

\begin{figure}[h]  
    \centering
    \begin{tabular}{c}
    \includegraphics[width=0.8\textwidth]{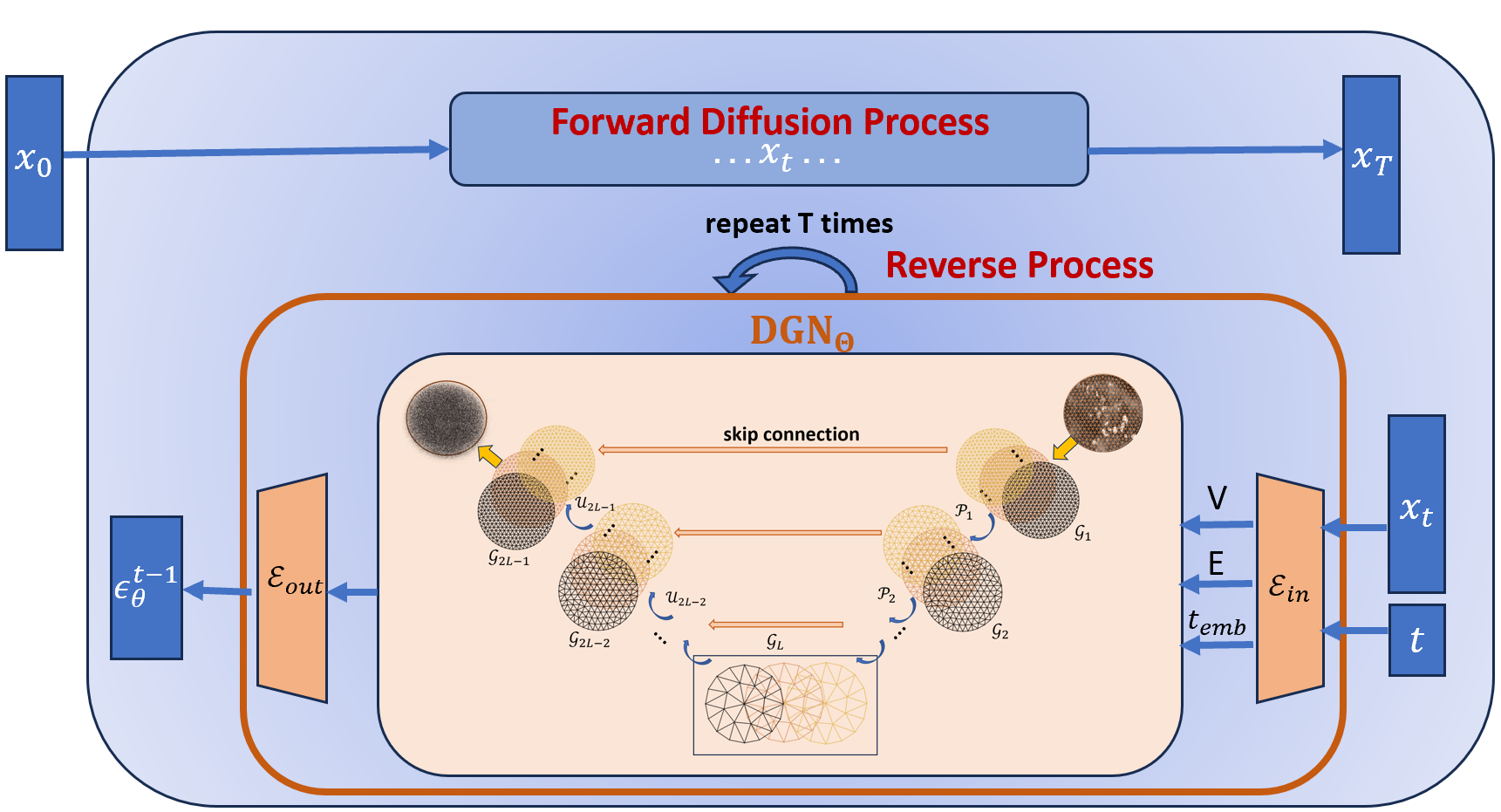}\\
    (a)\\
    \includegraphics[width=0.8\textwidth]{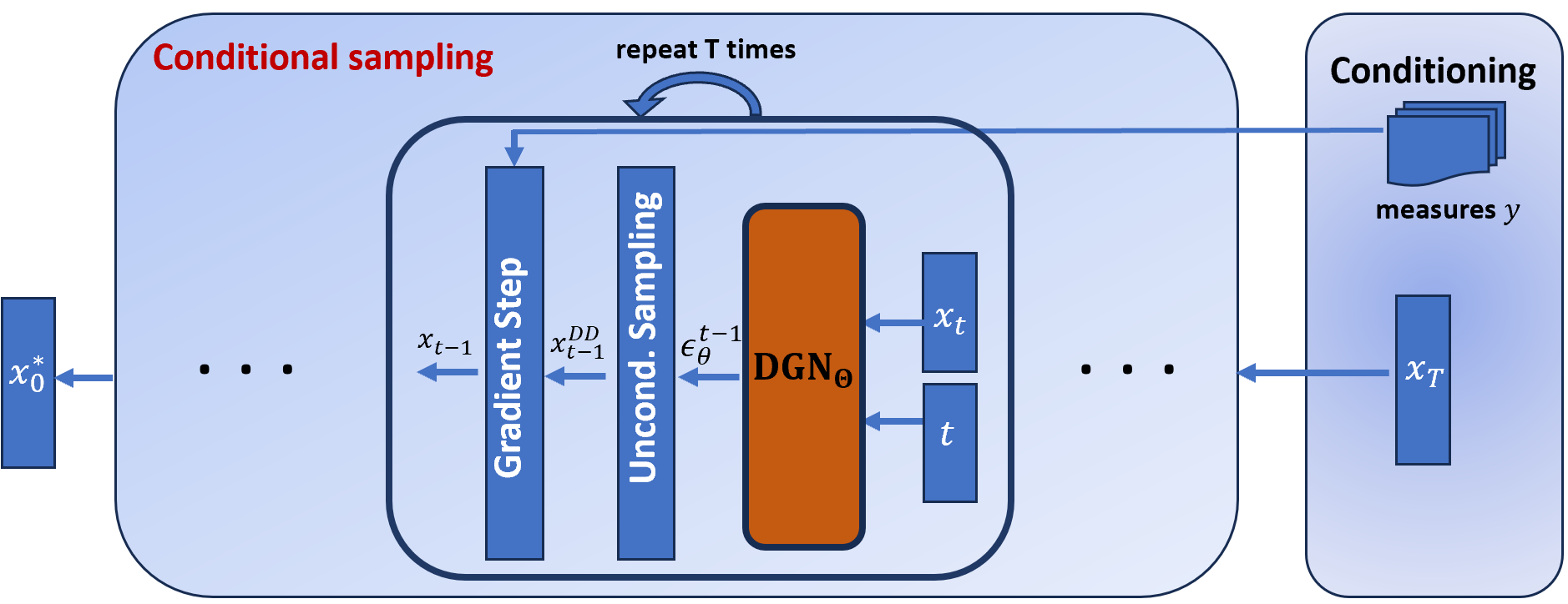}\\
    (b)
    \end{tabular}
    \caption{(a) Forward and reverse diffusion process, in the orange box we show  the graph-based score network with skip connections (red lines); (b) Proposed RDPS.}
    \label{fig:grafico1}
\end{figure}

The $M = 2L-1$ operators $T_k$ act upon a hierarchy of $L$ graphs $\{G_1, \dots, G_L\}$, where $G_1=G$,  with decreasing spatial resolutions $|\mathcal{V}_1| > |\mathcal{V}_2| > \dots > |\mathcal{V}_L|$. Each graph $G_\ell$ is characterized by an adjacency matrix $A_\ell$, encoding spatial connectivity via a $k$-nearest-neighbor (KNN) graph at scale $\ell$. The operators process features through resolution transitions and spatial convolutions as follows:
\begin{itemize}
    \item \textbf{Encoding Operators} ($k < L$): $T_k := \mathcal{P}_k \circ \mathcal{G}_k$. Here, $\mathcal{G}_k$ performs graph convolutions at scale $\ell_k=k$ on the graph $G_k$, as detailed below. The pooling operator $\mathcal{P}_k$ maps the graph from scale $k$ to $k+1$ using Guillard’s coarsening algorithm \cite{guillard:inria-00074773}, which decimates nodes while preserving physical graph topology.
    \item \textbf{Coarsest-Level Operator} ($k = L$): Performs global graph convolutions $\mathcal{G}_L$ on the coarsest graph $G_L$ to maximize the receptive field and capture long-range physical dependencies.
    \item \textbf{Decoding Operators} ($k > L$): $T_k := \mathcal{G}_k \circ \mathcal{U}_k (\cdot, \text{skip})$. The unpooling operator $\mathcal{U}_k$ maps features from scale $2L-k+1$ back to $2L-k$ by propagating them from parent to child nodes, concatenating them with encoder skip connections to restore high-frequency spatial details. This is followed by a  convolution $\mathcal{G}_k$ at scale $\ell_k=2L-k$, as detailed below.
\end{itemize}

Within each operator $T_k$, the spatial feature extraction is performed through a composition of two sequential graph convolutional layers, $\mathcal{G}_{k}= \mathcal{C}_{k} \circ \mathcal{C}_{k}$, with the same architecture but different learned weights. 

A single layer $\mathcal{C}_{k}$ at scale $\ell_k$ updates the node and edge features by aggregating information from neighborhood vertices $\mathfrak{N}_j=\{i:(i,j)\in \mathcal{E}_{\ell_k}\}$ (induced by $A_{\ell}$) utilizing non-linear kernels $\sigma_{edge}$ and $\sigma_{node}$ with learnable weights $\Theta_e$ and $\Theta_v$. Specifically, $\mathcal{C}_{k}$ operates on edges $E$ and nodes $V$ as follows:
\begin{align}
e_{ij} &= W_e e_{ij} + \sigma_{edge}([e_{ij} \mid v_i \mid v_j] ; \Theta_e), \quad &&\text{for each } (i,j) \in \mathcal{E}_{\ell} \label{eq:edge_update} \\
v_j &= W_v v_j + \sigma_{node}([\bar{e}_j \mid v_j] ; \Theta_v), \quad &&\text{for each } j \in \mathcal{V}_{\ell} \label{eq:node_update}
\end{align}
where $\bar e_j = \sum_{i \in \mathfrak{N}_j} e_{ij} \in \mathbb{R}^{F_h}$ denotes the aggregated incoming edge features at node $j$, and $W_e, W_v$ are learnable linear weight matrices for residual connections.

The graph convolutional kernels $\sigma_{edge}$ and $\sigma_{node}$ are parameterized by $\Theta_e$ and $\Theta_v$, respectively, and incorporate layer normalization ($\text{LN}$) and the SELU activation ($\phi$). The edge convolutional kernel is defined as:
\begin{equation}
\sigma_{edge}(y_e ; \Theta_e) = \phi \left(W_{e}^{(2)} \phi \left( W_{e}^{(1)} \text{LN}(y_e) + b_{e}^{(1)} \right) + b_{e}^{(2)}\right) \, ,
\end{equation}
where $y_e = [e_{ij} \mid v_i \mid v_j] \in \mathbb{R}^{3 F_h}$ and the parameter set $\Theta_e = \{W_{e}^{(1)}, b_{e}^{(1)}, W_{e}^{(2)}, b_{e}^{(2)}\}$ includes
$W_{e}^{(1)} \in \mathbb{R}^{F_h \times (3 F_h)}$,   $W_{e}^{(2)} \in \mathbb{R}^{F_h \times F_h}$, and bias vectors $b_{e}^{(1)}, b_{e}^{(2)} \in \mathbb{R}^{F_h}$.

Similarly, the node convolutional kernel is defined as:
\begin{equation}
\sigma_{node}(y_v ; \Theta_v) = \phi \left(W_{v}^{(2)} \phi \left( W_{v}^{(1)} \text{LN}(y_v) + b_{v}^{(1)} \right) + b_{v}^{(2)}\right) \, , 
\end{equation}
where the input is $y_v = [\bar{e}_j \mid v_j]\in \mathbb{R}^{F_h + F_h}$. The parameters $\Theta_v = \{W_{v}^{(1)}, b_{v}^{(1)}, W_{v}^{(2)}, b_{v}^{(2)}\}$ includes $W_{v}^{(1)} \in \mathbb{R}^{F_h \times 2F_h}$, $W_{v}^{(2)} \in \mathbb{R}^{F_h \times F_h}$, and bias vectors $b_{v}^{(1)}, b_{v}^{(2)} \in \mathbb{R}^{F_h}$.

\section{Electrical Impedence Tomography: a Case Study}\label{sec:eit}
Electrical impedance tomography (EIT) is an imaging modality aimed at recovering the conductivity of a body through electrical measurements taken at the boundary \cite{cheney-isaacson-newell-1999,borcea-2002}. From the mathematical modelling perspective, EIT is a nonlinear inverse problem that is known
to be severely ill-posed \cite{alessandrini,mandache}.
In other words, small perturbations in boundary measurements can lead to large variations in the
reconstructed conductivity.
As a consequence, EIT constitutes a challenging test case for Bayesian inverse
methods, particularly in settings where measurement noise and limited data are
present.

In this paper, EIT is used as a case study to evaluate the behavior of the proposed regularized diffusion
posterior sampling method in a severely ill-posed regime.
Rather than focusing on recovery guarantees, the goal is to assess how diffusion
priors and explicit regularization can be combined to obtain stable and physically
plausible posterior samples.

Let us now describe the mathematical model for EIT. Let 
 $\sigma\in L_+^\infty(\Omega)$ denote the spatially varying
conductivity distribution inside a bounded domain
$\Omega \subset \mathbb{R}^d$. We assume that $\sigma$ is bounded by below and above by positive constants.
Given an applied mean-zero current pattern $g\in L^2(\partial\Omega)$ on $\partial\Omega$, the electric potential $u\in H^1(\Omega)$ satisfies the
following Neumann boundary value problem
\begin{equation*}
\nabla \cdot \big( \sigma\,\nabla u \big) = 0
\quad \text{in } \Omega,\qquad \sigma\partial_\nu u=g\quad \text{on } \partial\Omega.
\end{equation*}
The data consist of electrical measurements of the boundary voltage $u|_{\partial\Omega}$ for different choices of current patterns $g$. More formally, the data can be represented by the Neumann-to-Dirichlet map $\Lambda_\sigma\colon L_0^2(\partial\Omega)\to L_0^2(\partial\Omega)$, defined by $g\mapsto u|_{\partial\Omega}$, where $L_0^2(\partial\Omega)$ denotes the space of mean-zero square-integrable functions on $\partial\Omega$. The corresponding inverse problem consists of recovering $\sigma$ from the knowledge of $\Lambda_\sigma$, and is known as the Calderón problem in the mathematical literature \cite{calderon1980inverse,calderon-book}. 

In practice, the applied currents and the corresponding measurements on the boundary are finite, and relative to a fixed configuration of electrodes. In the numerical simulations of this paper, we model the electrodes and the corresponding boundary value problem with the complete electrode model (we refer to \cite{Somersalo} for the details).
In this context, the forward operator maps the conductivity $\sigma$ to boundary voltage
measurements and can be written abstractly as
\begin{equation*}
y = \mathcal{F}(\sigma) + \eta,
\end{equation*}
where $\mathcal{F}(\cdot)$ denotes the nonlinear parameter-to-observable map and $\eta$ models
measurement noise.
From an inverse problem perspective, reconstructing $\sigma$ from $y$ is severely
ill-posed.
Thus, the likelihood $p(y \mid \sigma)$ is often weakly informative, and meaningful
posterior inference requires the incorporation of strong prior information.

\section{Experimental Results}\label{sec:experiments}

This section validates the proposed regularized diffusion posterior sampling by applying it to the non-linear, highly ill-conditioned EIT inverse problem. We demonstrate that the regularization integrated into the diffusion model is a critical component for achieving a physically plausible approximation of the solution. All our experiments are in $d=2$ dimensions, and $\Omega$ is the unit ball.

Before detailing the experimental results, we describe the datasets in Section \ref{sec:dataset} and define the performance metrics used for evaluation in Section \ref{sec:fm}.

In the first experiment (section~\ref{sub:ex1})  the performance of the proposed DDIM-RDPS diffusion model is investigated. The first phase focuses on a baseline comparison between DDPM and DDIM without any regularization, and the second phase takes the superior model, DDIM, and applies it across four distinct configurations: one without regularization and three employing different regularization techniques.

The second experiment (section~\ref{sub:ex2}) investigates the robustness of the solver against perturbations of EIT voltage measurements by additive white Gaussian and Laplacian noise. 
The experiment simulates a realistic scenario where noise levels scale proportionally with signal amplitude.

In the third case study (section~\ref{sub:ex3}), we assess how well DDIM-RDPS generalizes to out-of-distribution structural data and demonstrate its robustness when handling empirical datasets. 

The fourth experiment (section~\ref{sub:ex4}) demonstrates the competitive performance of the proposed DDIM-RDPS model against leading state-of-the-art methods in EIT reconstruction. This benchmark serves to validate the model's competitive edge, comparing its reconstruction accuracy, and resolution against established variational, Bayesian and deep learning-based solvers currently used in inverse EIT reconstruction.

\subsection{Datasets for the Training of the Graph-based Score Network}
\label{sec:dataset}

The synthetic EIT measurements were generated on a fine mesh $\Tau_h$ composed of $N = 3766$ vertices.
To avoid the inverse crime, the inversion and diffusion-based reconstruction process 
were performed on a coarser mesh ($N = 1602$).
This ensures that the forward operator used for data generation
differs from the one employed during reconstruction,
preventing artificial over-optimistic results.

All examples simulate a circular tank slice of
unitary radius. In the circular boundary ring, $p=32$ equally spaced electrodes are located. 
The conductivity of the background liquid is set to be $x_0 = 1.0$ $ S m^{-1}.$

The study employs four datasets consisting of circular tanks with one to three inclusions of random locations, random sizes 
and random constant conductivities. DATASET1 and DATASET2 contain circular and triangular inclusions, respectively (conductivity values are sampled with respect to the uniform distribution $\mathcal{U} [0.5, 1.5]$). DATASET3 features blob-shaped inclusions with a wider conductivity range ($\mathcal{U}[0.3, 1.7]$). These datasets consist of $5000$ examples on the coarser mesh with $N = 1602$ vertices. Training has been performed with ADAM  optimizer \cite{Adam}, using a learning rate equal to $2.5 \times 10^{-3}$ through $1000$ epochs, with a mini-batch size of $10$ instances. Finally, an out-of-distribution set, DATASET4, is introduced using horseshoe-shaped inclusions to test the model's robustness.

The acquisition of $m$ measurements is simulated through opposite injection - adjacent measurement protocol. In all the examples, the setup
is considered blind, that is no a priori information about the sizes or locations of the inclusions is known.
In the forward calculations of $\mathcal{F}$, we applied the KTCFwd forward solver, a two-dimensional version of the FEM described in \cite{Vetal1999}, kindly provided by the authors\footnote{\url{https://github.com/CUQI-DTU/KTC2023-CUQI4}}, which is based on a FEM implementation of the CEM model on triangle elements. The electric potential is discretized using second-order polynomial basis functions, while the conductivity is discretized on the nodes using linear basis functions on triangle elements.

\subsection{Performance Metrics}
\label{sec:fm}

Reconstructions are assessed qualitatively through visual inspection of artifacts and quantitatively by using reconstruction error metrics.

Let $x^{GT}\in\mathbb{R}^{N}$ denote the ground-truth conductivity
distribution and $x^{*}\in\mathbb{R}^{N}$ be the reconstructed conductivity,
both represented on the same mesh with $N$ vertices.
The mean squared error (MSE) is defined as
\begin{equation*}
\mathrm{MSE}(x^{GT},x^{*})
:=\frac{1}{N}\left\lVert x^{GT}-x^{*}\right\rVert_2^2,
\end{equation*}
and the corresponding $\mathrm{RMSE}
:=\sqrt{\mathrm{MSE}}$, which
 measures the average absolute deviation of the reconstructed conductivity
from the ground truth (in the same physical unit as the conductivity).
Further,  where appropriate, we report the relative $L^2$ error to quantify the discrepancy
normalized by the magnitude of the ground truth:
\begin{equation}
\mathrm{RelErr}(x^{GT},x^{*})
:=\frac{\left\lVert x^{GT}-x^{*}\right\rVert_2}
{\left\lVert x^{GT}\right\rVert_2}.
\label{eq:relerr}
\end{equation}

In addition, we report the structural similarity index measure (SSIM) to assess the similarity between the reconstructed and ground-truth conductivity distributions beyond pointwise error metrics.
For a local window $\Omega_k$ (one-ring neighborhood) along the ordered conductivity vector, the SSIM is defined as
\begin{equation}
\mathrm{SSIM}_k(x^{GT},x^{*})
:=
\frac{(2\mu^{GT}_k\mu^{*}_k + C_1)(2\sigma^{GT,*}_k + C_2)}
{\left((\mu^{GT}_k)^2 + (\mu^{*}_k)^2 + C_1\right)
\left((\sigma^{GT}_k)^2 + (\sigma^{*}_k)^2 + C_2\right)},
\end{equation}
where $\mu^{GT}_k$ and $\mu^{*}_k$ denote the local means, $(\sigma^{GT}_k)^2$ and $(\sigma^{*}_k)^2$ the local variances, and $\sigma^{GT,*}_k$ the local covariance within the window $\Omega_k$. The constants $C_1$ and $C_2$ are included for numerical stability.
The final SSIM score is obtained by averaging over all local windows:
\begin{equation}
\mathrm{SSIM}(x^{GT},x^{*})
:=
\frac{1}{K}\sum_{k=1}^{K}\mathrm{SSIM}_k(x^{GT},x^{*}),
\end{equation}
where $K$ is the number of evaluated windows. A value closer to $1$ indicates higher similarity between the reconstructed and ground-truth conductivity vectors.

\setlength{\tabcolsep}{3pt}
\renewcommand{\arraystretch}{1.0}
\newcolumntype{C}[1]{>{\centering\arraybackslash}m{#1}}

\newcommand{\imgmetric}[3][2.2cm]{%
  \makecell[c]{\includegraphics[width=#1]{#2}\\[-1mm]{\scriptsize #3}}%
}

\newcommand{\imgonly}[2][2.6cm]{%
  \includegraphics[width=#1]{#2}%
}

\subsection{Example 1: Diffusion Posterior Sampling Validation}\label{sub:ex1}

\subsubsection{DDPM-DPS vs DDIM-DPS}
In this experiment, we compare the performance of two sampling strategies, DDPM and DDIM, on DATASET1.
To isolate the effect of the sampler itself, we disable the regularization term $R$.
For a fair comparison, we ensure that for each test sample both DDPM-DPS and DDIM-DPS start from the \emph{same} initial noise realization $x_T$.

We evaluate both samplers on $50$ test samples. For each sample, reconstructions are generated using DDPM and DDIM with $T=1000$ denoising steps, a value determined to be optimal through preliminary testing with smaller step sizes.

Table~\ref{tab:ddpm_ddim_50} reports the reconstruction errors for DDPM-DPS and DDIM-DPS as RMSE (mean$\pm$std) and RelErr (mean).
The sampler with the lower mean RMSE is selected as the default unconditional sampling strategy \eqref{eq:rdps} for subsequent experiments.

\begin{table}[h]
	\centering
	\caption{Example 1: DDPM-DPS vs.\ DDIM-DPS on 50 test samples (no regularization). }
	\label{tab:ddpm_ddim_50}
	\begin{tabular}{lccc}
		\hline
		Method & RMSE (mean) & RMSE (std) & RelErr (mean) \\
		\hline
		DDPM-DPS & \texttt{0.076400} & \texttt{0.018883} & \texttt{7.582\%}  \\
		DDIM-DPS & \texttt{0.072636} & \texttt{0.020798} & \texttt{7.216\%}  \\
		\hline
	\end{tabular}
\end{table}

\subsubsection{Comparison of  Different Regularization Terms}

We further investigate how different regularization terms $R$ affect the reconstruction quality for DATASET 1. The regularization parameter $\lambda$ has been tuned to obtain the optimal reconstruction results for each regularizer.
We consider the following variants of diffusion posterior sampling:

\paragraph{[No-Reg] Unregularized DDIM--DPS.}
In the baseline configuration, posterior sampling is performed using the diffusion
prior and the data-fidelity term only.
No additional explicit regularization is introduced.
This setting highlights the stabilizing effect of the learned diffusion prior
alone.

\paragraph{[Tik] Tikhonov-regularized DDIM--RDPS.}
We incorporate a classical Tikhonov regularization term of the form \eqref{eq:regT} which penalizes large conductivity values and promotes globally smooth
solutions.

\paragraph{[GTik] Generalized Tikhonov-Regularized DDIM--RDPS.}
We also consider a generalized Tikhonov regularization, as in \eqref{eq:regGT}, which penalizes spatial variations and enforces smoothness while respecting
the mesh structure.

\paragraph{[TV] Total Variation regularized DDIM--RDPS.}
Finally, we study total variation regularization, defined in \eqref{eq:regTV},
which promotes piecewise-constant conductivity profiles and preserves sharp interfaces.
TV regularization is particularly well suited for EIT settings involving
inclusions with distinct conductivity contrasts.

\begin{figure}
\begin{table}[H]

\centering
\small

\begin{adjustbox}{width=\linewidth}
\begin{tabular}{C{2.2cm} C{2.8cm} C{2.8cm} C{2.8cm} C{2.8cm} C{2.8cm}}

{GT} &
\imgonly{More_results/reg_compare_out/seed_36/gt} &
\imgonly{More_results/reg_compare_out/seed_49/gt} &
\imgonly{More_results/reg_compare_out/seed_54/gt} &
\imgonly{More_results/reg_compare_out/seed_67/gt} &
\imgonly{More_results/reg_compare_out/seed_96/gt} \\
\hline

No-Reg  &
\imgmetric{More_results/reg_compare_out/seed_36/none_pred}{(0.091, 8.41\%)} &
\imgmetric{More_results/reg_compare_out/seed_49/none_pred} {(0.068, 6.86\%)}&
\imgmetric{More_results/reg_compare_out/seed_54/none_pred} {(0.067, 6.67\%)}&
\imgmetric{More_results/reg_compare_out/seed_67/none_pred} {(0.044, 4.50\%)}&
\imgmetric{More_results/reg_compare_out/seed_96/none_pred} {(0.073, 7.10\%)}\\

Tik  &
\imgmetric{More_results/reg_compare_out/seed_36/ref_l2_pred}{(0.073, 6.75\%)} &
\imgmetric{More_results/reg_compare_out/seed_49/ref_l2_pred} {(\textbf{0.063}, 6.39\%)}&
\imgmetric{More_results/reg_compare_out/seed_54/ref_l2_pred} {(0.060, 5.98\%)}&
\imgmetric{More_results/reg_compare_out/seed_67/ref_l2_pred} {(0.043, 4.38\%)}&
\imgmetric{More_results/reg_compare_out/seed_96/ref_l2_pred} {(0.053, 5.22\%)}\\

GTik &
\imgmetric{More_results/reg_compare_out/seed_36/tikh1_pred} {(0.083, 7.71\%)}&
\imgmetric{More_results/reg_compare_out/seed_49/tikh1_pred} {(0.069, 6.95\%)}&
\imgmetric{More_results/reg_compare_out/seed_54/tikh1_pred} {(0.055, 5.47\%)}&
\imgmetric{More_results/reg_compare_out/seed_67/tikh1_pred} {(0.043, 4.43\%)}&
\imgmetric{More_results/reg_compare_out/seed_96/tikh1_pred} {(0.061, 5.96\%)}\\

TV &
\imgmetric{More_results/reg_compare_out/seed_36/tv_pred} {(\textbf{0.064}, 5.95\%)}&
\imgmetric{More_results/reg_compare_out/seed_49/tv_pred} {(\textbf{0.063}, 6.34\%)}&
\imgmetric{More_results/reg_compare_out/seed_54/tv_pred}{(\textbf{0.054}, 5.40\%)} &
\imgmetric{More_results/reg_compare_out/seed_67/tv_pred}{(\textbf{0.041}, 4.23\%)} &
\imgmetric{More_results/reg_compare_out/seed_96/tv_pred} {(\textbf{0.045}, 4.37\%)}
\end{tabular}
\end{adjustbox}
\end{table}
\caption{Example 1: Comparison of different regularizers; (RMSE, RelErr) values are reported for each test case (in bold the lowest errors).}
\end{figure}

Table~\ref{tab:reg_10samples} reports the results on $10$ test samples and summarizes the RMSE values for each regularization choice.  Fig.~\ref{more_results_reg} provides qualitative comparisons for 5 test samples.
As expected, the TV regularizer yields the best results. Its effectiveness is directly linked to the homogeneous structure of the inclusions, which the TV penalty is mathematically designed to preserve.

However, the proposed framework is highly flexible; the regularizer can be effectively tailored to the specific morphology of the data being reconstructed.
\begin{table}[t]
	\centering
	\caption{RMSE on 10 test samples under different regularization terms.}
	\label{tab:reg_10samples}
	\begin{tabular}{ccccc}
		\hline
		Sample ID & No-reg  & Tik & GTik & TV\\
		\hline
		1  & \texttt{0.0980} & \texttt{0.0730} & \texttt{0.0833} & \texttt{0.0643} \\
		2  & \texttt{0.0680} & \texttt{0.0633} & \texttt{0.0689} & \texttt{0.0628} \\
		3  & \texttt{0.0673} & \texttt{0.0604} & \texttt{0.0533} & \texttt{0.0545} \\
		4  & \texttt{0.0439} & \texttt{0.0428} & \texttt{0.0432} & \texttt{0.0423} \\
		5  & \texttt{0.0725} & \texttt{0.0533} & \texttt{0.0609} & \texttt{0.0446} \\
		6  & \texttt{0.0598} & \texttt{0.0443} & \texttt{0.0543} & \texttt{0.0449} \\
		7  & \texttt{0.0459} & \texttt{0.0480} & \texttt{0.0615} & \texttt{0.0407} \\
		8  & \texttt{0.0736} & \texttt{0.0655} & \texttt{0.0481} & \texttt{0.0567} \\
		9  & \texttt{0.0418} & \texttt{0.0340} & \texttt{0.0204} & \texttt{0.0319} \\
		10 & \texttt{0.0515} & \texttt{0.0532} & \texttt{0.0553} & \texttt{0.0455} \\
		\hline
		Mean & \texttt{0.0622} & \texttt{0.0538} & \texttt{0.0549} & \texttt{0.0488} \\
		\hline
	\end{tabular}
\label{more_results_reg}
\end{table}

\subsection{Example 2: Noise Sensitivity Analysis}\label{sub:ex2}

To evaluate the robustness of our method under noisy measurements, we conduct experiments on DATASET1 using both noise-free and noise-corrupted data. In particular, we consider two noise models: additive white Gaussian noise (AWGN) and additive white Laplace noise (AWLN), each at two different noise levels. While AWGN accounts for standard measurement noise such as thermal fluctuations, AWLN is employed to simulate impulsive disturbances and outliers typical of real-world EIT environments. This experiment is designed to assess the stability and robustness of the proposed graph-based diffusion prior under different noise distributions and noise levels.

For each test sample, the corrupted measurement vector $y$ is generated from the noise-free measurement $y_{\mathrm{GT}}$ according to
\[
y = y_{\mathrm{GT}} + \eta.
\]
The noise vector $\eta$ is modeled as a zero-mean random process with standard deviation $\sigma_y$. We consider two distinct noise regimes:
\begin{eqnarray}
\text{[AWGN]} \quad & \eta \sim \mathcal{N}\!\left(0,\,\sigma_y^2 I\right), \\
\text{[AWLN]} \quad & \eta \sim \mathrm{Laplace}(0,\, b I), \quad \text{with } b = \frac{\sigma_y}{\sqrt{2}},
\end{eqnarray}
where $b$ represents the scale parameter of the Laplace distribution, specifically chosen so that both distributions share the same variance.
To assess performance under different noise intensities, we consider two noise levels for each noise type:
\begin{equation}
\label{eq:nl}
\sigma_y \in \left\{\sigma_1 := 2\times10^{-3}\cdot \max\bigl(|y_{\mathrm{GT}}|\bigr),\; \sigma_2 := 1\times10^{-2}\cdot \max\bigl(|y_{\mathrm{GT}}|\bigr)\right\}.
\end{equation}

To address the sensitivity of the reconstruction quality to the regularization parameter $\lambda$, we employ a time-dependent adaptive regularization strategy. This approach helps balancing data fidelity and prior constraints throughout the diffusion process.
According to \eqref{eq:rdps}, the theoretical weight $\lambda_t^{\mathrm{theory}}$ is proportional to the ratio between measurement noise ($\sigma_y^2$) and the time-varying diffusion noise
\[
\lambda_t^{\mathrm{theory}}
\propto \frac{\sigma_y^2}{(1-\bar{\alpha_t})/ \sqrt{\bar{\alpha_t}}}.
\] This ensures that the reconstruction is primarily guided by data consistency in the early sampling stages, while the generative prior is gradually strengthened as the process converges. To maintain stability, the adaptive parameter $\lambda_t$ is obtained by clamping the theoretical weight within a prescribed interval $[\lambda_{\min}, \lambda_{\max}]$, namely
\[
\lambda_t = \mathrm{clip}\!\left(\lambda_t^{\mathrm{theory}},\, \lambda_{\min},\, \lambda_{\max}\right).
\]
The lower bound prevents the prior from becoming negligible at high noise levels, while the upper bound ensures the data-fidelity term is not suppressed during the final refinement steps.  By adjusting the clamping range based on the specific noise model (AWGN vs. AWLN), the framework maintains robustness across different physical measurement disturbances.

In EIT research, noise levels typically range from $0.1\%$ to $2\%$. To demonstrate the high robustness we test the performance for measurements corrupted by a small noise level ($\sigma_1$) as well as a higher noise level ($\sigma_2$), see \eqref{eq:nl}.

Table~\ref{tab:noisy_w/o_noisy_20} summarizes the quantitative reconstruction errors under noisy and noise-free settings. The performance degradation introduced by noise remains moderate, suggesting that our approach is robust to measurement perturbations.

\begin{table}[H]
	\centering
	\caption{Example 2: Noisy vs.\ noiseless on 20 test samples (DDIM + TV regularization).}
	\label{tab:noisy_w/o_noisy_20}
	\begin{tabular}{lcc}
		\hline
		Setting & RMSE (mean)  & RelErr (mean) \\
		\hline
		Noisy (AWGN, $\sigma_1$) & \texttt{0.07083} & \texttt{7.07\%}  \\
        Noisy (AWGN, $\sigma_2$) & \texttt{0.08498} & \texttt{8.51\%}  \\
        Noisy (AWLN, $\sigma_1$) & \texttt{0.07055} & \texttt{7.05\%}  \\
        Noisy (AWLN, $\sigma_2$) & \texttt{0.08139} & \texttt{8.12\%}  \\
		Noiseless & \texttt{0.06220} &  \texttt{6.22\%}  \\
		\hline
	\end{tabular}
\end{table}

 Figure~\ref{fig:noise_robustness} provides a visual comparison of reconstructions for four representative samples. The conductivity reconstructions in Figure~\ref{fig:noise_robustness} are organized row-wise as follows: with noise-free measurements (second row); with perturbed measurements and using an optimal hand-tuned regularization parameter $\lambda$ (third row); and with perturbed measurements but using our proposed automatic parameter selection $\lambda_t$ under AWGN (fourth and fifth rows) and AWLN (sixth and seventh rows) for the two noise levels $\sigma_1$ and $\sigma_2$. 

Comparing the third and fourth rows for each sample, it is evident that our automatic parameter selection strategy, $\lambda_t$, (fourth row) is highly robust. Notably, it eliminates the need for the computationally expensive grid-search typically required to find an optimal hand-tuned regularization parameter $\lambda$ (third row).

Adopting an automatic $\lambda_t$, as shown in Figure~\ref{fig:noise_robustness} (fourth and fifth blocks), the proposed method exhibits stable reconstruction performance under different noise models and noise levels. For both AWGN and AWLN at the lower noise level $\sigma_1$, the reconstructed conductivities remain visually close to the noise-free results, and the corresponding RMSE and relative error are only mildly increased. At the higher noise level $\sigma_2$, the degradation becomes more noticeable, with larger deviations from the ground truth and increased reconstruction errors across all representative samples. Nevertheless, the adaptive parameter-selection strategy still produces visually reasonable reconstructions under both Gaussian and Laplace noise. In particular, the results under AWGN and AWLN are overall comparable at the same noise level. The sample in the third column appears to be consistently more challenging than the other examples, even in the noise-free setting, probably because of the presence of an inclusion far from the boundary, where stability further deteriorates in EIT \cite{Noser}. These observations indicate that the proposed method is robust to moderate measurement noise and degrades gradually as the noise intensity increases, also in view of the severe ill-posedness of this inverse problem.

\begin{figure}[H]
\centering
\setlength{\tabcolsep}{4pt}
\renewcommand{\arraystretch}{1.0}

\begin{tabular}{@{}c c c c c@{}}

GT
&
\begin{minipage}[t]{0.17\textwidth}\centering\vspace{0pt}
  \includegraphics[width=\linewidth]{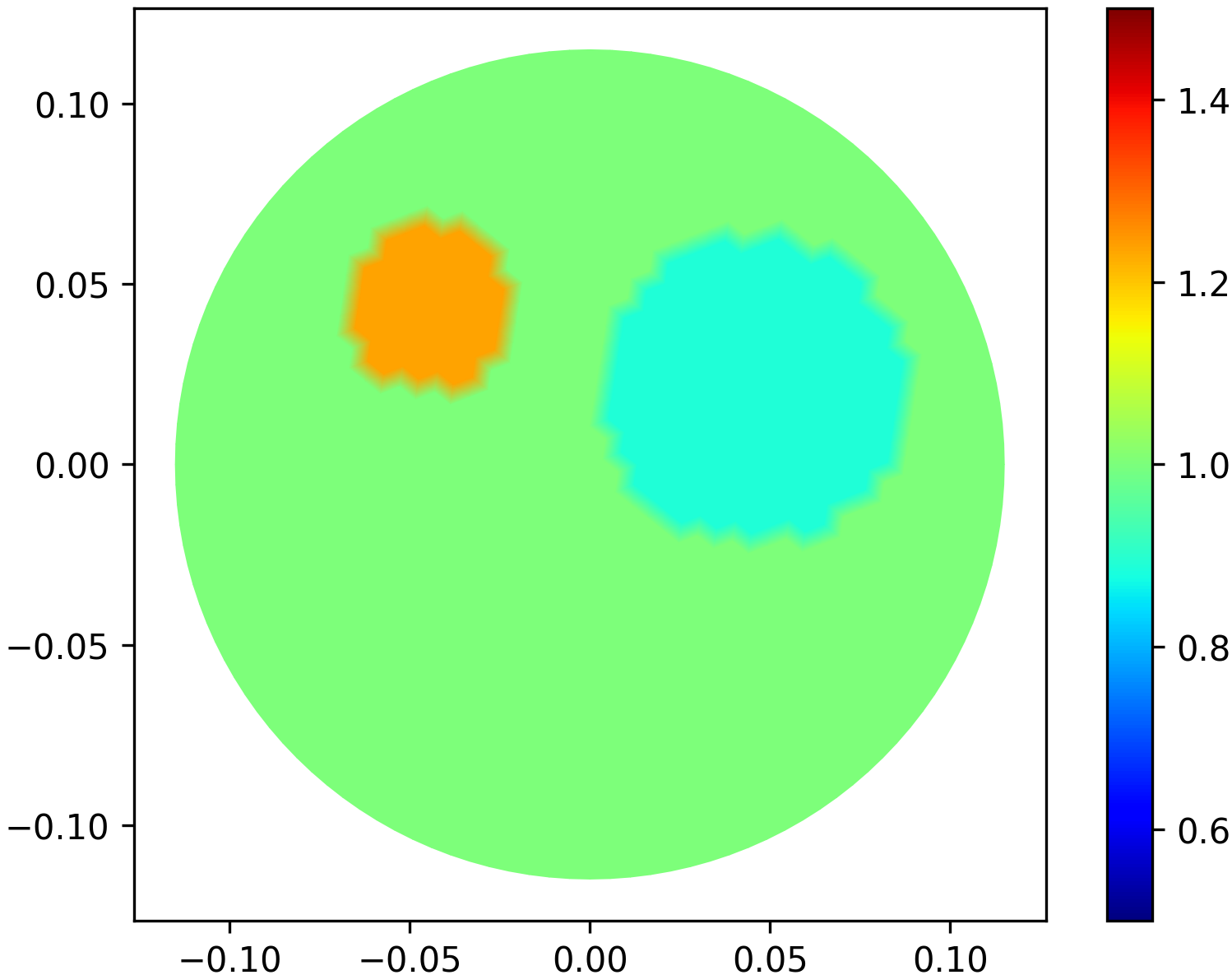}
\end{minipage}
&
\begin{minipage}[t]{0.17\textwidth}\centering\vspace{0pt}
  \includegraphics[width=\linewidth]{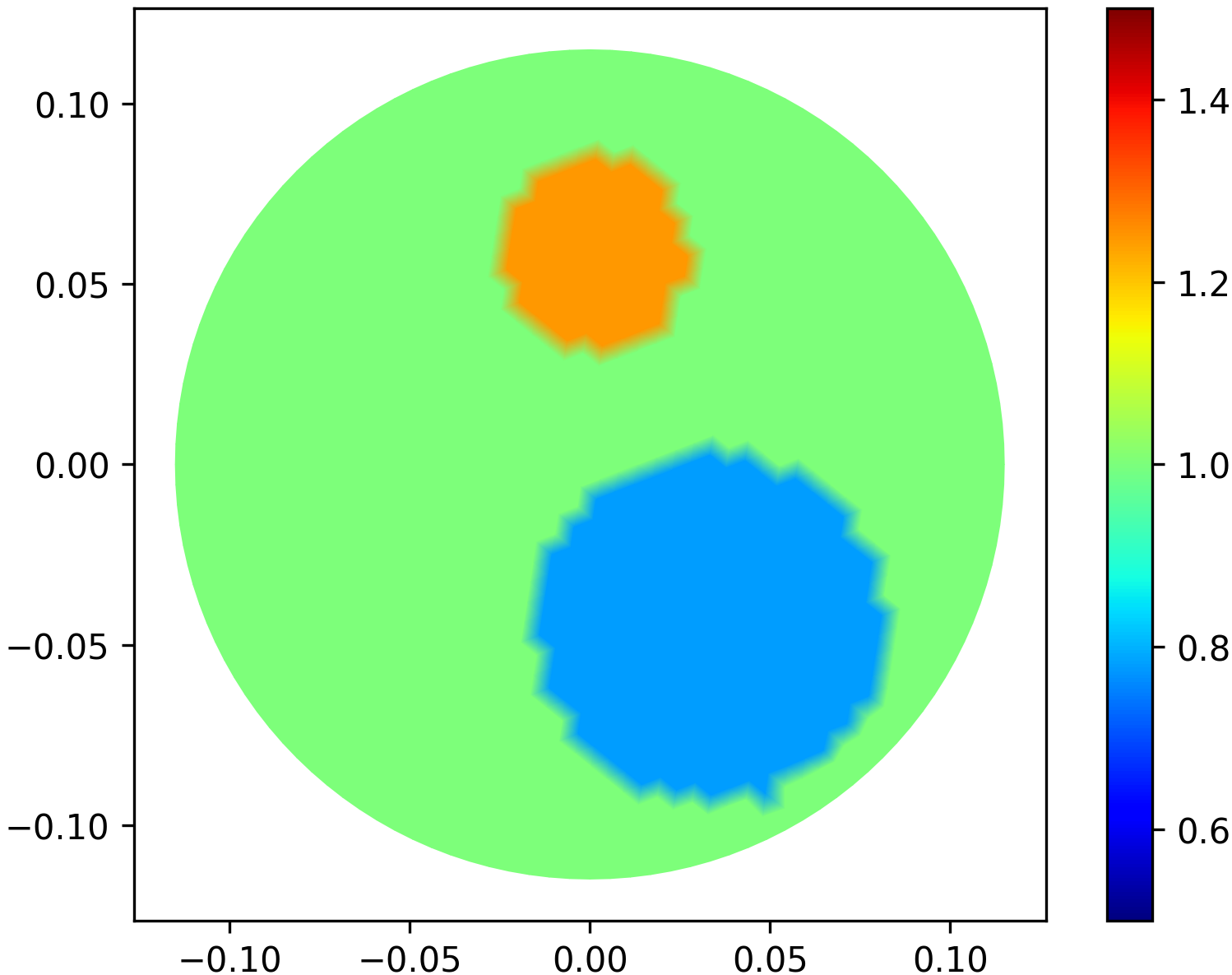}
\end{minipage}
&
\begin{minipage}[t]{0.17\textwidth}\centering\vspace{0pt}
  \includegraphics[width=\linewidth]{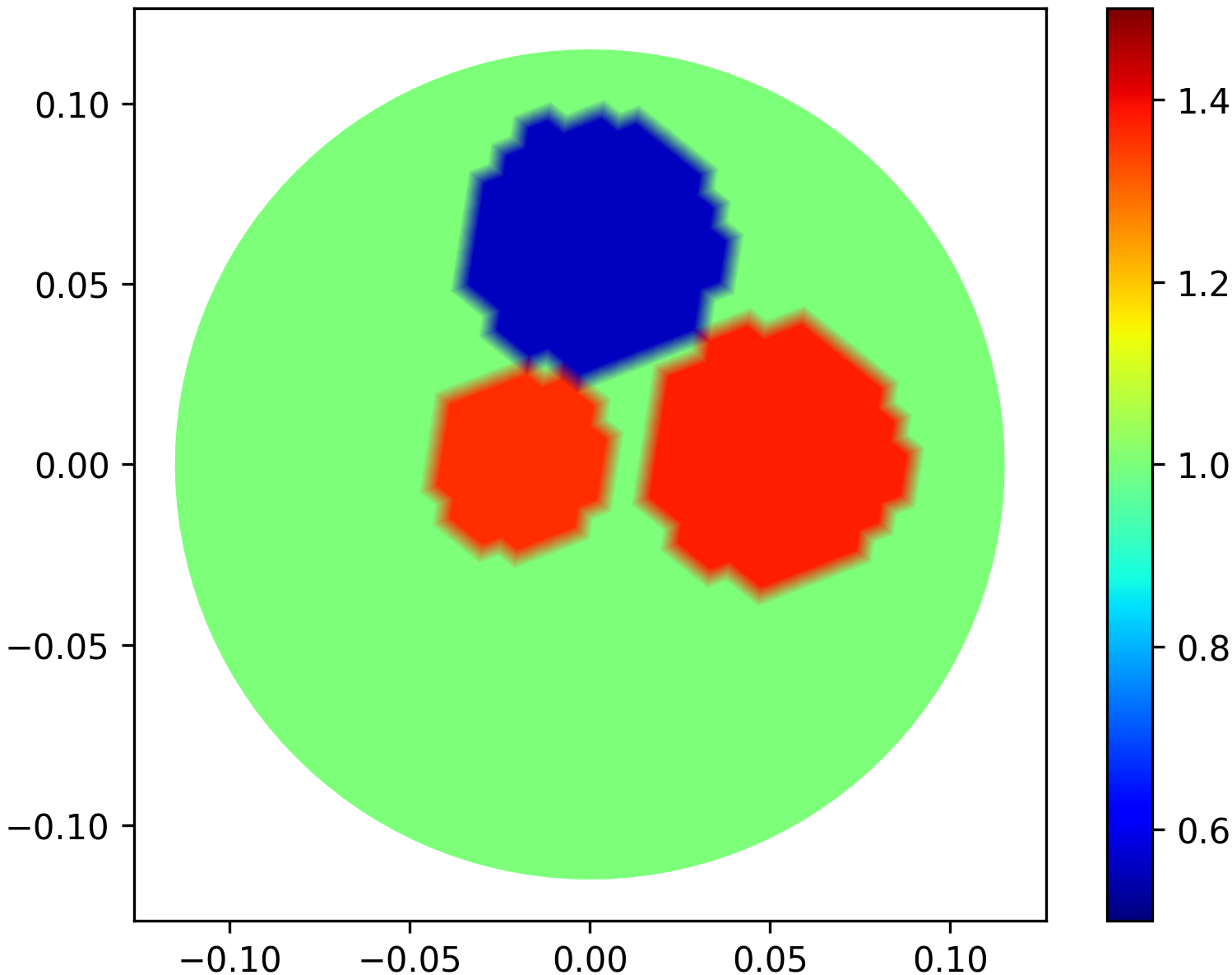}
\end{minipage}
&
\begin{minipage}[t]{0.17\textwidth}\centering\vspace{0pt}
  \includegraphics[width=\linewidth]{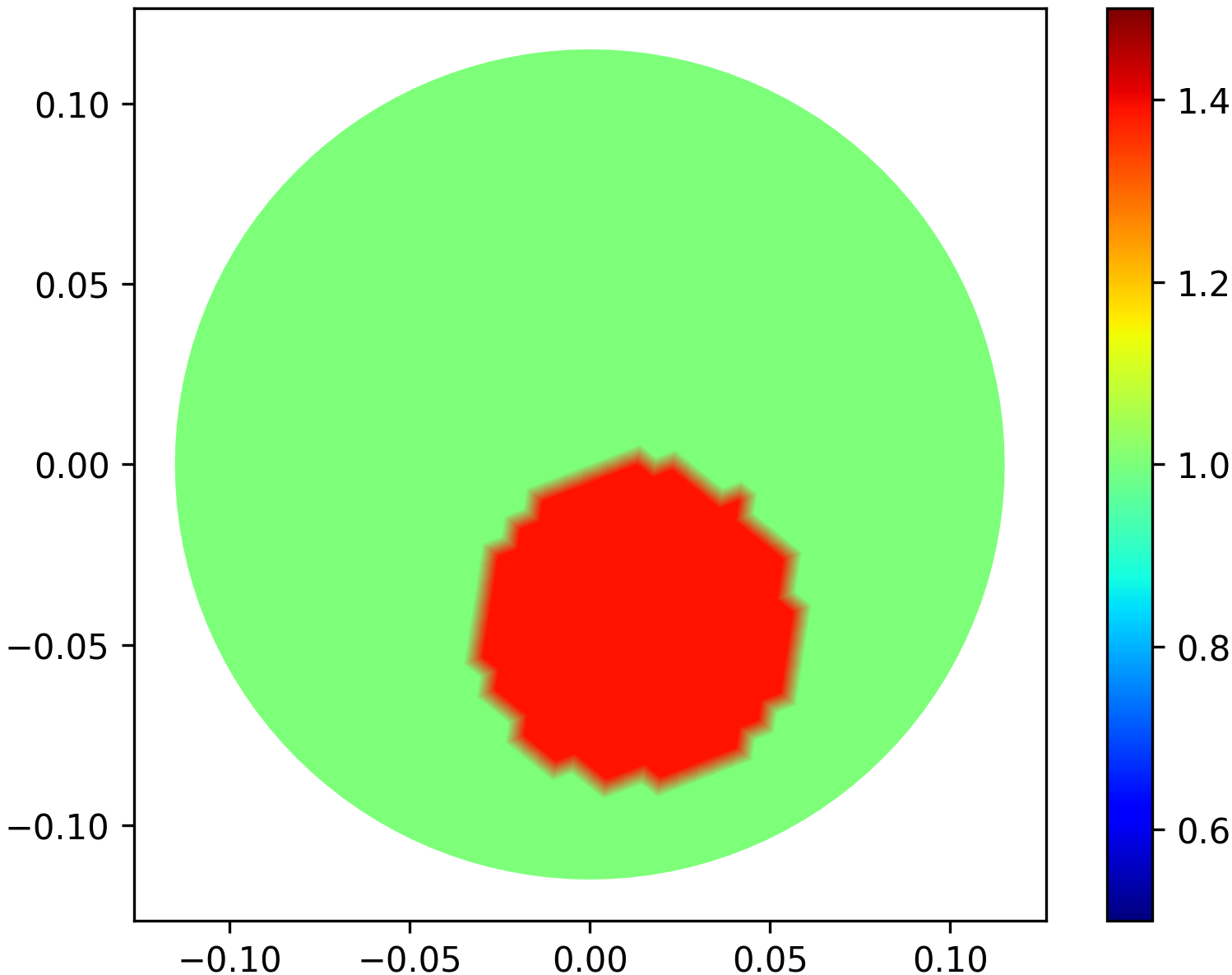}
\end{minipage}
\\ & & & & \\
\hline 

Noise-free
&
\begin{minipage}[t]{0.13\textwidth}\centering\vspace{0pt}
  \includegraphics[width=\linewidth]{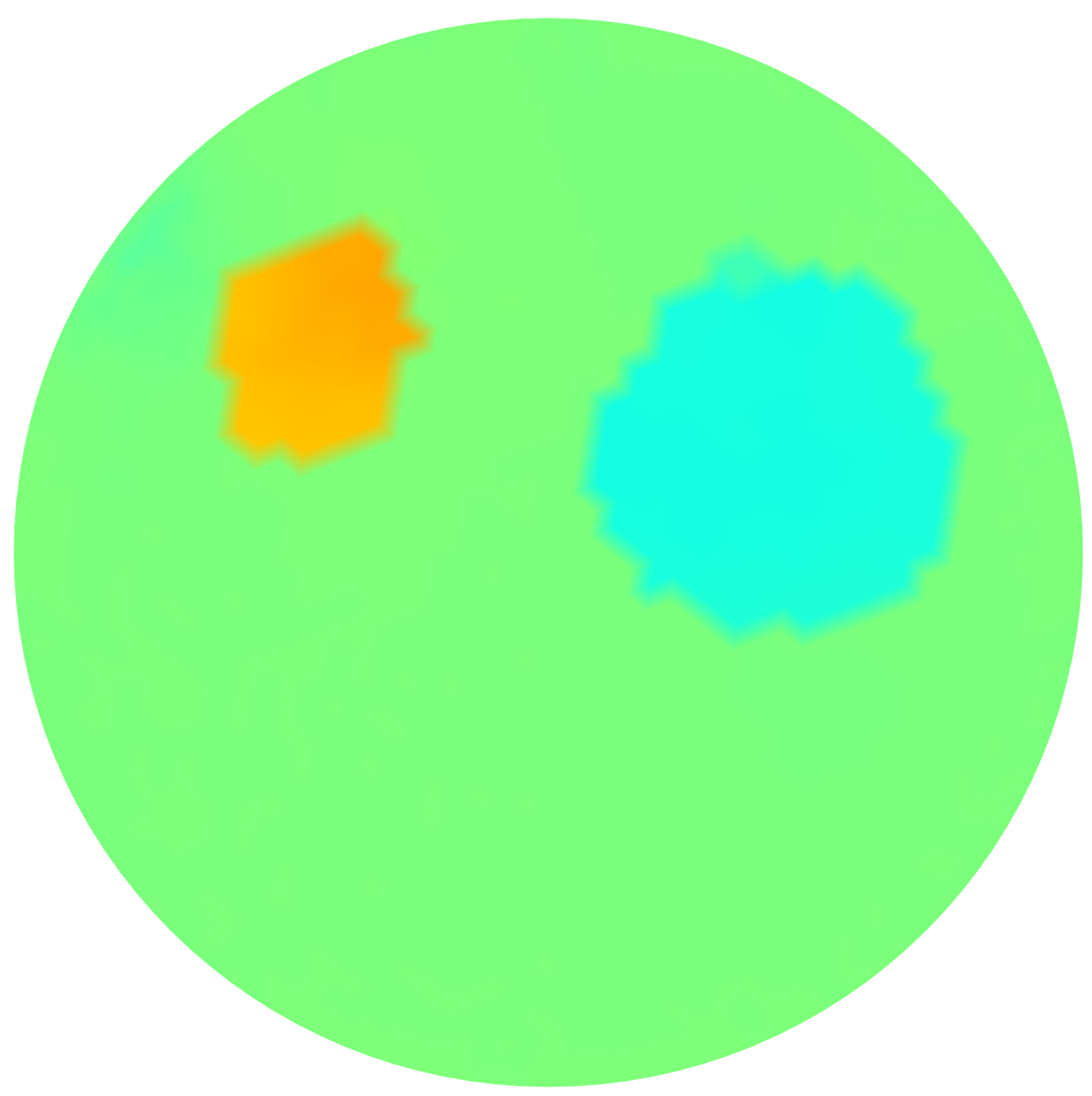}\\[-2pt]
  {\footnotesize (0.031, 3.11\%)}
\end{minipage}
&
\begin{minipage}[t]{0.13\textwidth}\centering\vspace{0pt}
  \includegraphics[width=\linewidth]{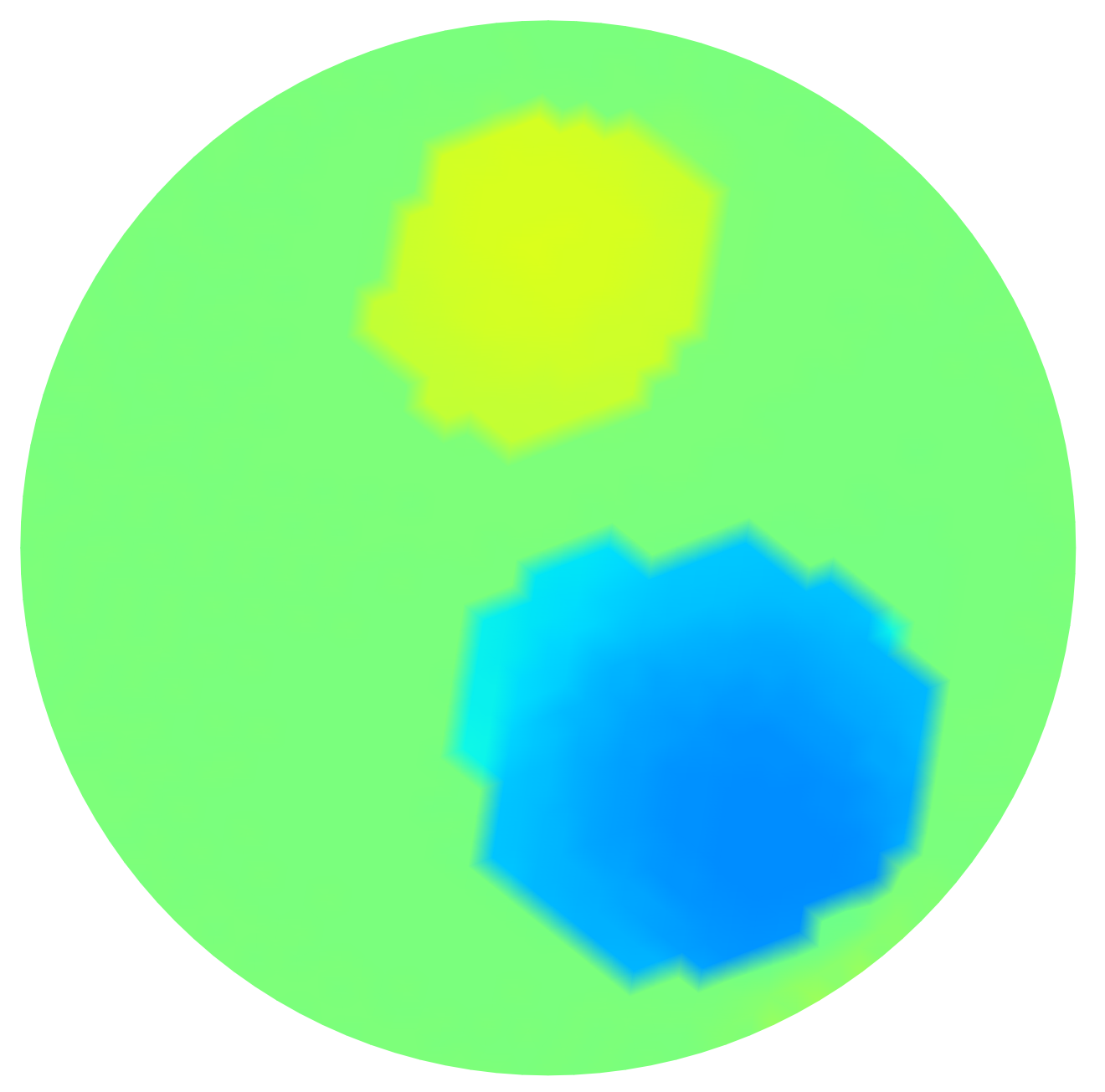}\\[-2pt]
  {\footnotesize (0.045, 4.58\%)}
\end{minipage}
&
\begin{minipage}[t]{0.13\textwidth}\centering\vspace{0pt}
  \includegraphics[width=\linewidth]{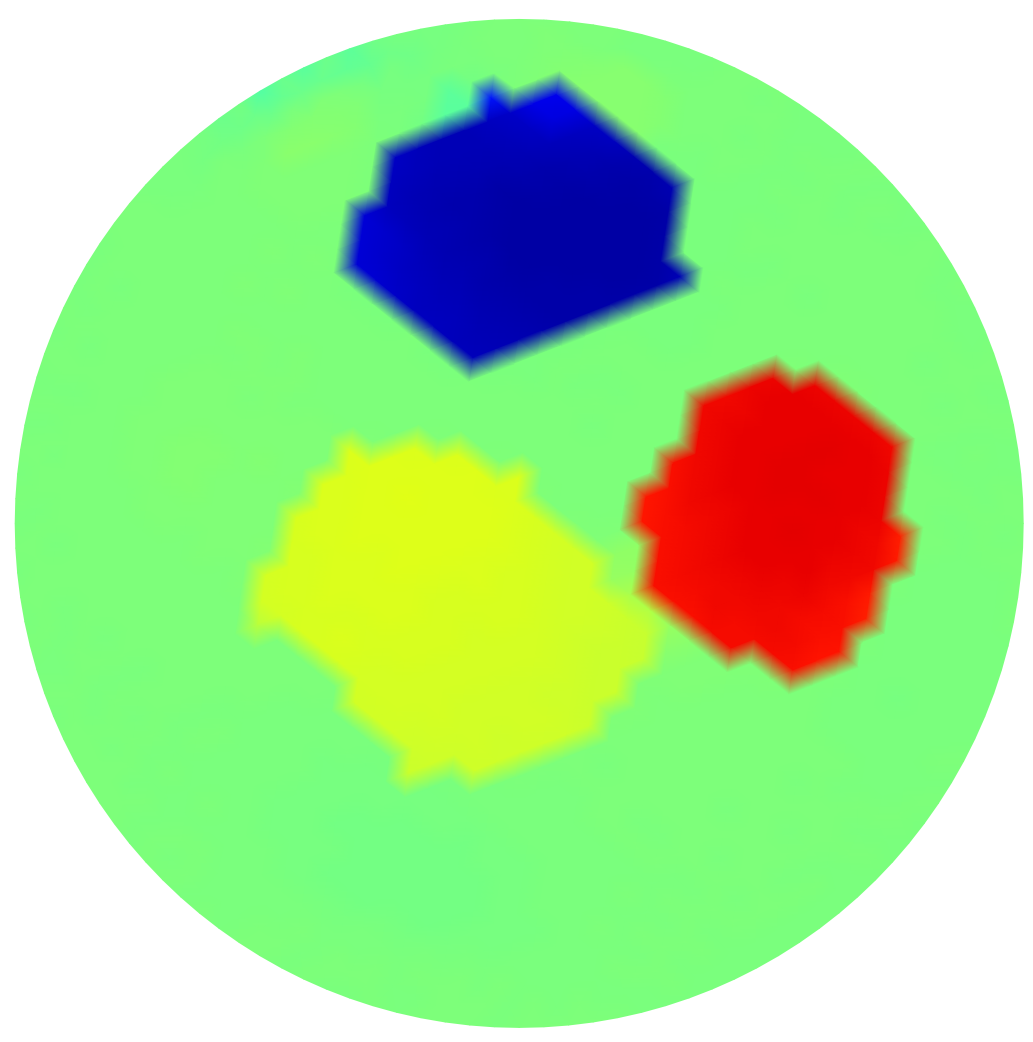}\\[-2pt]
  {\footnotesize (0.115, 11.16\%)}
\end{minipage}
&
\begin{minipage}[t]{0.13\textwidth}\centering\vspace{0pt}
  \includegraphics[width=\linewidth]{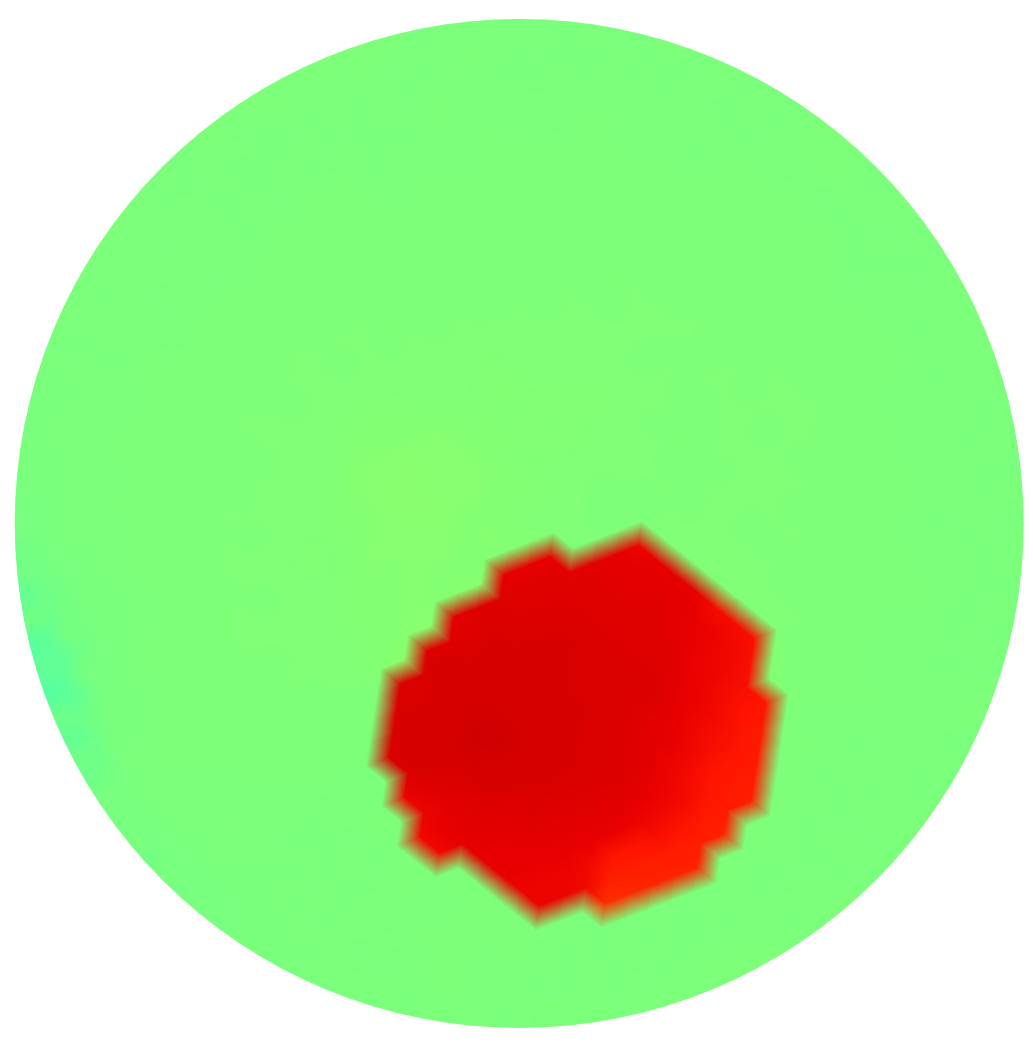}\\[-2pt]
  {\footnotesize (0.043, 4.01\%)}
\end{minipage}
\\
\hline 
$\lambda$, (AWGN, $\sigma_1$)
&
\begin{minipage}[t]{0.13\textwidth}\centering\vspace{0pt}
  \includegraphics[width=\linewidth]{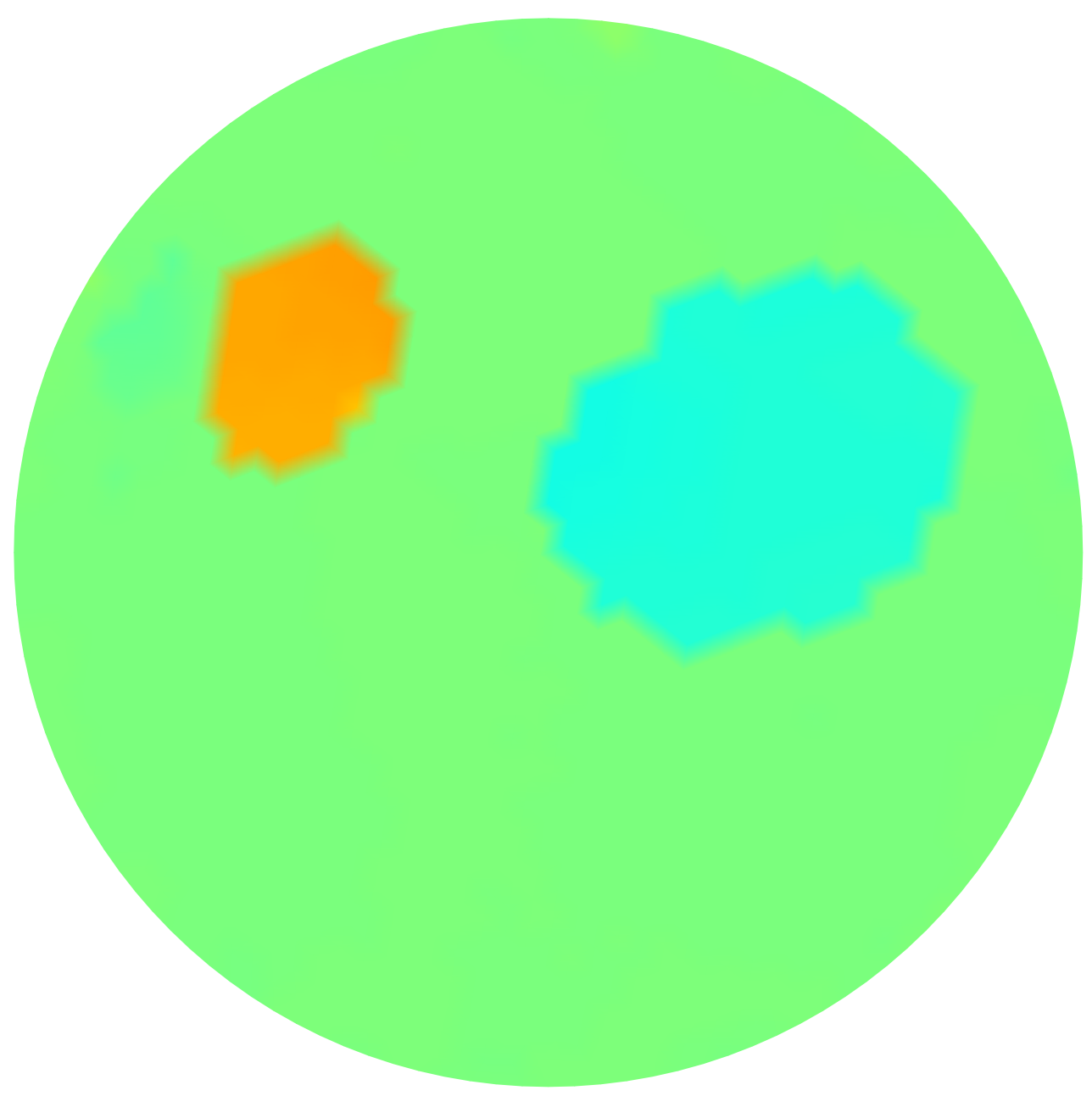}\\[-2pt]
  {\footnotesize (0.036, 3.61\%)}
\end{minipage}
&
\begin{minipage}[t]{0.13\textwidth}\centering\vspace{0pt}
  \includegraphics[width=\linewidth]{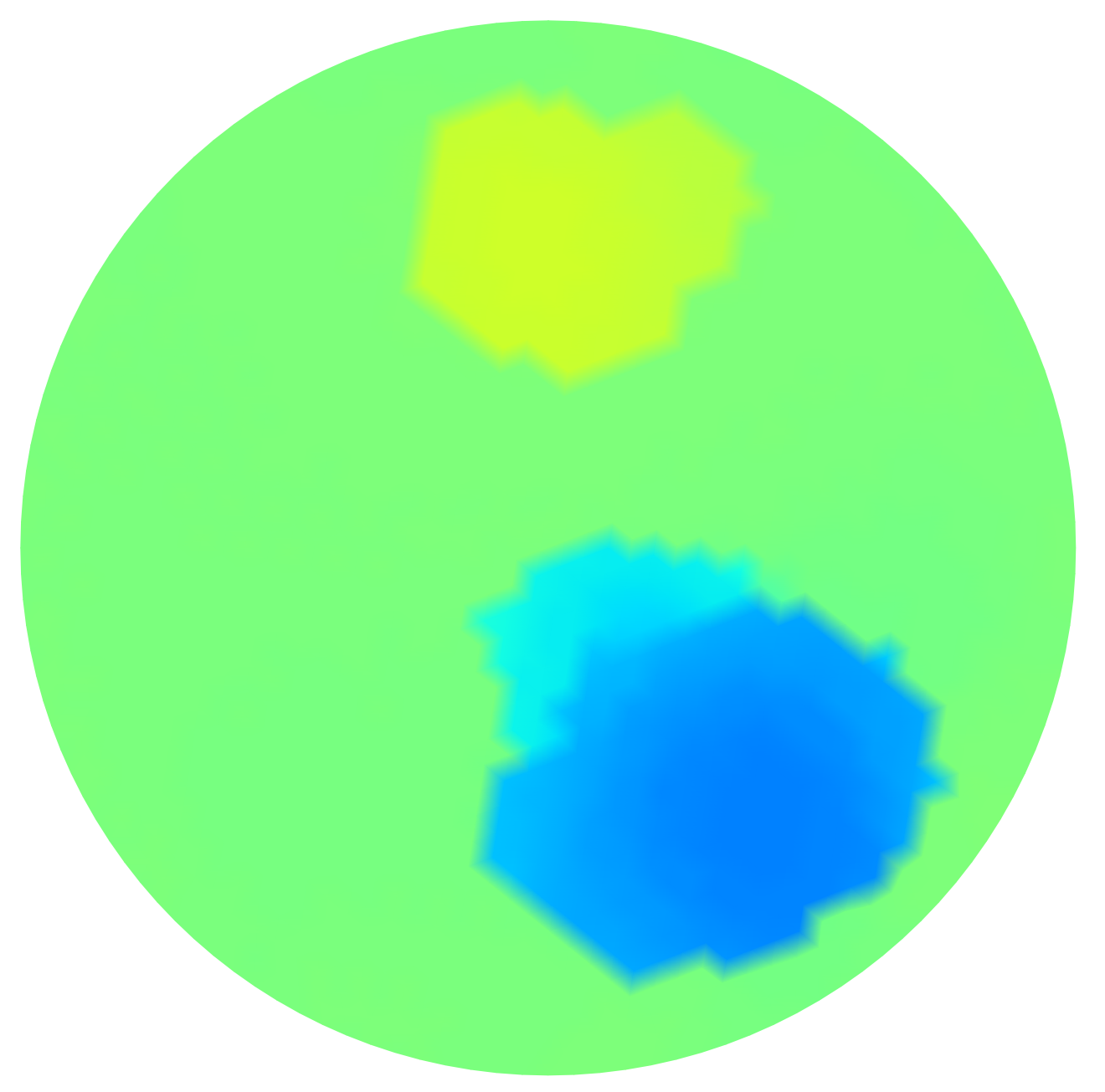}\\[-2pt]
  {\footnotesize (0.052, 5.33\%)}
\end{minipage}
&
\begin{minipage}[t]{0.13\textwidth}\centering\vspace{0pt}
  \includegraphics[width=\linewidth]{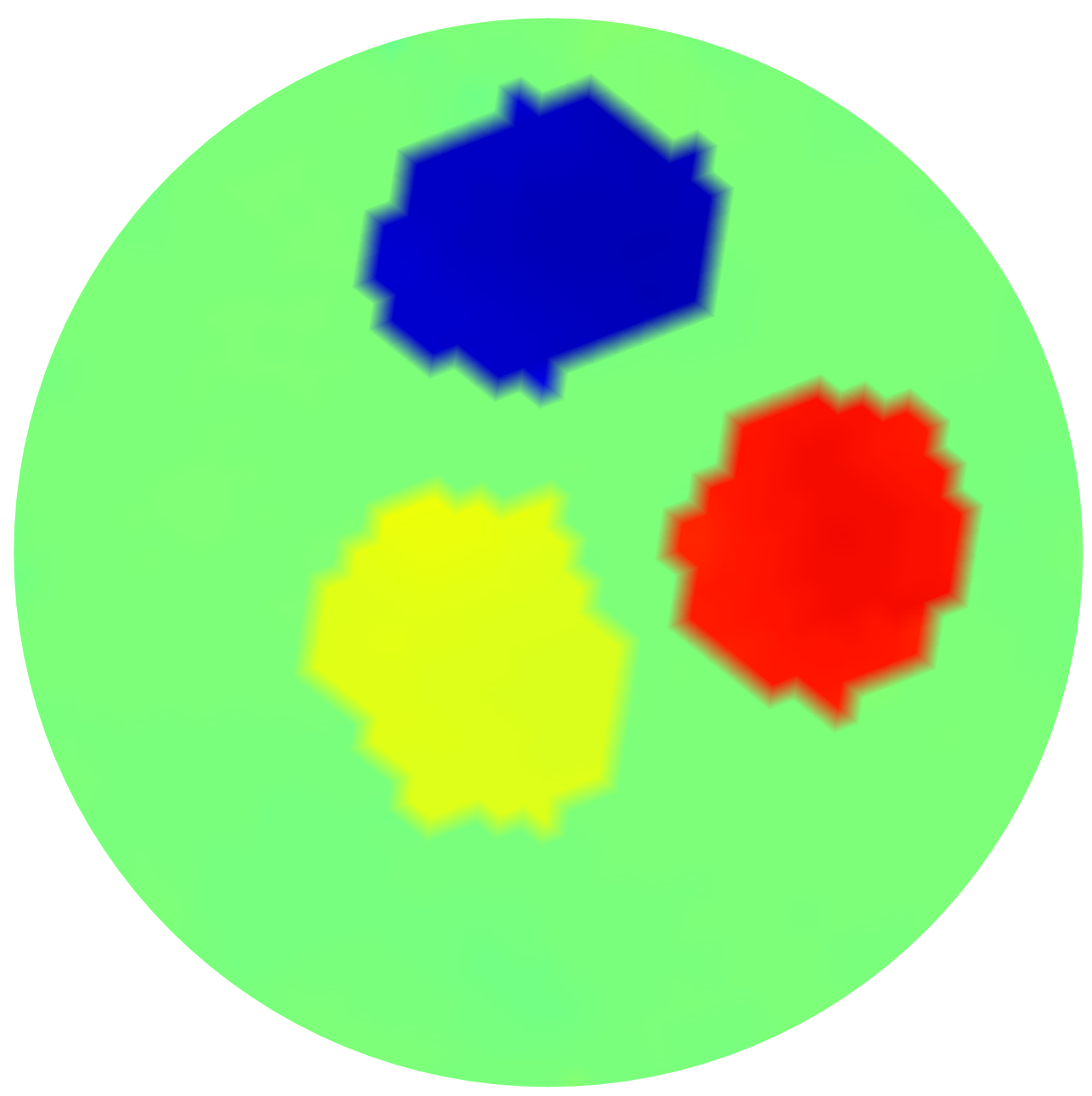}\\[-2pt]
  {\footnotesize (0.116, 11.26\%)}
\end{minipage}
&
\begin{minipage}[t]{0.13\textwidth}\centering\vspace{0pt}
  \includegraphics[width=\linewidth]{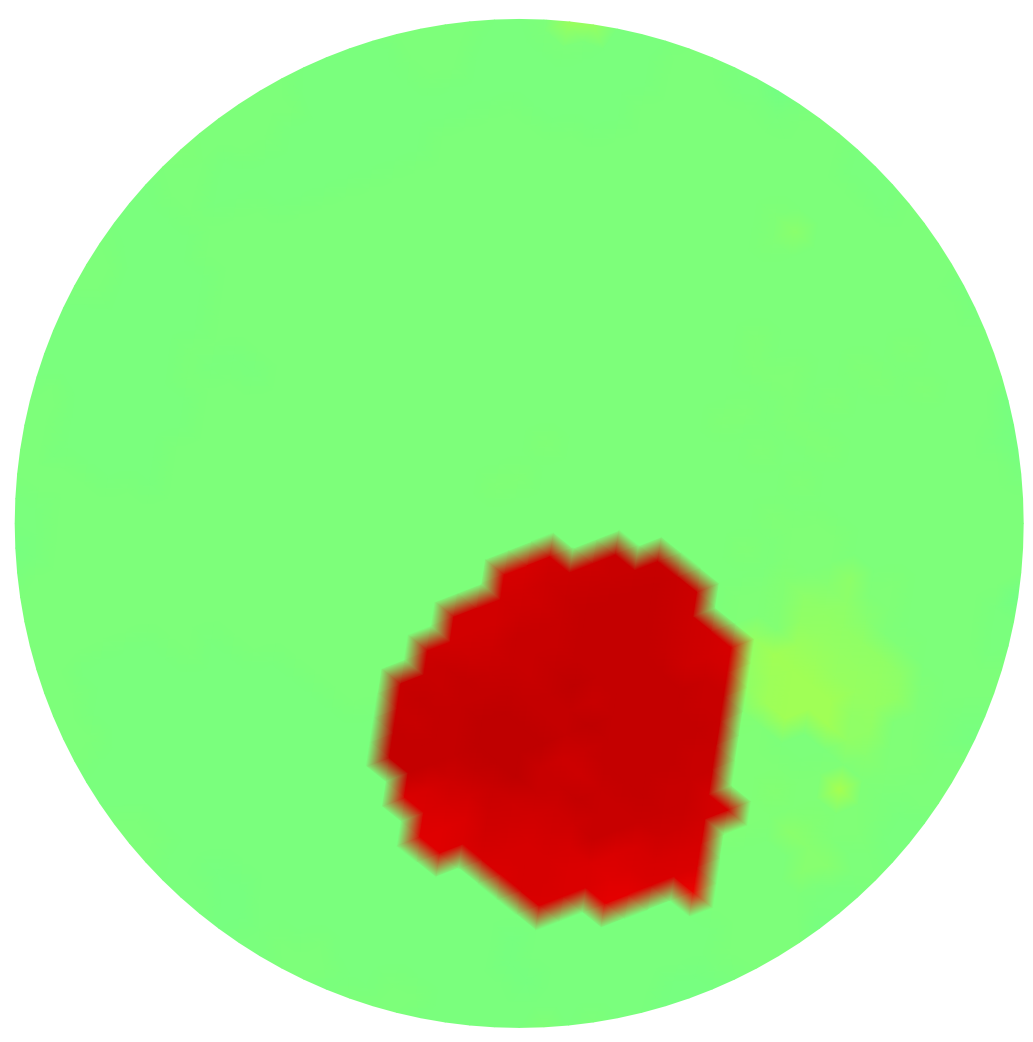}\\[-2pt]
  {\footnotesize (0.060, 5.63\%)}
\end{minipage}
\\[6pt]
\hline

$\lambda_t$, (AWGN, $\sigma_1$)
&
\begin{minipage}[t]{0.13\textwidth}\centering\vspace{0pt}
  \includegraphics[width=\linewidth]{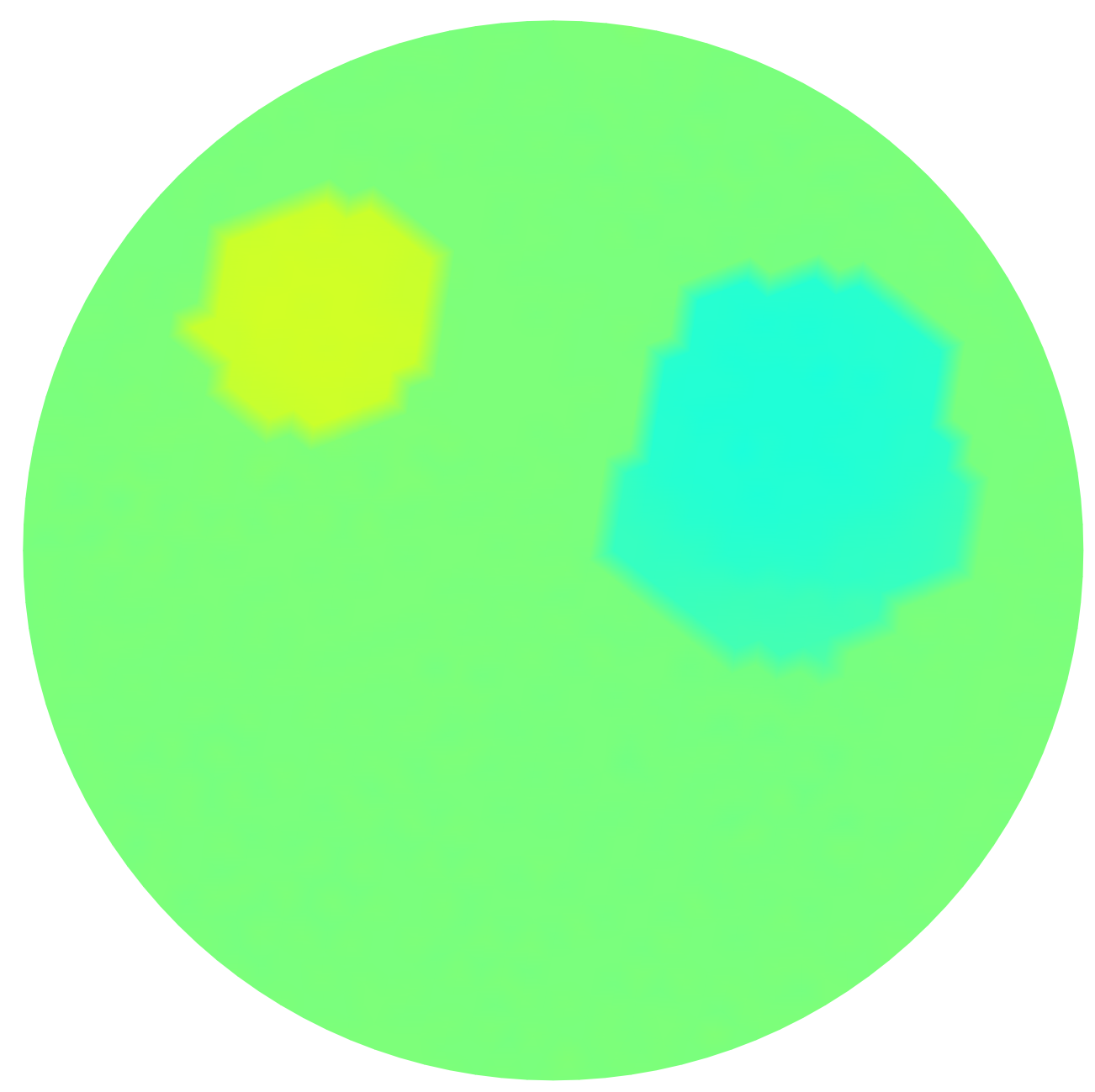}\\[-2pt]
  {\footnotesize (0.039, 3.86\%)}
\end{minipage}
&
\begin{minipage}[t]{0.13\textwidth}\centering\vspace{0pt}
  \includegraphics[width=\linewidth]{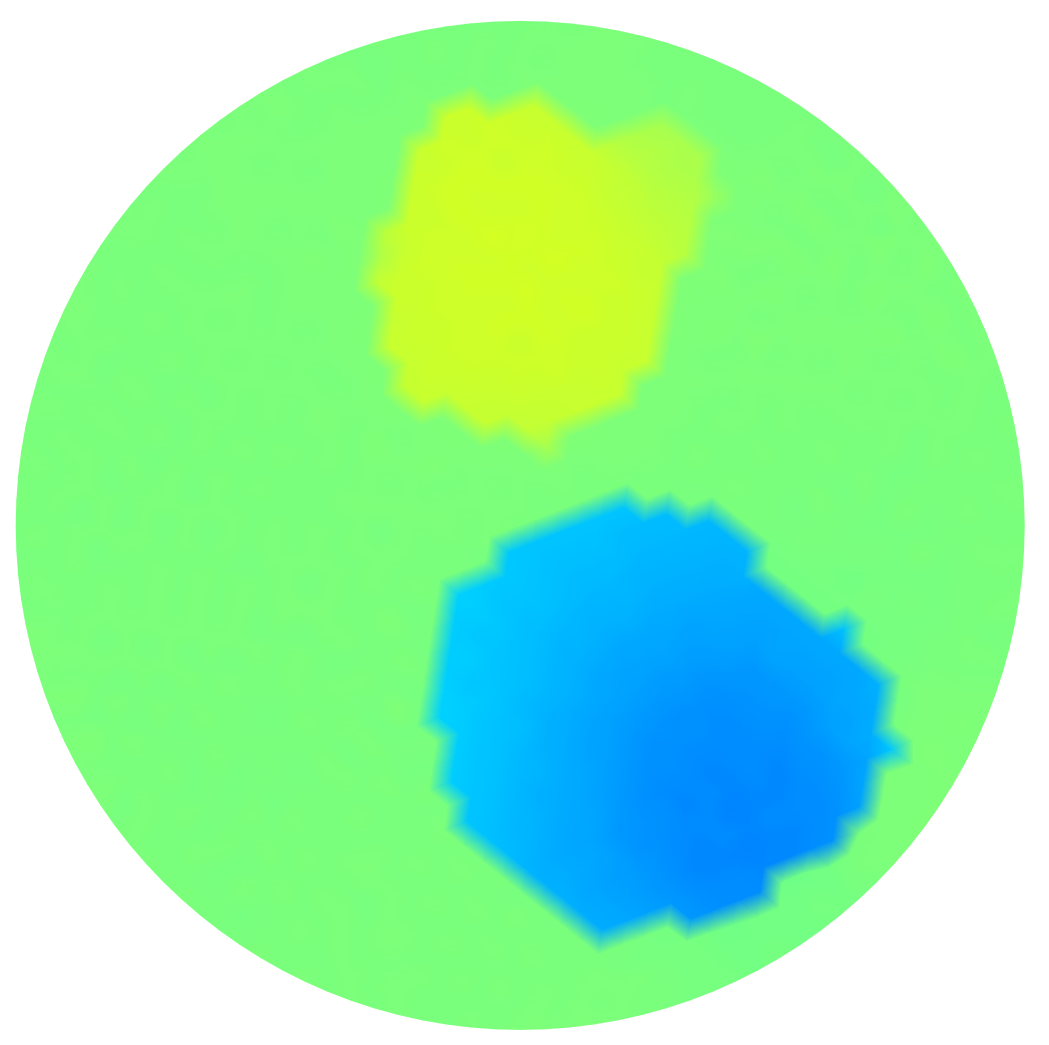}\\[-2pt]
  {\footnotesize (0.049, 4.97\%)}
\end{minipage}
&
\begin{minipage}[t]{0.13\textwidth}\centering\vspace{0pt}
  \includegraphics[width=\linewidth]{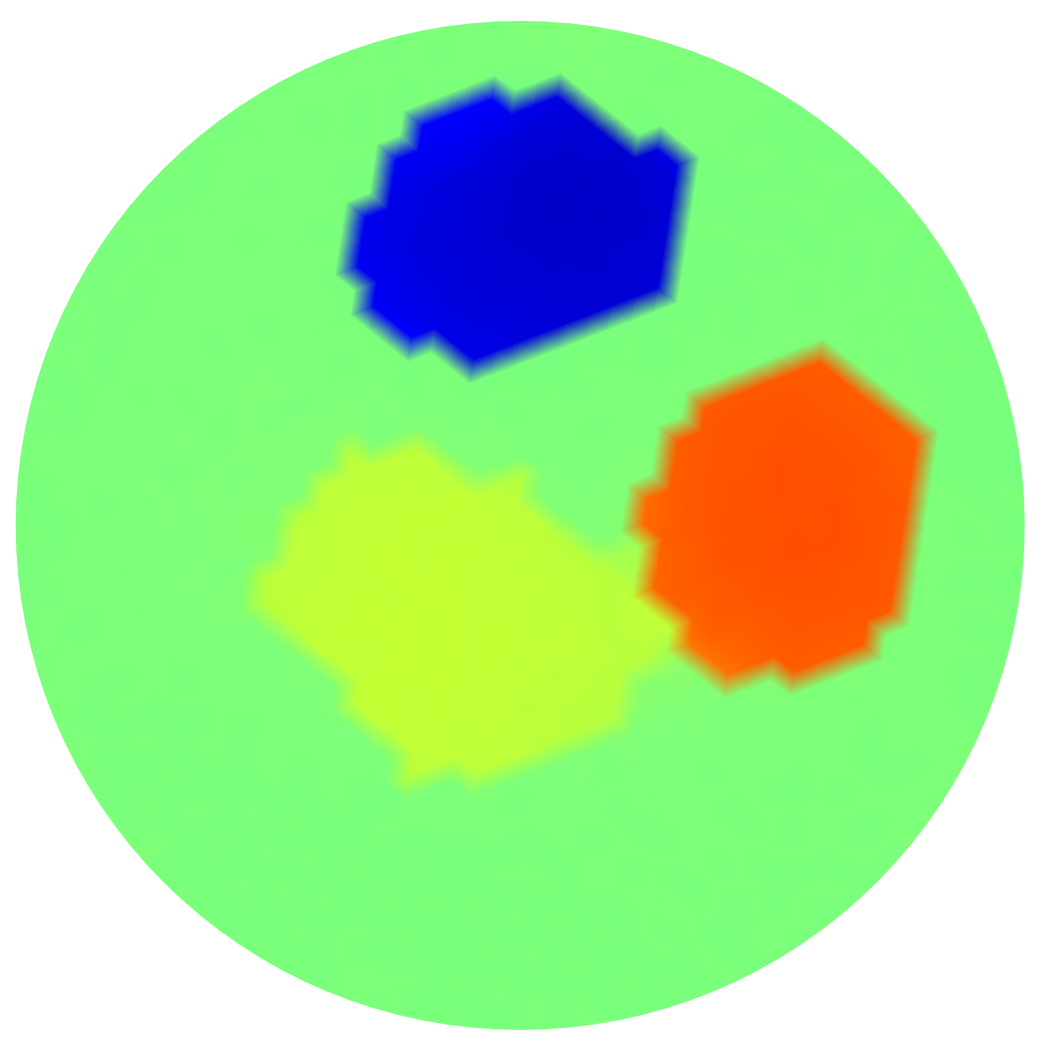}\\[-2pt]
  {\footnotesize (0.116, 11.25\%)}
\end{minipage}
&
\begin{minipage}[t]{0.13\textwidth}\centering\vspace{0pt}
  \includegraphics[width=\linewidth]{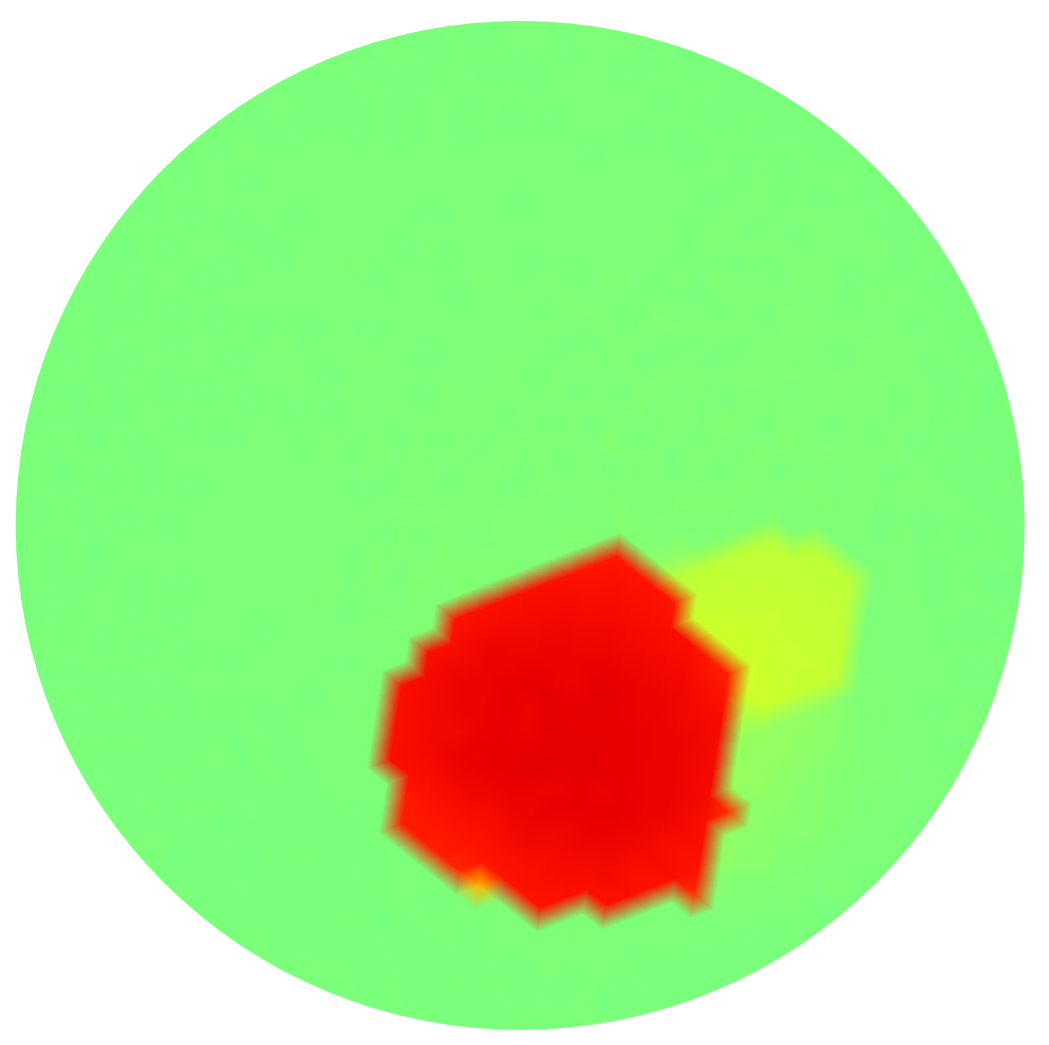}\\[-2pt]
  {\footnotesize (0.060, 5.67\%)}
\end{minipage}
\\[6pt]

$\lambda_t$,  (AWGN, $\sigma_2$)
&
\begin{minipage}[t]{0.13\textwidth}\centering\vspace{0pt}
  \includegraphics[width=\linewidth]{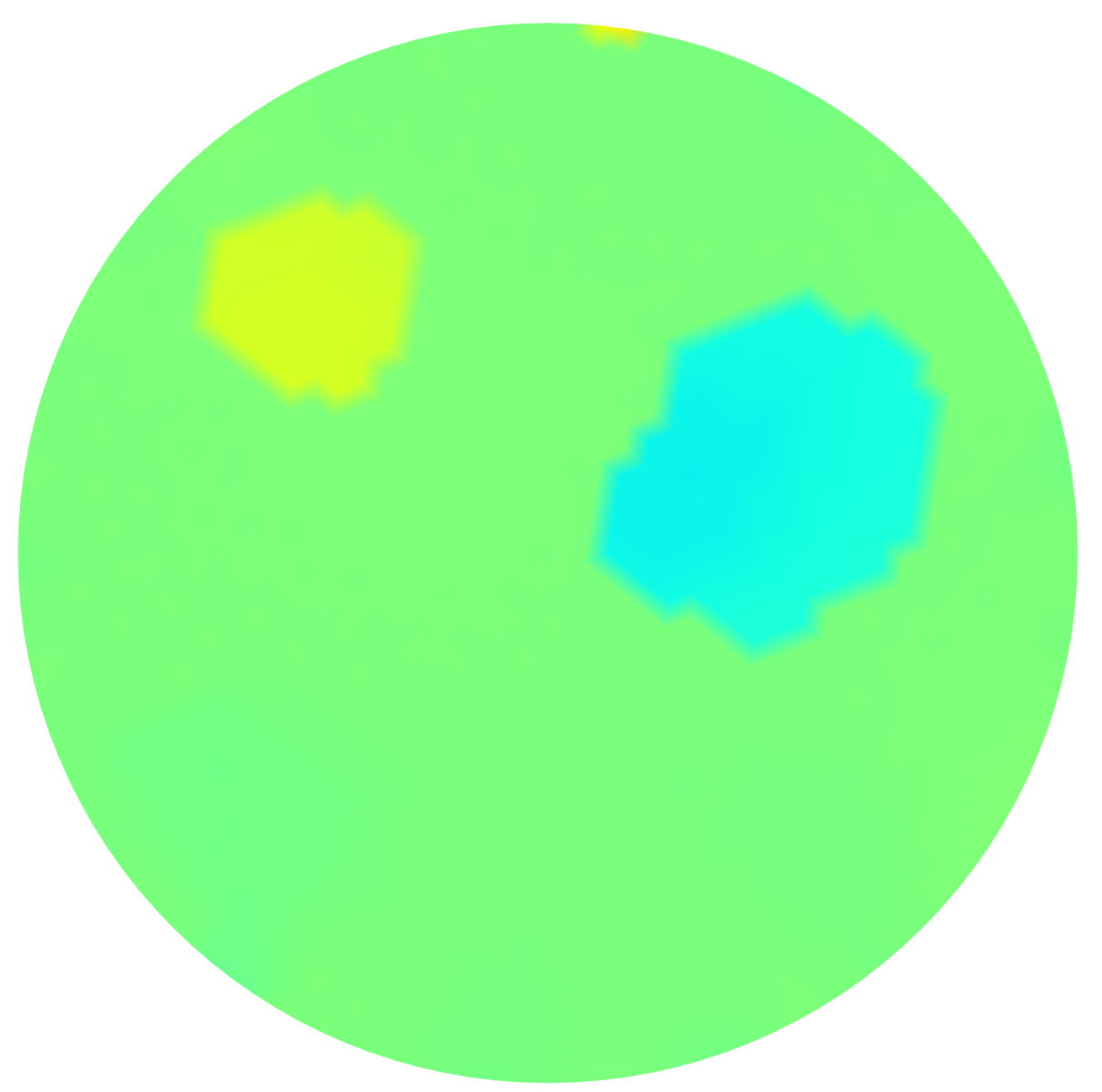}\\[-2pt]
  {\footnotesize (0.045, 4.48\%)}
\end{minipage}
&
\begin{minipage}[t]{0.13\textwidth}\centering\vspace{0pt}
  \includegraphics[width=\linewidth]{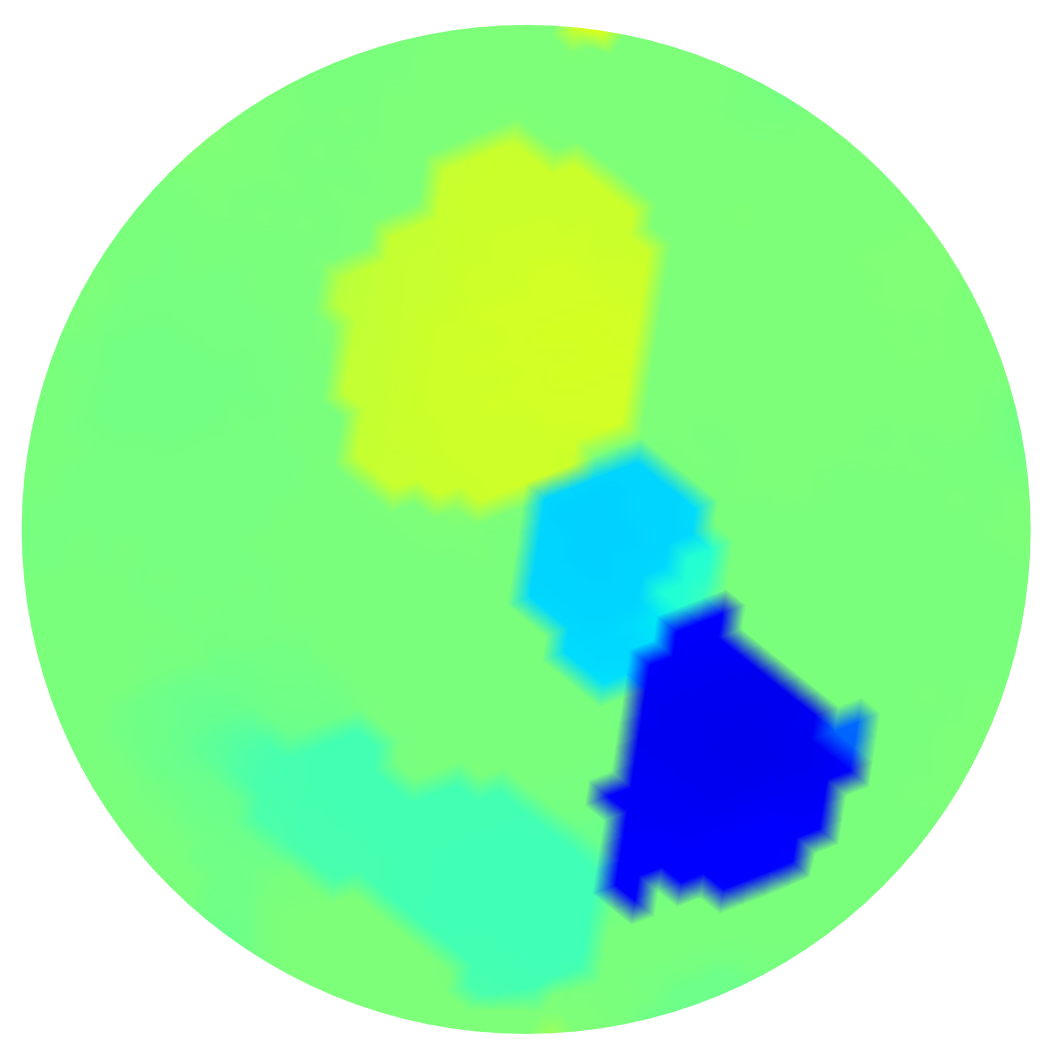}\\[-2pt]
  {\footnotesize (0.087, 8.90\%)}
\end{minipage}
&
\begin{minipage}[t]{0.13\textwidth}\centering\vspace{0pt}
  \includegraphics[width=\linewidth]{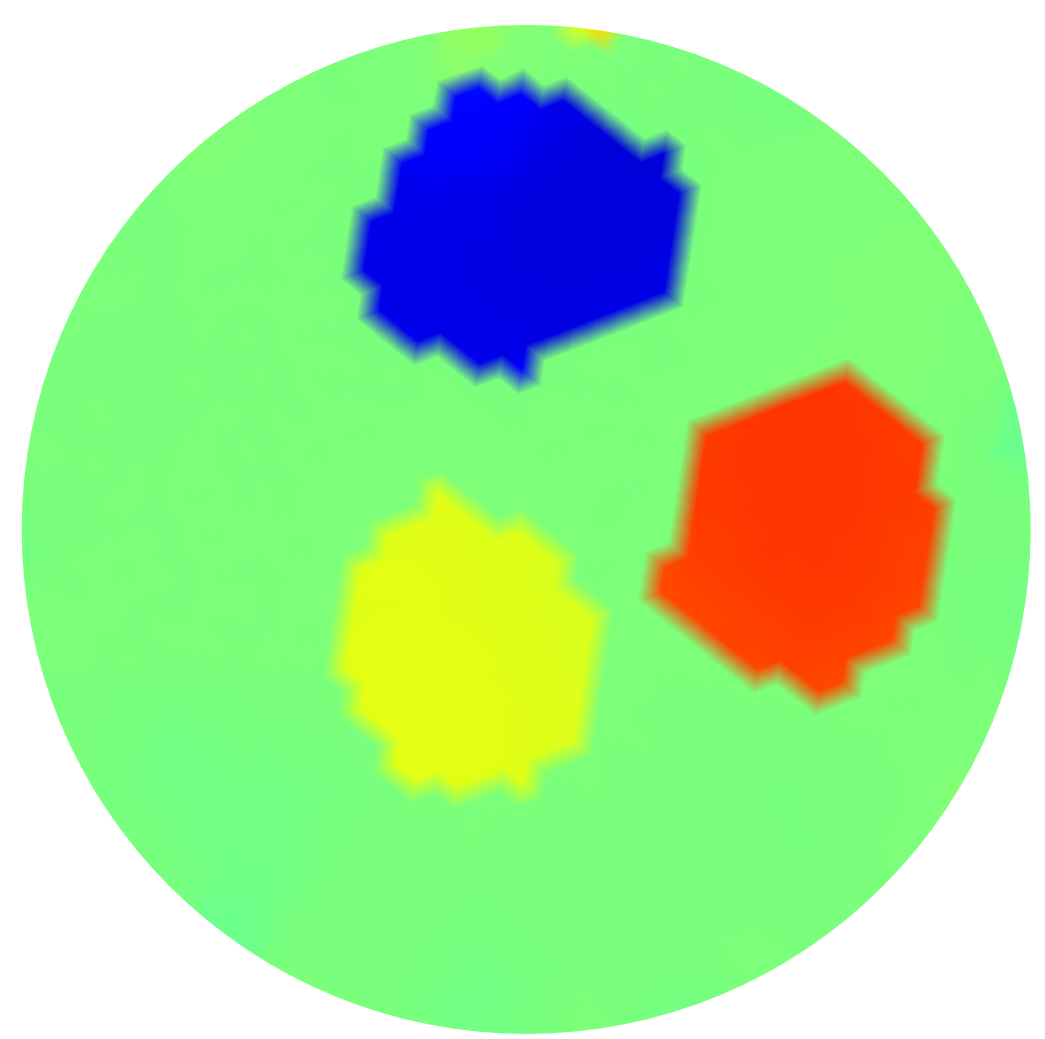}\\[-2pt]
  {\footnotesize (0.127, 12.38\%)}
\end{minipage}
&
\begin{minipage}[t]{0.13\textwidth}\centering\vspace{0pt}
  \includegraphics[width=\linewidth]{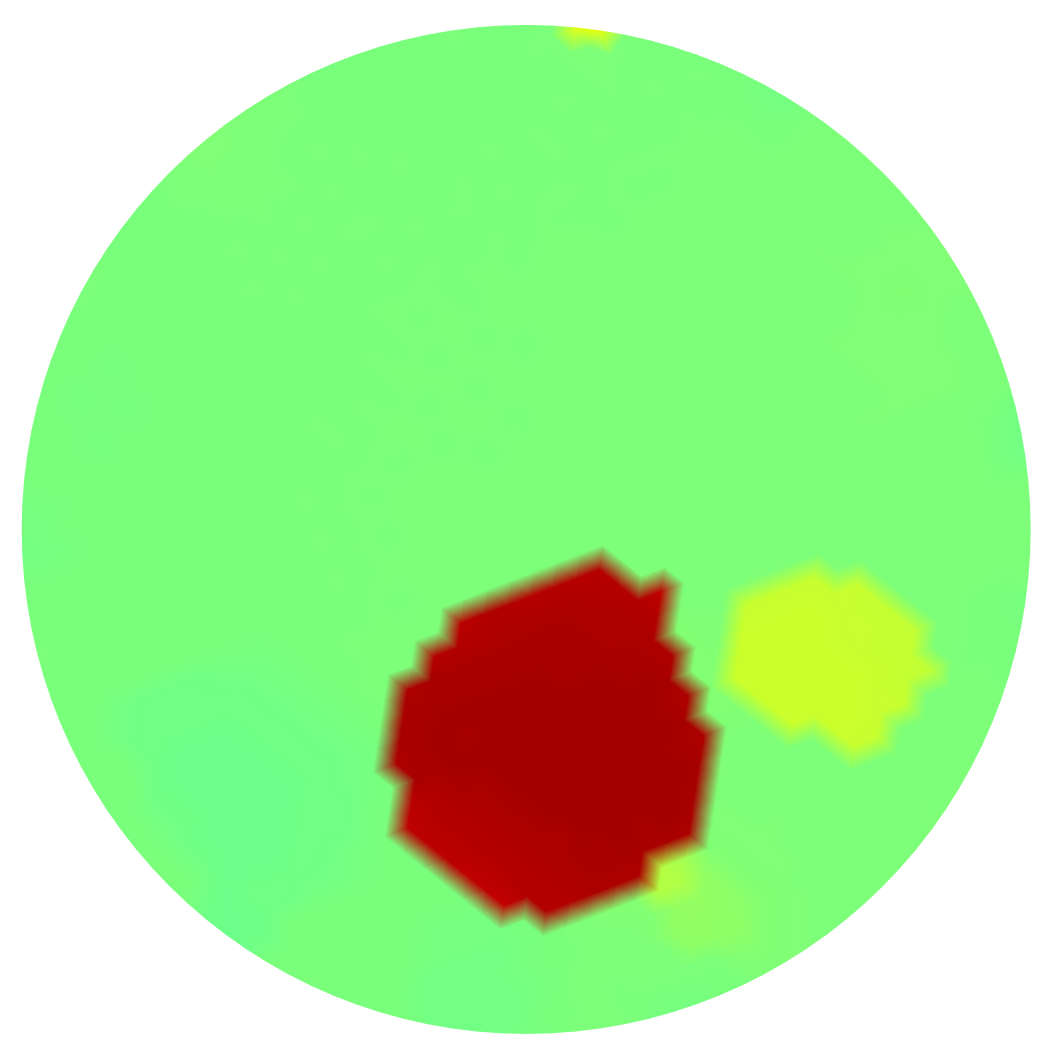}\\[-2pt]
  {\footnotesize (0.083, 7.81\%)}
\end{minipage}
\\[6pt]
\hline

$\lambda_t$,  (AWLN, $\sigma_1$)
&
\begin{minipage}[t]{0.13\textwidth}\centering\vspace{0pt}
  \includegraphics[width=\linewidth]{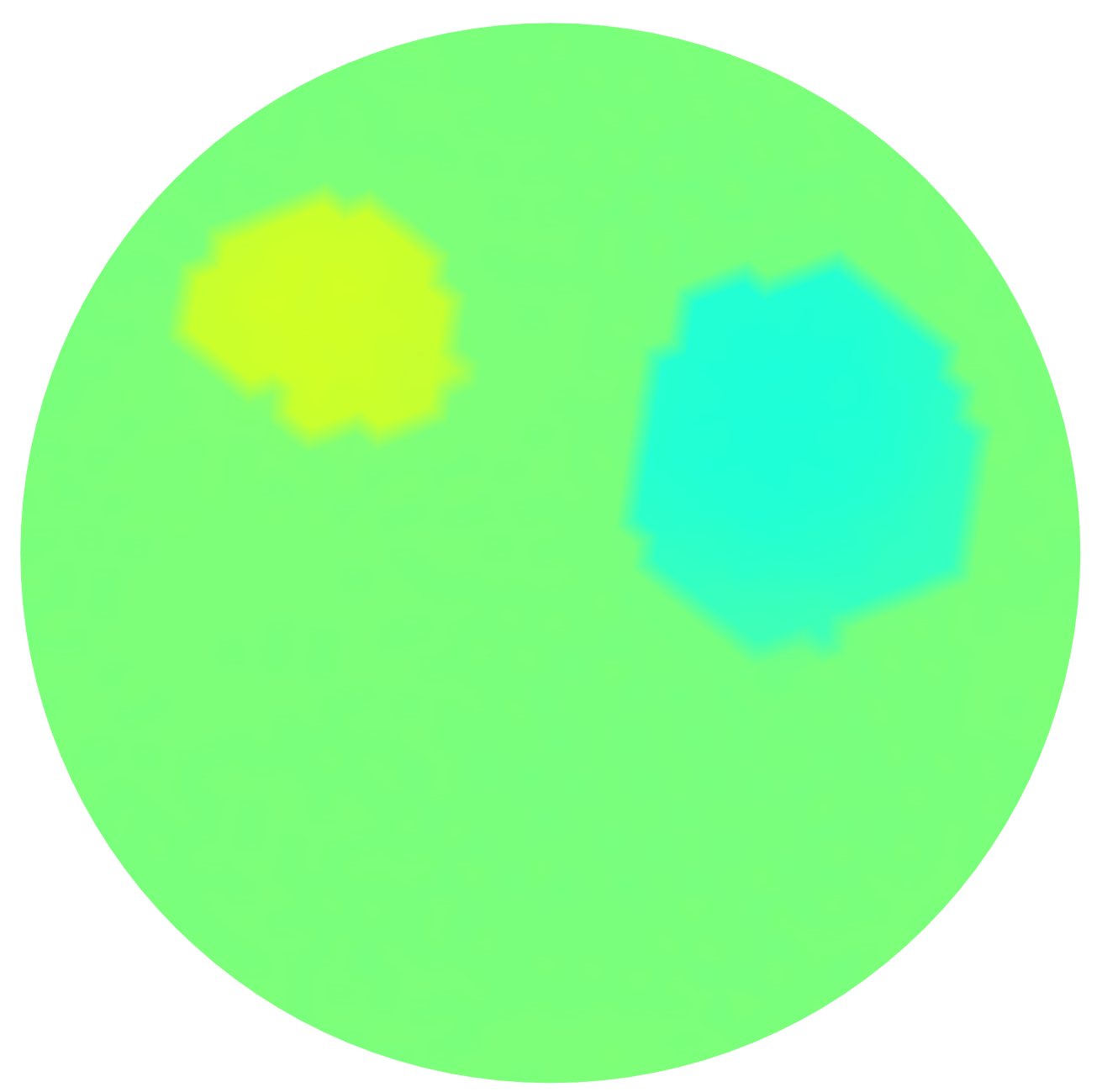}\\[-2pt]
  {\footnotesize (0.038, 3.84\%)}
\end{minipage}
&
\begin{minipage}[t]{0.13\textwidth}\centering\vspace{0pt}
  \includegraphics[width=\linewidth]{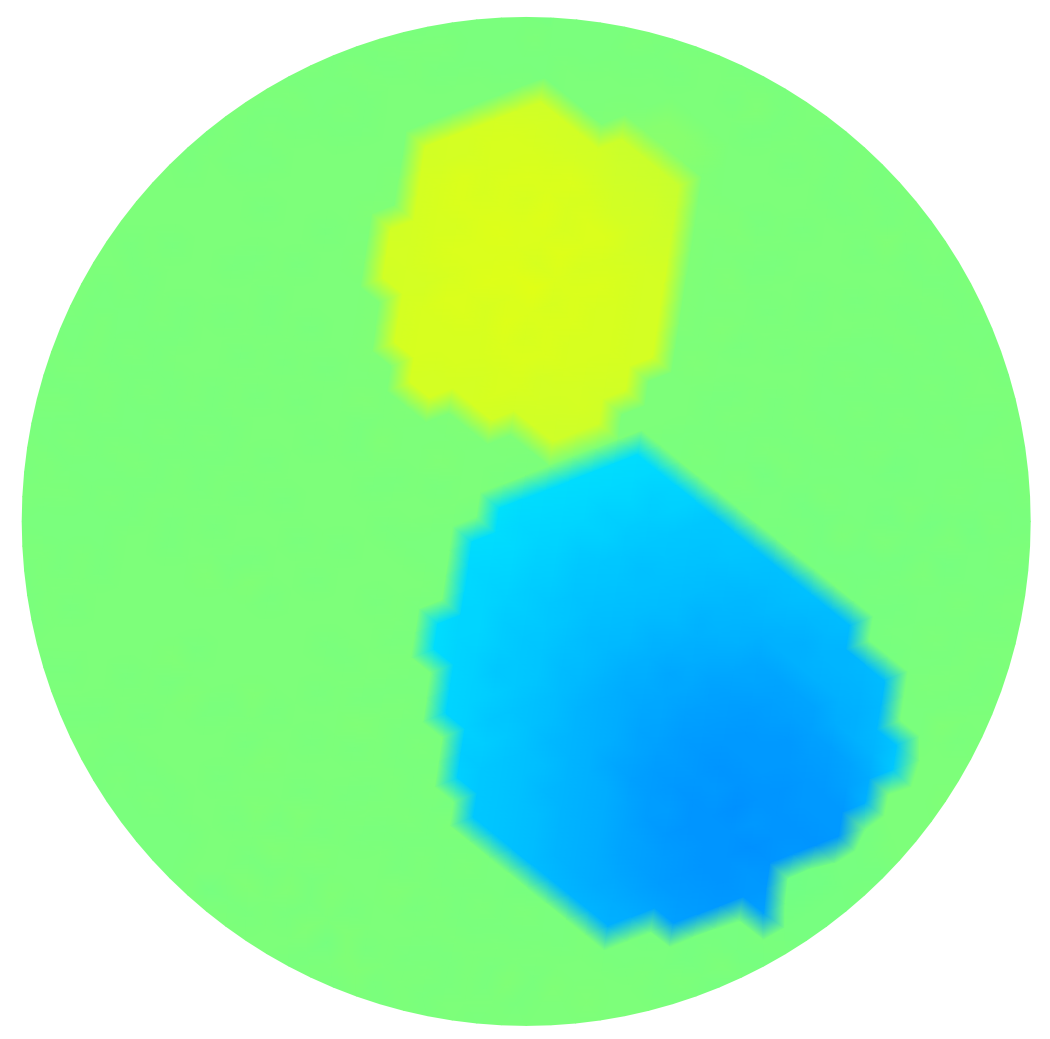}\\[-2pt]
  {\footnotesize (0.051, 5.18\%)}
\end{minipage}
&
\begin{minipage}[t]{0.13\textwidth}\centering\vspace{0pt}
  \includegraphics[width=\linewidth]{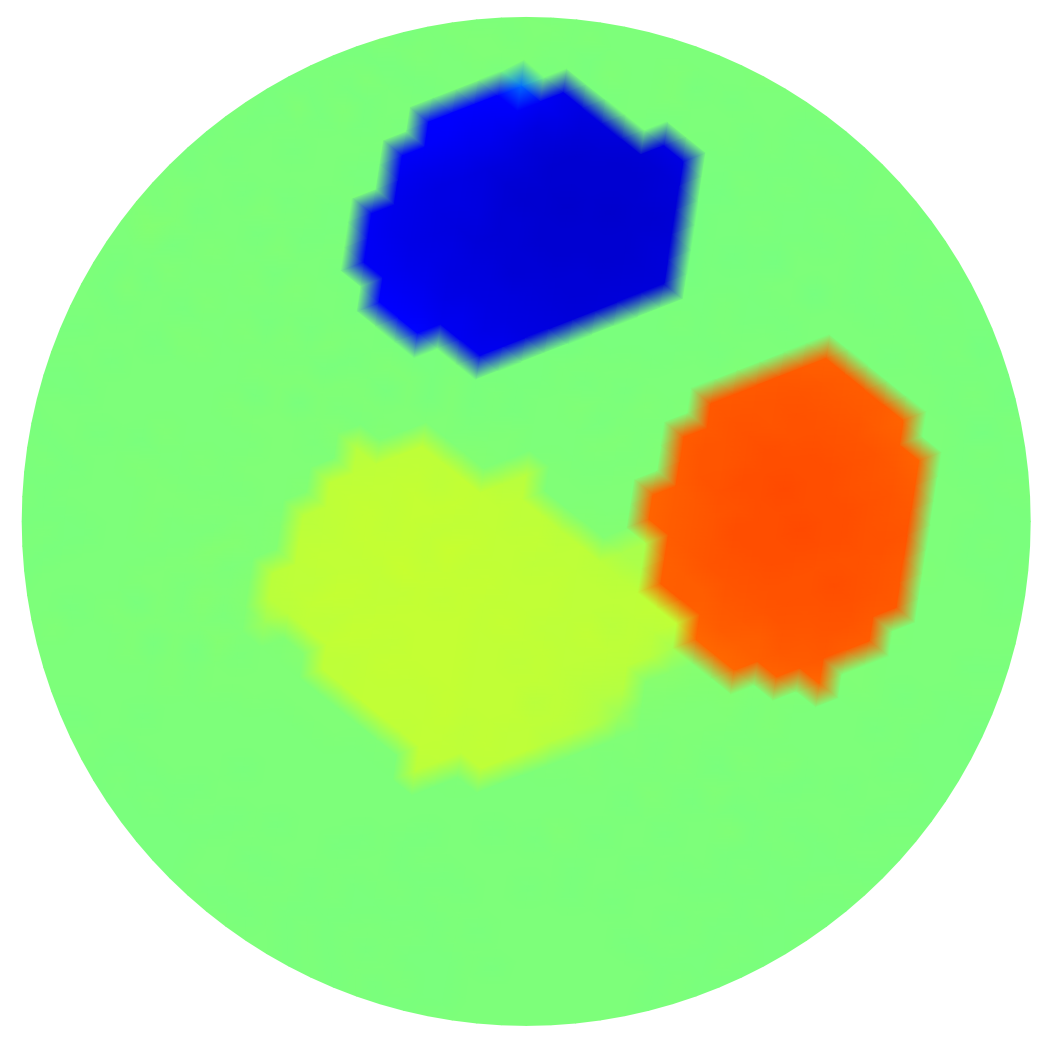}\\[-2pt]
  {\footnotesize (0.116, 11.28\%)}
\end{minipage}
&
\begin{minipage}[t]{0.13\textwidth}\centering\vspace{0pt}
  \includegraphics[width=\linewidth]{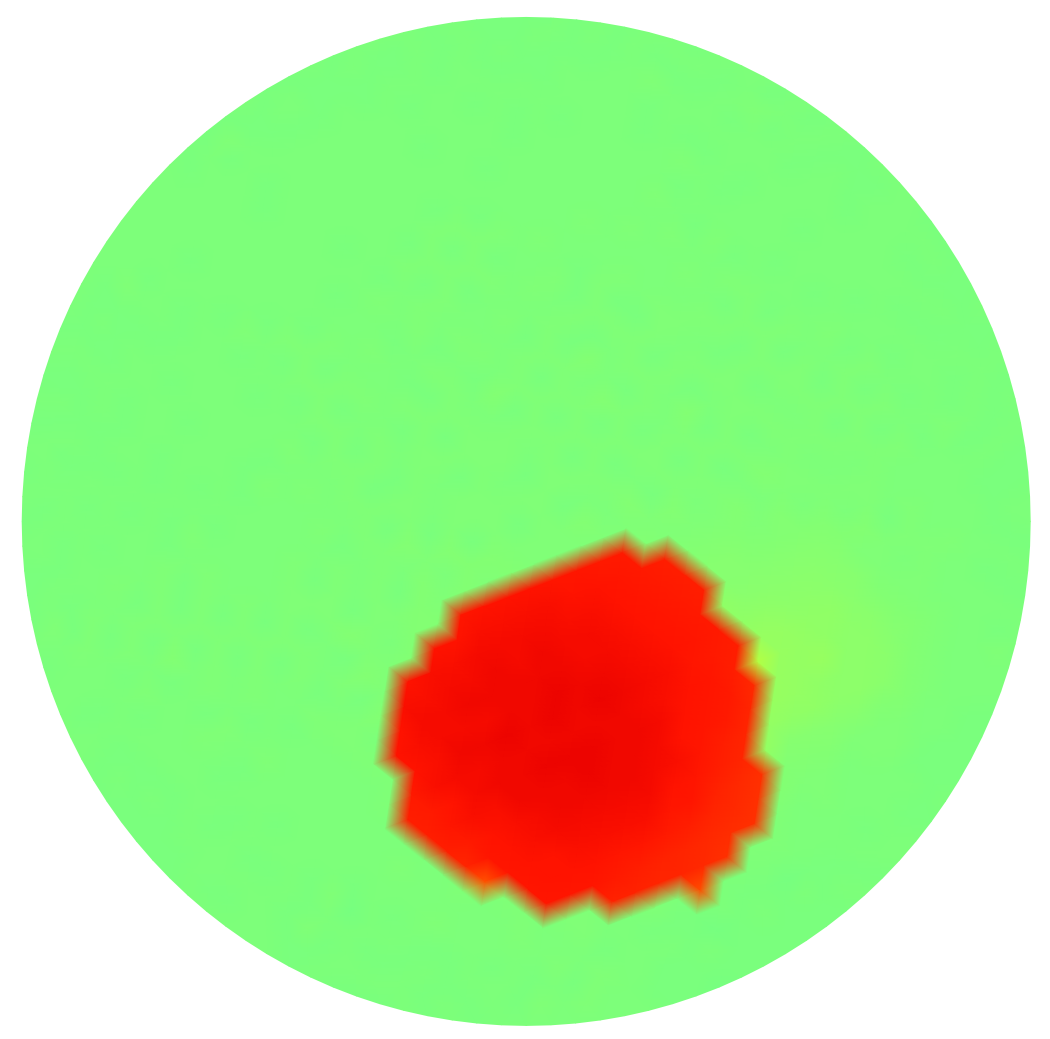}\\[-2pt]
  {\footnotesize (0.053, 4.98\%)}
\end{minipage}
\\[6pt]
$\lambda_t$,  (AWLN, $\sigma_2$)
&
\begin{minipage}[t]{0.13\textwidth}\centering\vspace{0pt}
  \includegraphics[width=\linewidth]{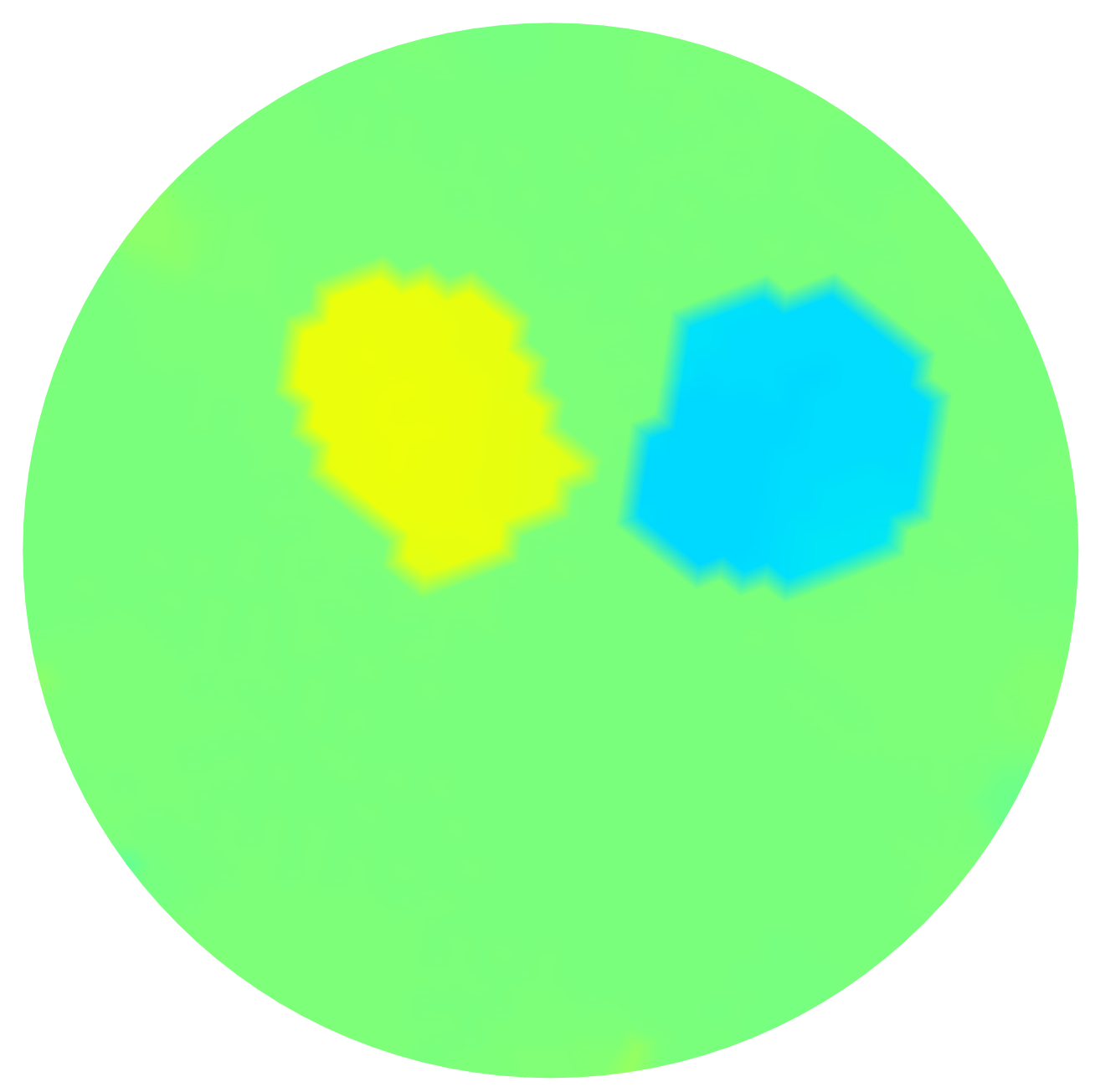}\\[-2pt]
  {\footnotesize (0.050, 5.05\%)}
\end{minipage}
&
\begin{minipage}[t]{0.13\textwidth}\centering\vspace{0pt}
  \includegraphics[width=\linewidth]{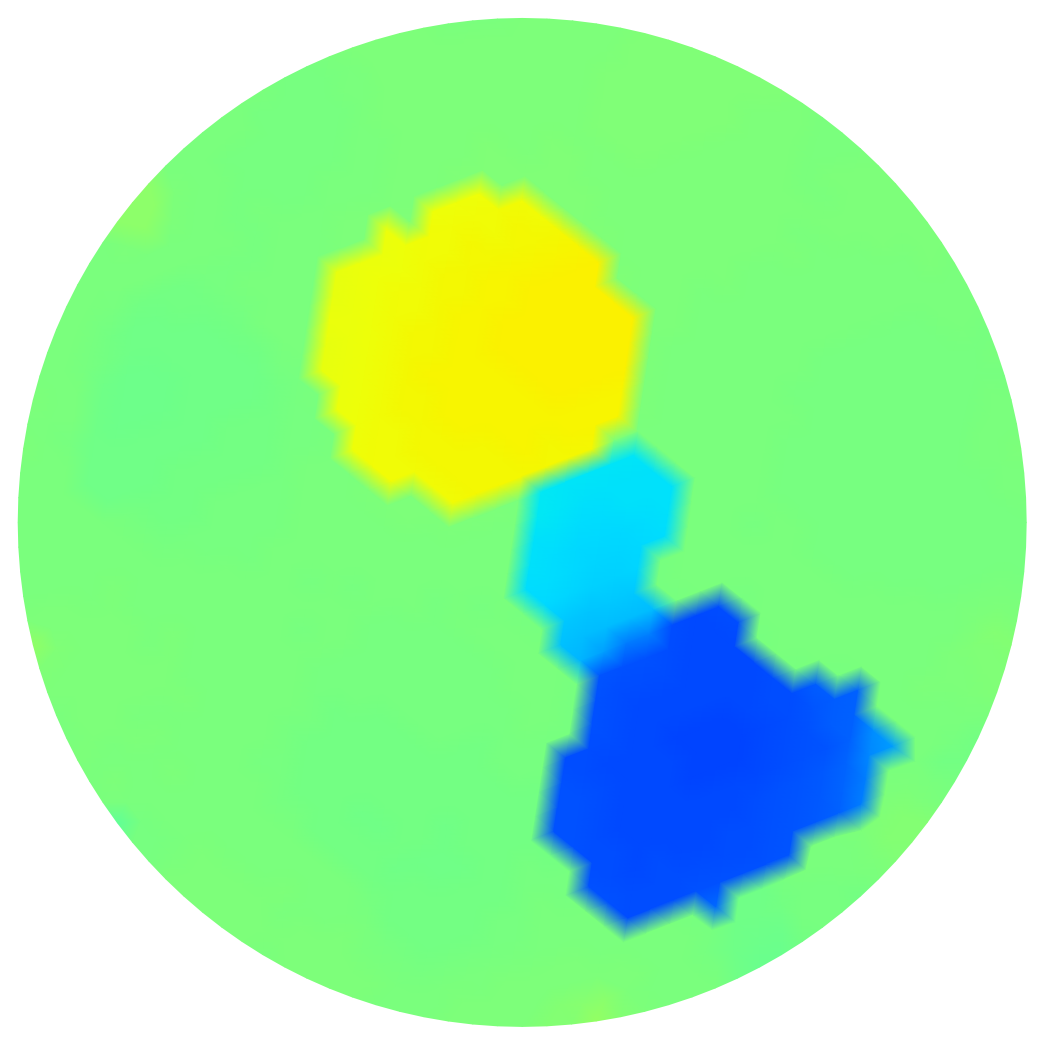}\\[-2pt]
  {\footnotesize (0.078, 7.95\%)}
\end{minipage}
&
\begin{minipage}[t]{0.13\textwidth}\centering\vspace{0pt}
  \includegraphics[width=\linewidth]{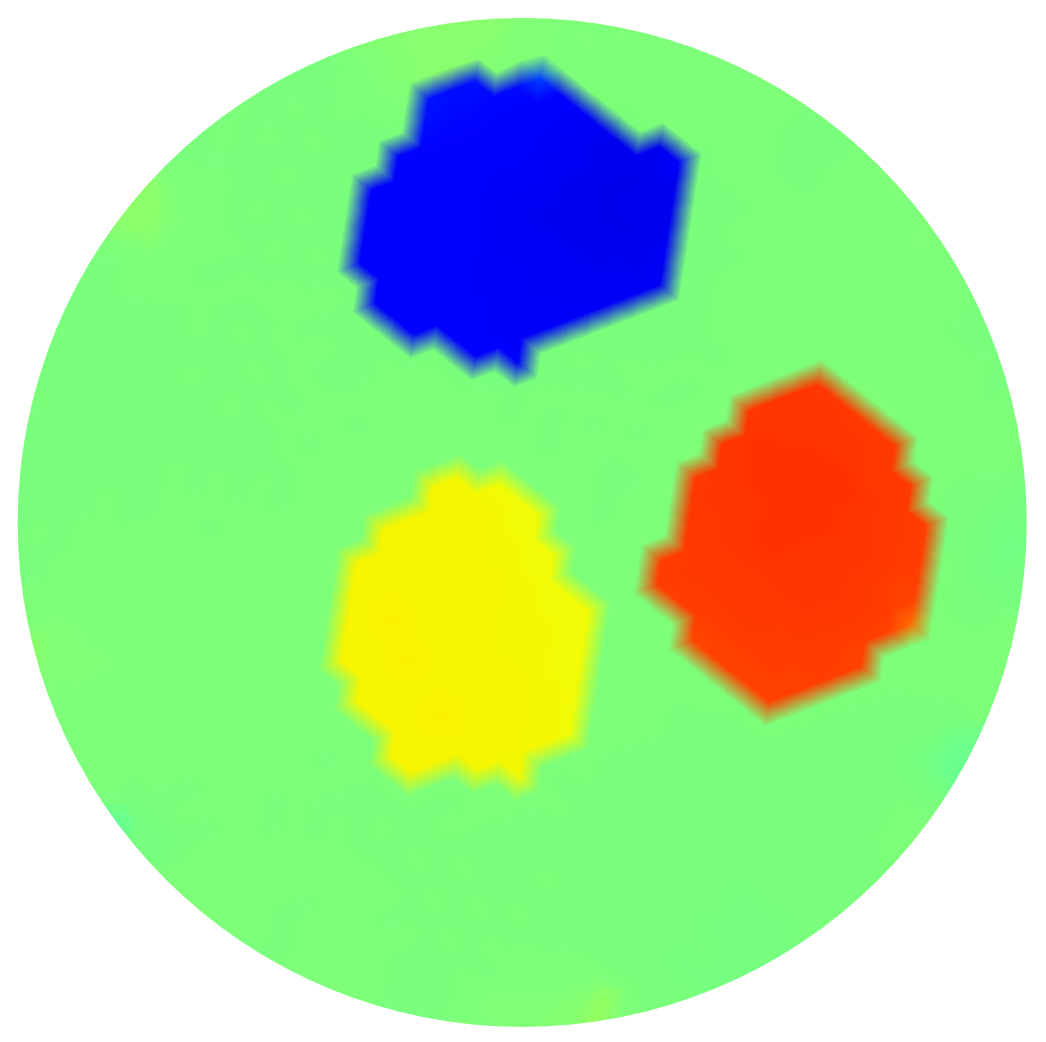}\\[-2pt]
  {\footnotesize (0.127, 12.36\%)}
\end{minipage}
&
\begin{minipage}[t]{0.13\textwidth}\centering\vspace{0pt}
  \includegraphics[width=\linewidth]{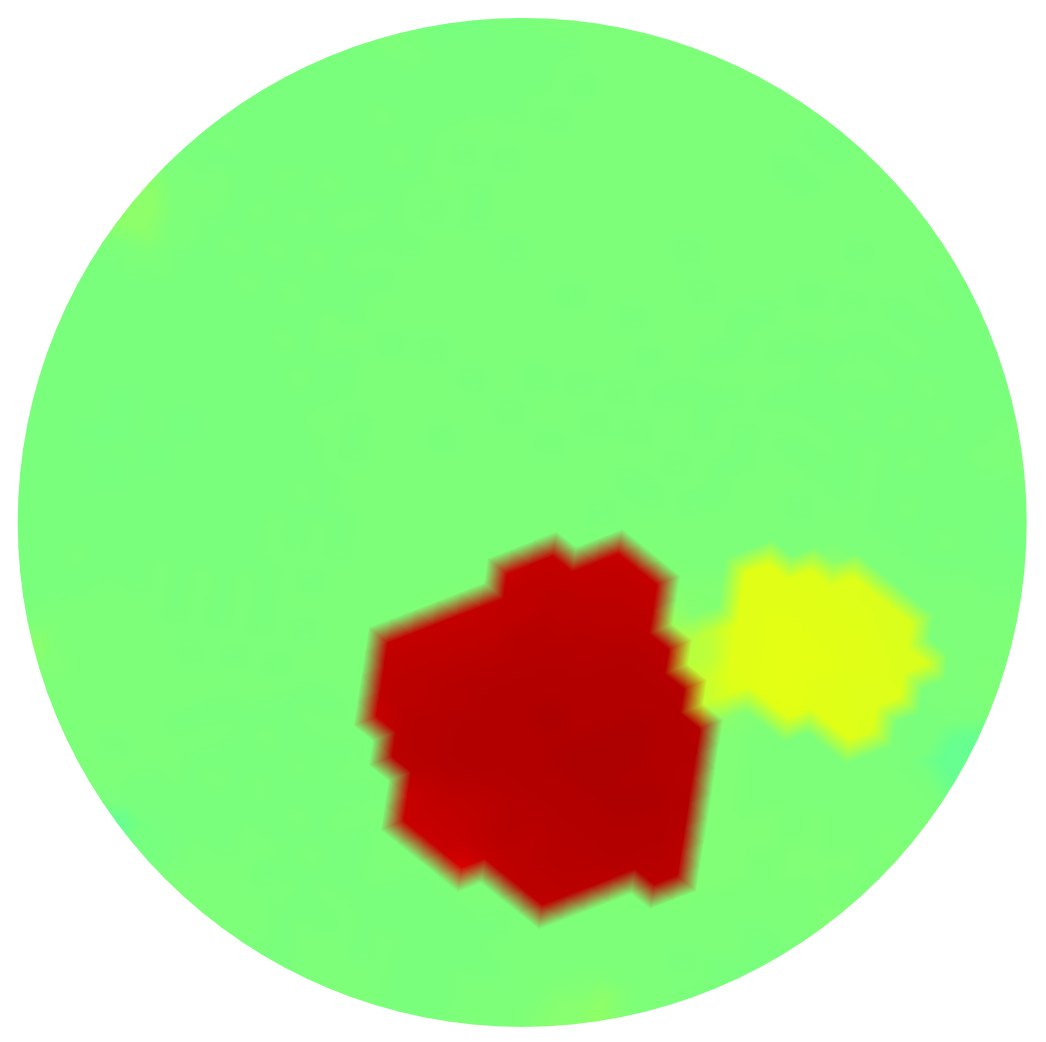}\\[-2pt]
  {\footnotesize (0.078, 7.30\%)}
\end{minipage}
\\[6pt]
\\
\end{tabular}

\caption{Example 2: Robustness to measurement noise.  
Reconstructions obtained with adaptive parameters $\lambda_t$ under AWGN and AWLN  at noise levels $\sigma_1$ and $\sigma_2$. The numbers below each reconstruction denote (RMSE, Rel Err). The clamping ranges are $[0.15,\,0.5]$ for noise level $\sigma_1$ and $[0.2,\,0.8]$ for noise level $\sigma_2$.}

\label{fig:noise_robustness}
\end{figure}

\subsection{Example 3: Out-of-Distribution Generalization and Real Dataset}\label{sub:ex3}
To evaluate the out-of-distribution generalization capability of DDIM-RDPS, we perform experiments on test samples with inclusion geometries and inclusion numbers that differ significantly from those encountered during training. This experiment is designed to examine whether the learned diffusion prior remains effective in recovering irregular and non-circular structures that are absent from the training distribution.
Figure~\ref{ood} summarizes three representative out-of-distribution settings: columns (a)--(b) show models trained on circle-shaped inclusions (DATASET1) and tested on blob-shaped inclusions, columns (c)--(d) show models trained on blob-shaped inclusions (DATASET3) and tested on horseshoe-shaped inclusions, and columns (e)--(f) show models trained on samples from DATASET4 with a random number of 1--3 blobs and tested on samples containing 4 blobs. In each column, the first row gives the ground-truth conductivity, and the second row shows the reconstruction generated by our method, together with the corresponding quantitative error measures. These results indicate that, despite the distribution mismatch, the proposed method remains capable of capturing the main geometric and contrast features of the targets.

We further evaluated the proposed method on the open 2D EIT dataset~\cite{hauptmann2017open}. The experimental setup consists of a circular tank filled with a conductive solution and equipped with 16 uniformly spaced electrodes along its boundary. Since the electrodes extend over the full depth of the tank, the measurements can be modeled using a two-dimensional EIT configuration.
To assess the ability of the method to generalize from synthetic to real data, we directly apply the model trained on DATASET1, which consists of the classical circular inclusions, to this real dataset without retraining. The data are acquired using an adjacent current-injection and adjacent voltage-measurement protocol. Specifically, for each sample, 16 adjacent current injection patterns are applied, and for each injection, 13 voltage measurements are recorded on adjacent electrode pairs, excluding the current-injecting electrodes. This yields a total of 208 voltage measurements per sample. These measurements are then used to reconstruct the conductivity distribution using both DDIM-RDPS, and the regularized Gauss-Newton (RGN-TV) method included as a baseline for comparison \cite{RGN2022,borsic2009vivo}.

Representative reconstructions are presented in Fig.~\ref{fig:recon_2d_EIT}. Overall, the proposed method yields reconstructions with higher fidelity and fewer artifacts than RGN-TV, demonstrating its superior performance on real experimental data.

\begin{figure}[H]
\centering
\small

\begin{adjustbox}{width=\linewidth}
\begin{tabular}{C{2.8cm} C{2.8cm}| C{2.8cm} C{2.8cm}| C{2.8cm} C{2.8cm}}

\imgonly{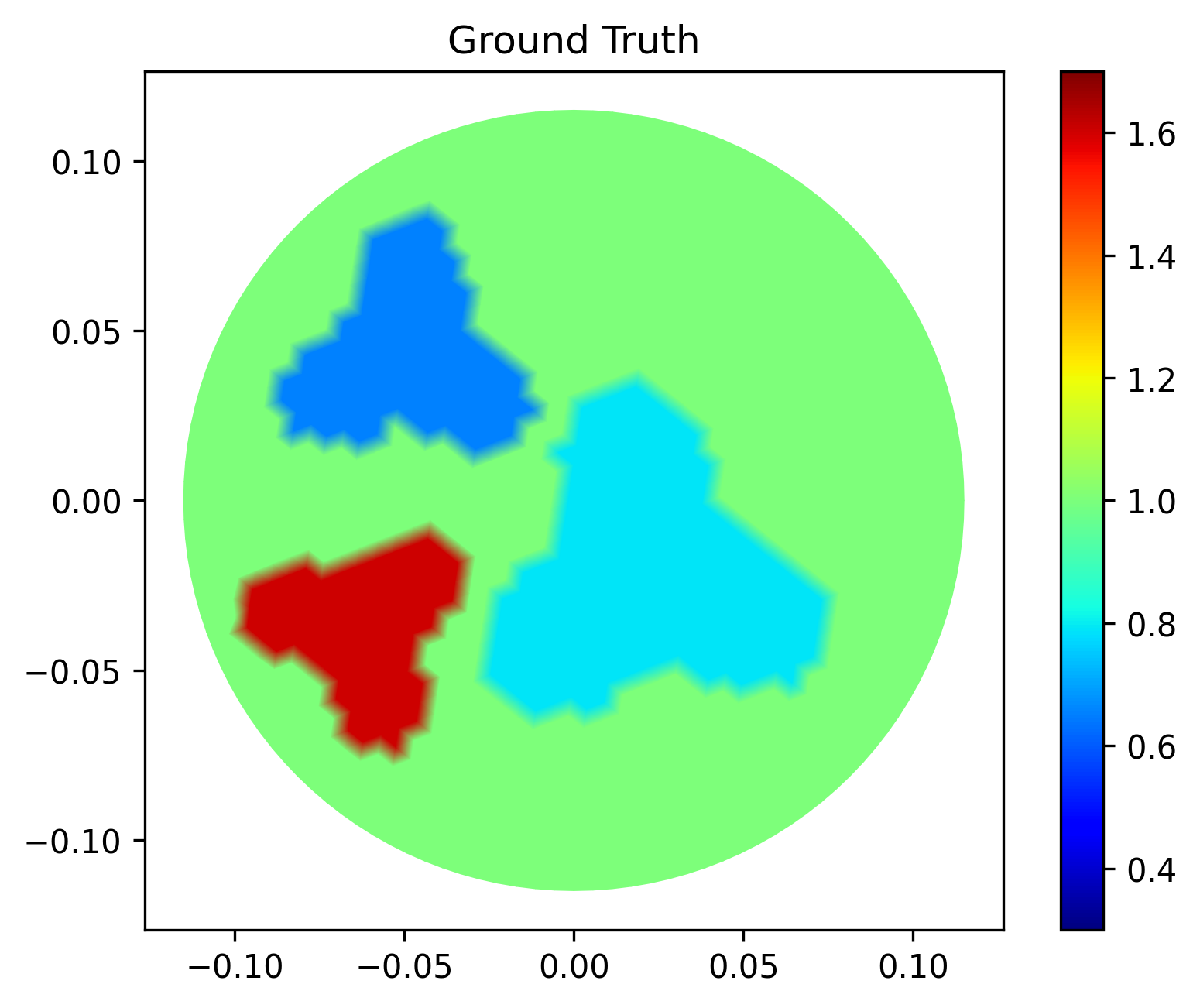} &
\imgonly{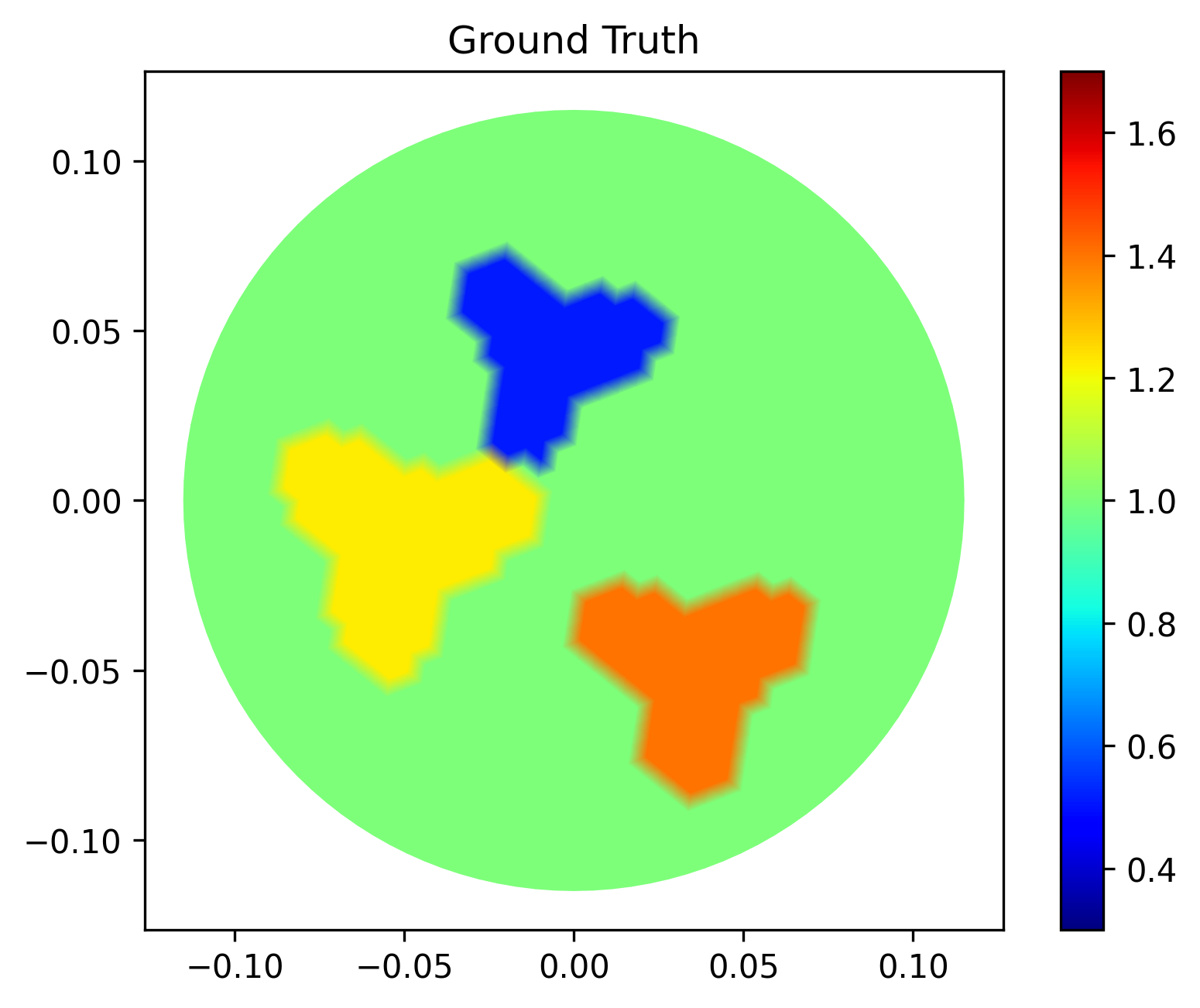} &
\imgonly{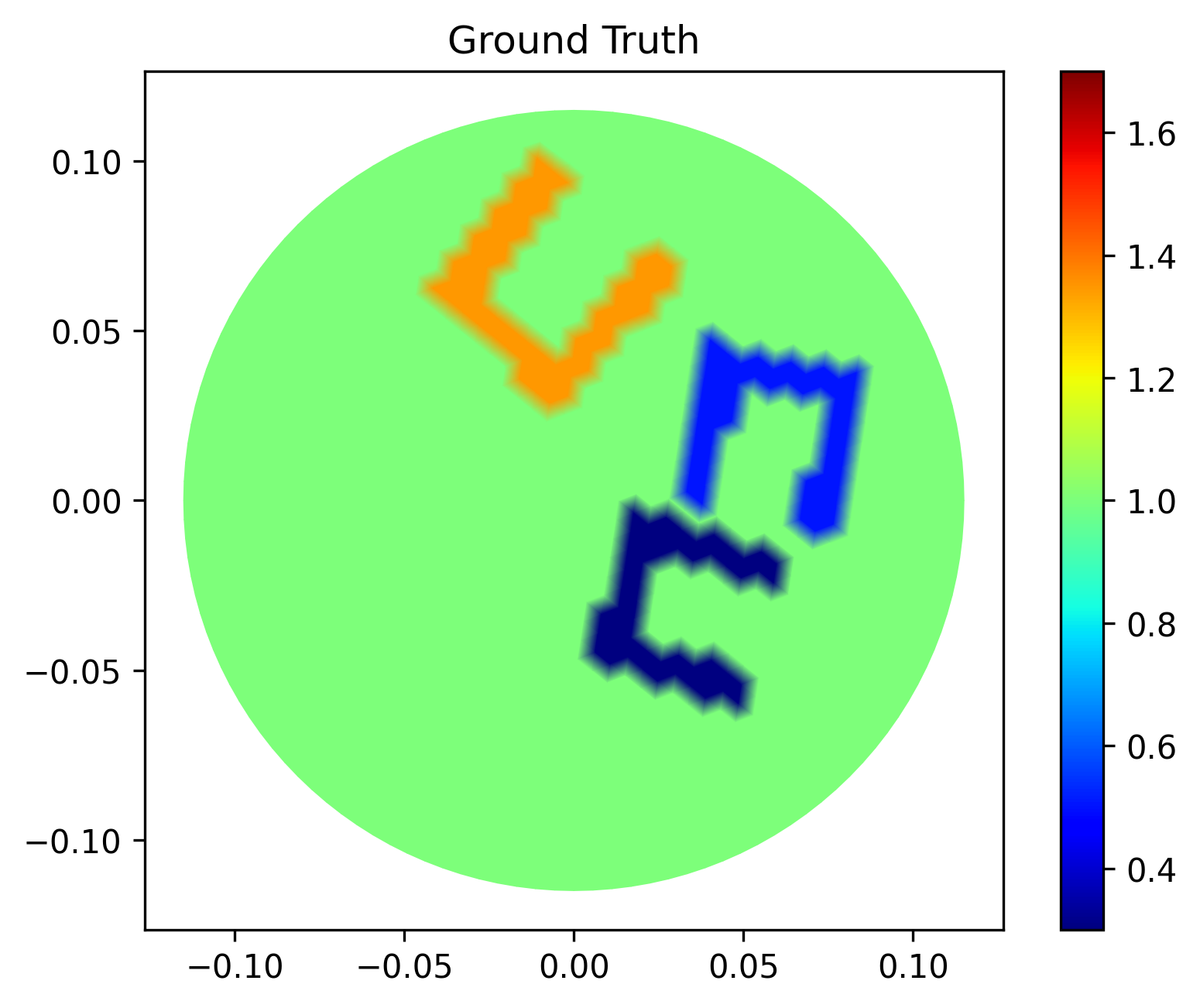} &
\imgonly{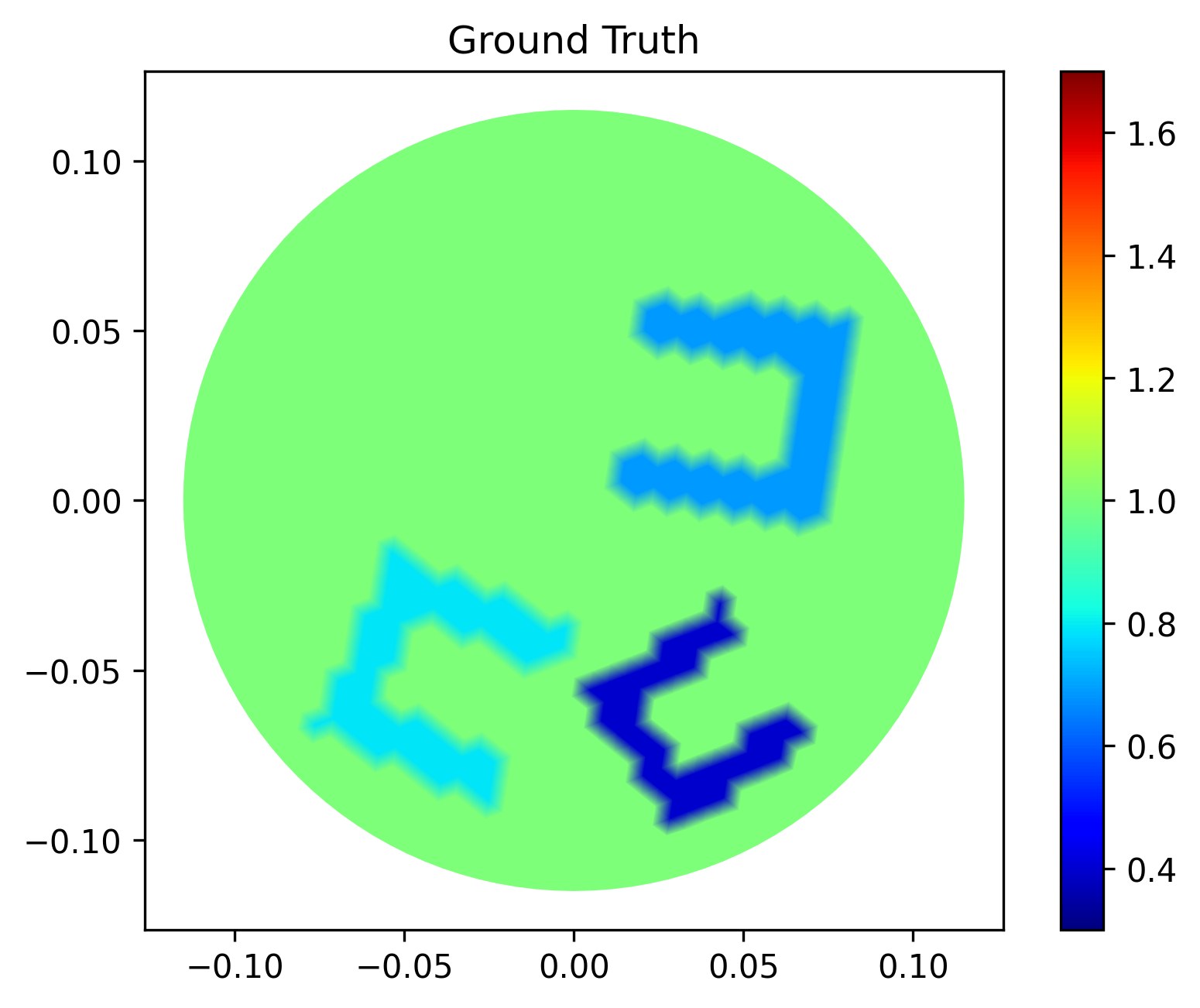} &
\imgonly{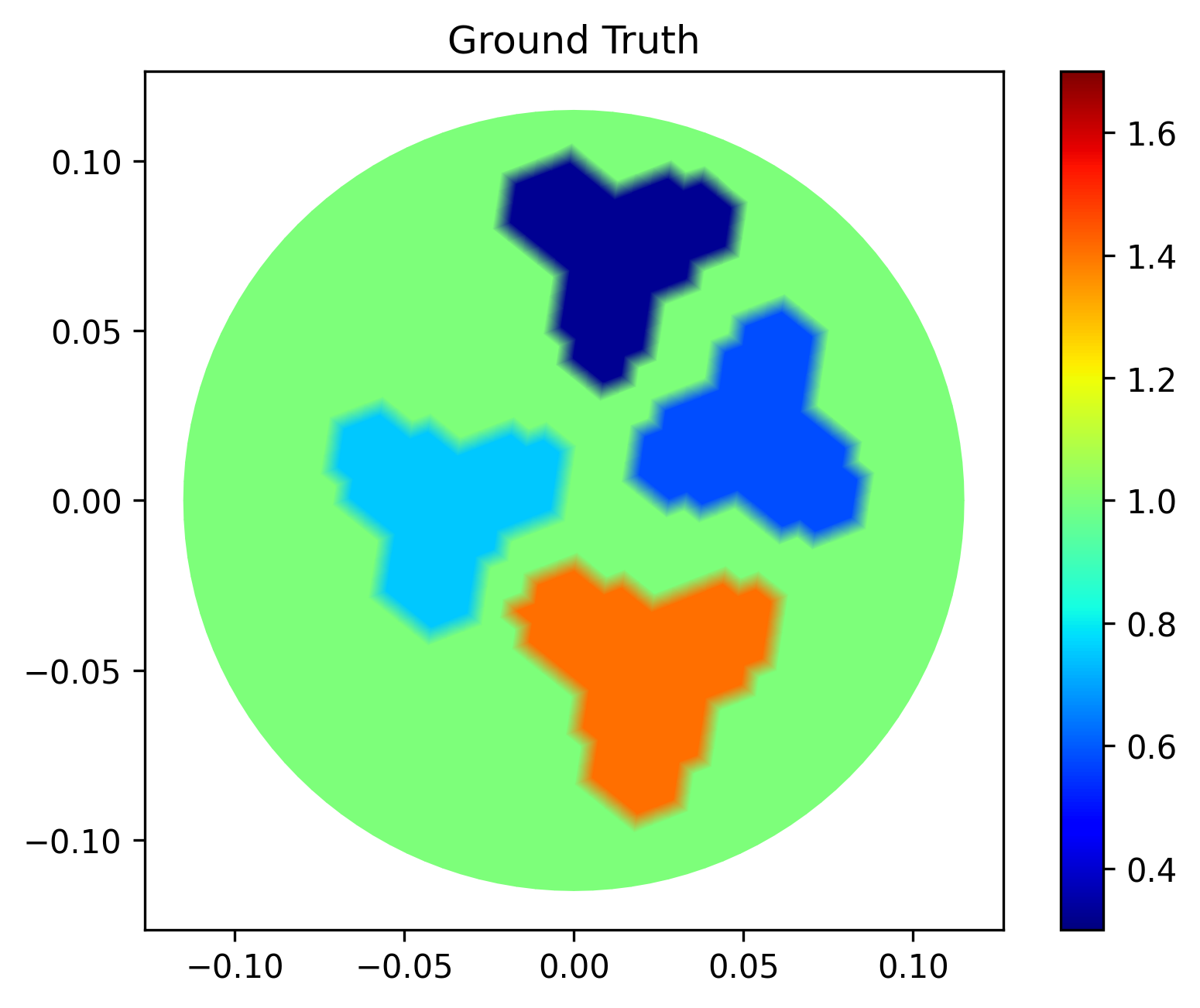} &
\imgonly{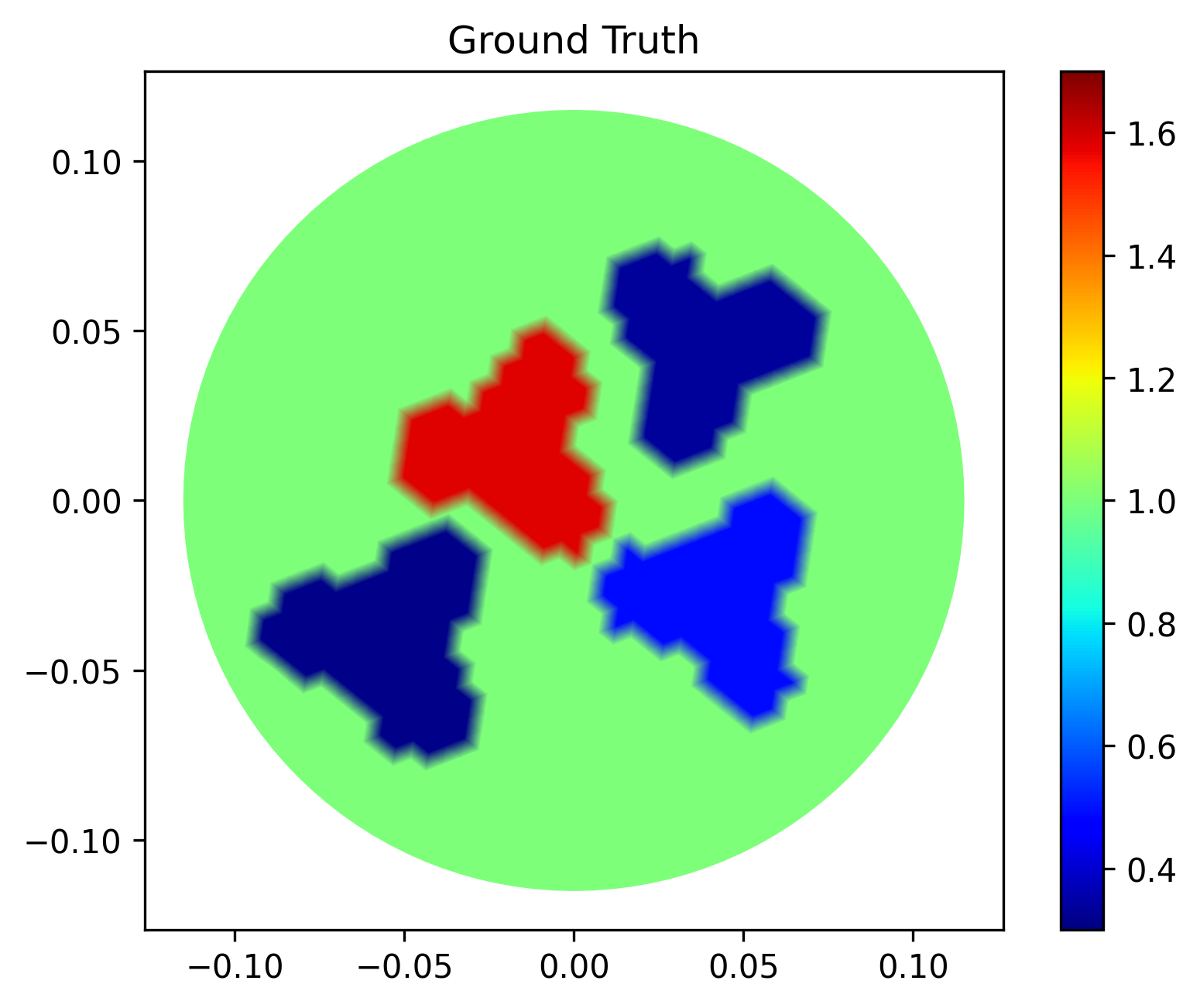} \\[10mm]

\imgmetric{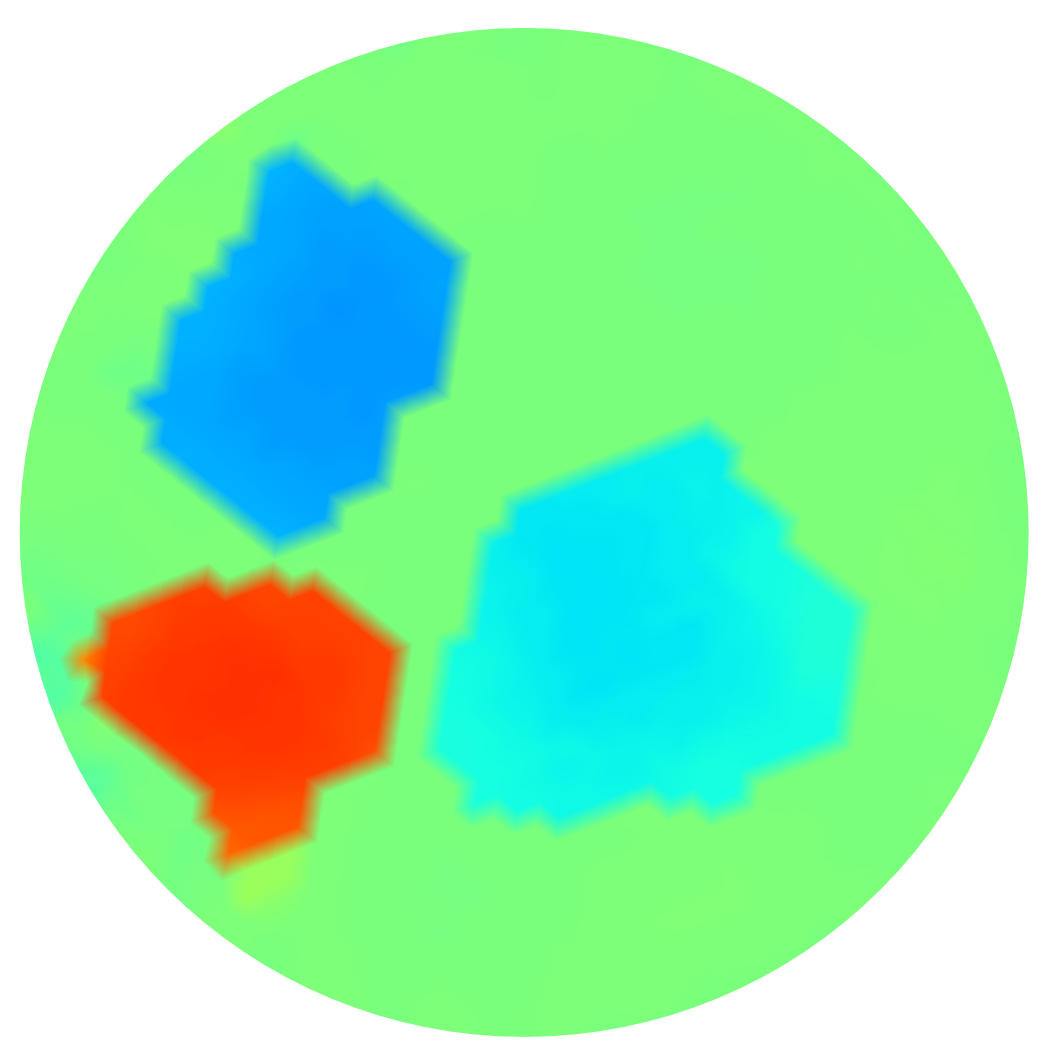}{(0.106, 10.60\%)} &
\imgmetric{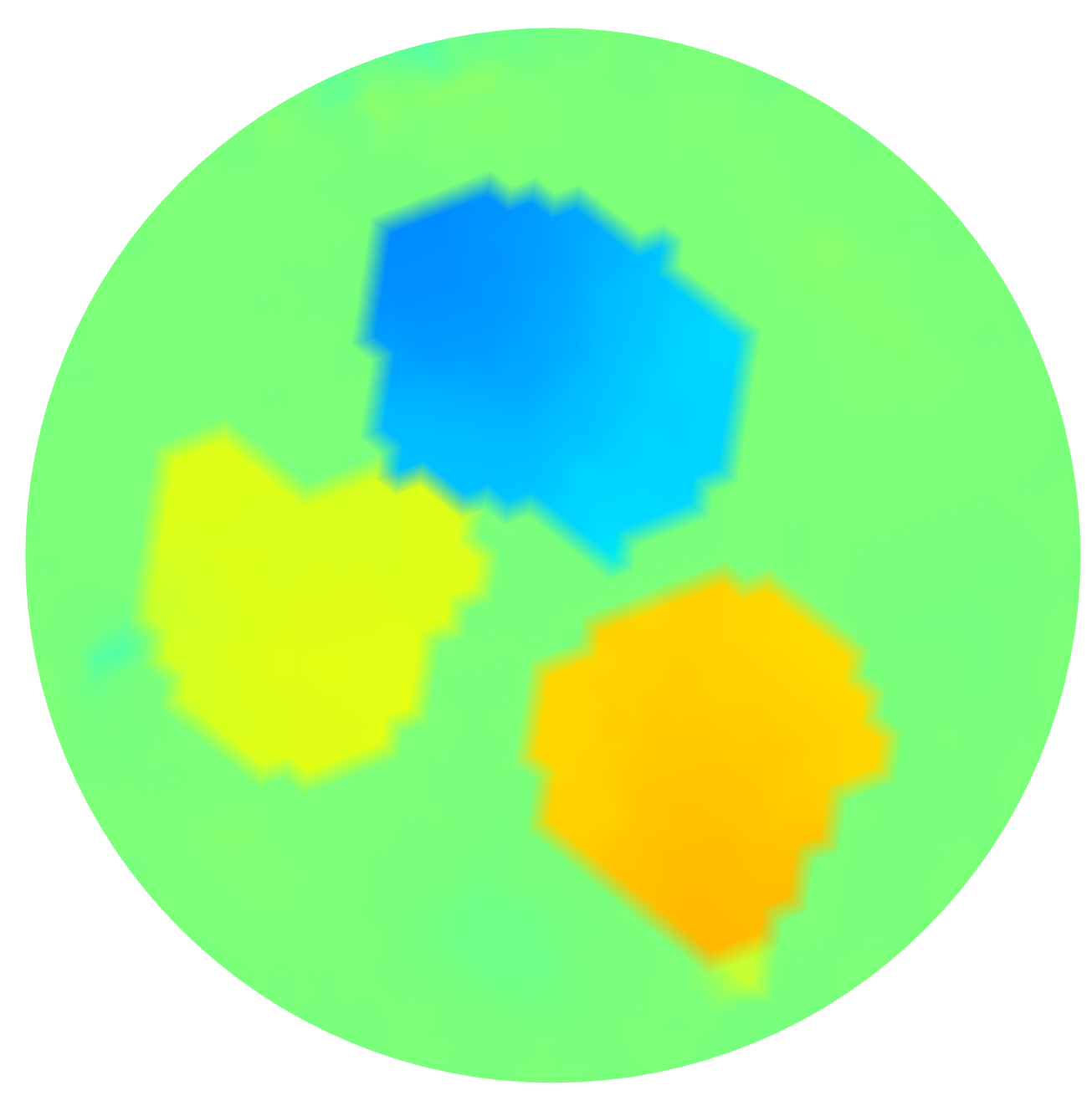}{(0.105, 10.20\%)} &
\imgmetric{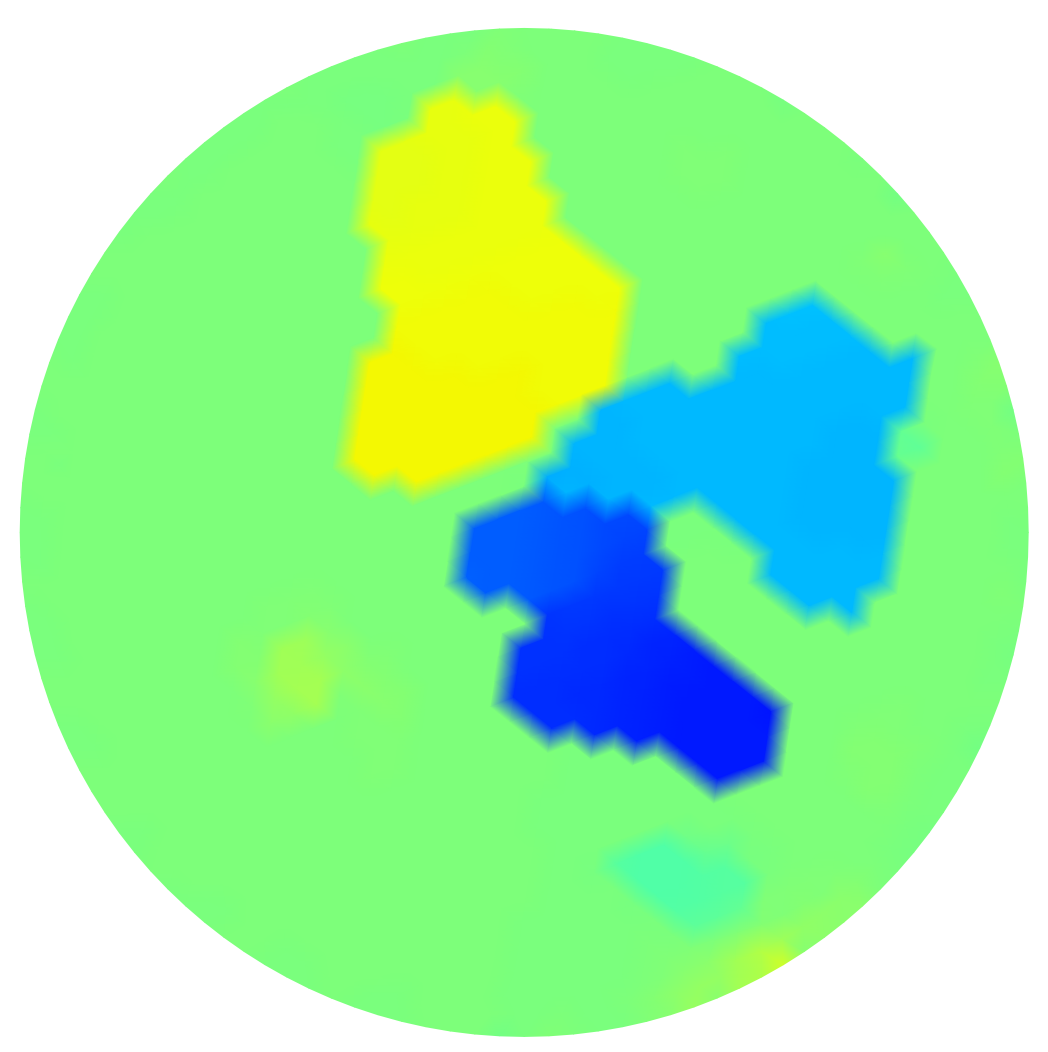}{(0.160, 16.22\%)} &
\imgmetric{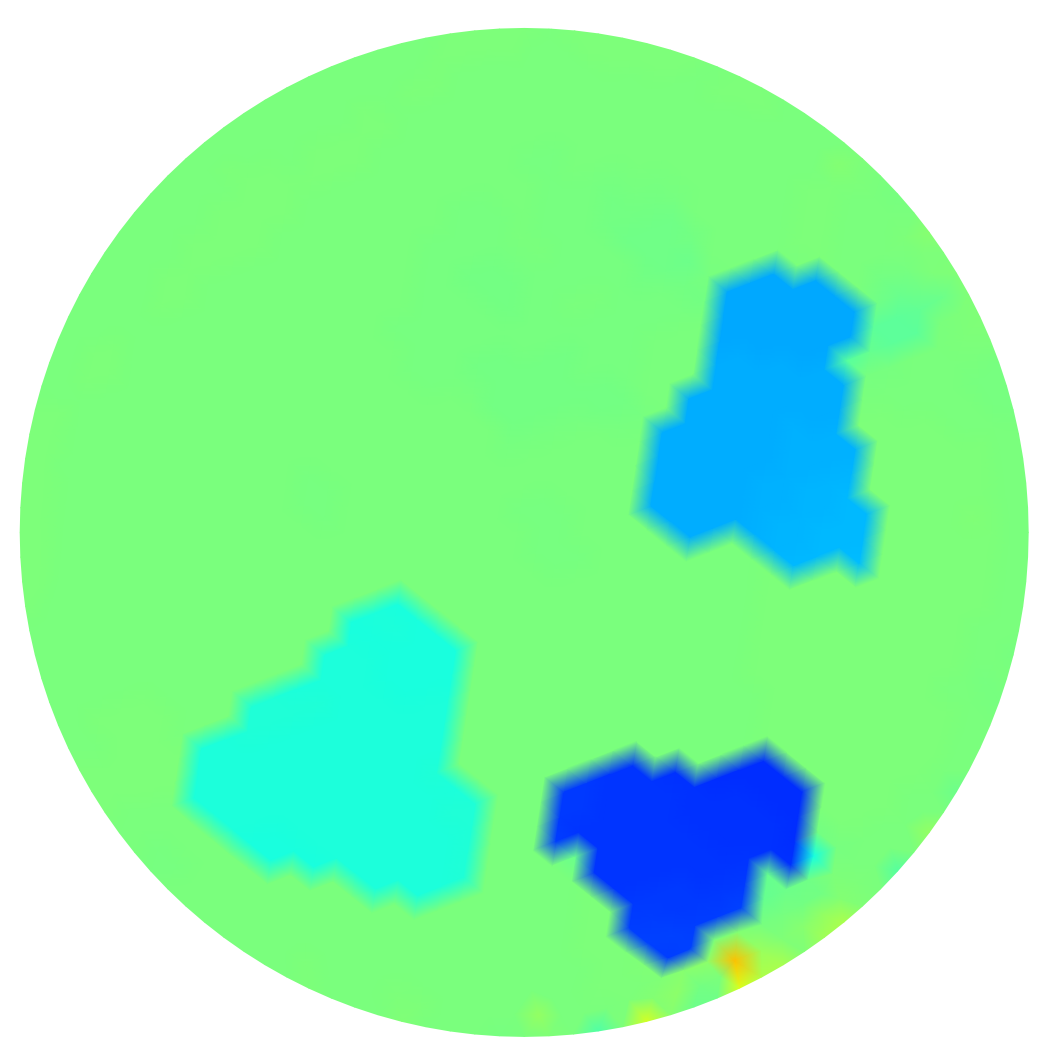}{(0.120, 12.47\%)} &
\imgmetric{results_1602/Out-of-Distribution/sampling_with_metrics_5_77_0.02.png}{(0.123, 12.62\%)} &
\imgmetric{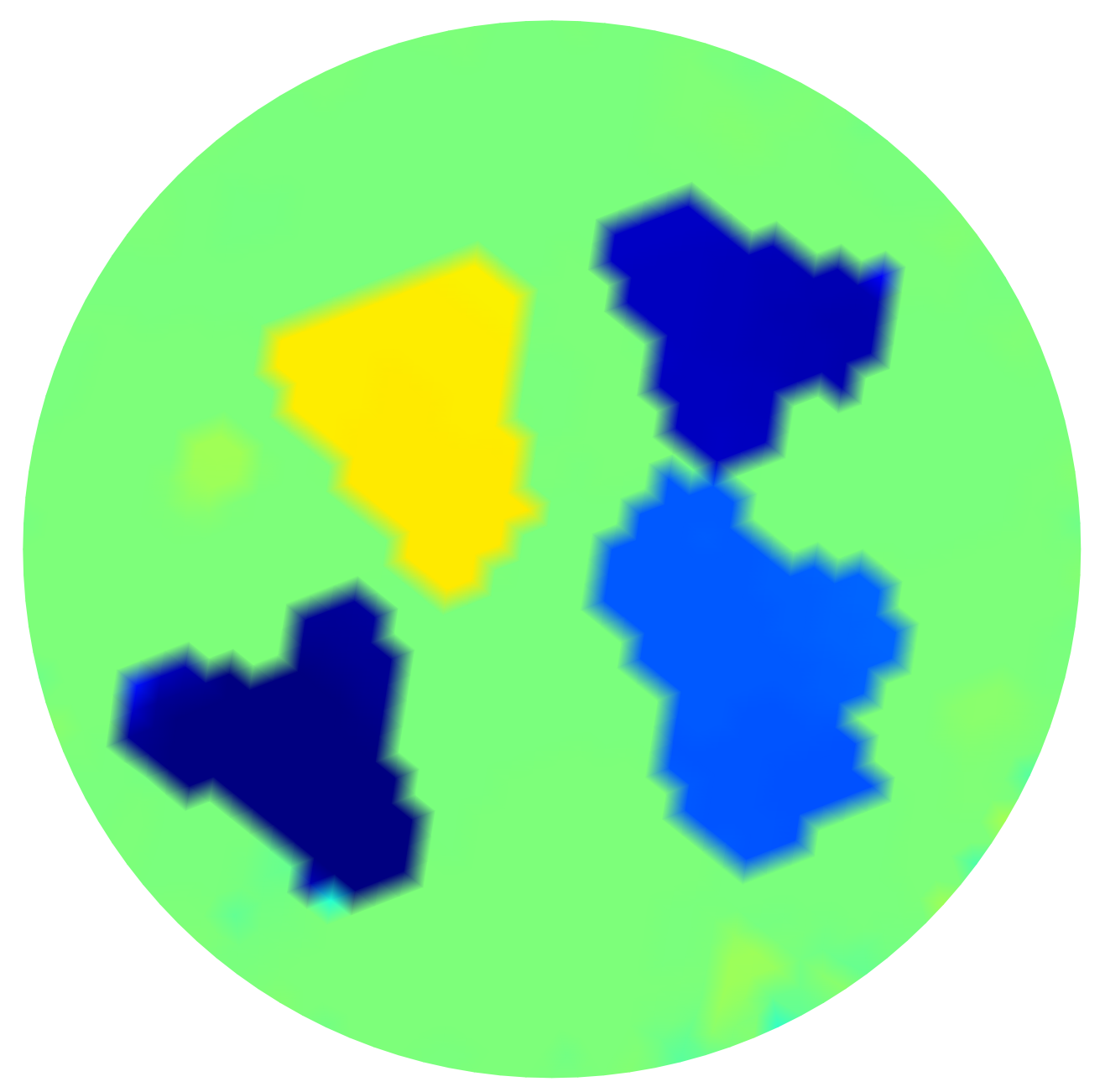}{(0.171, 17.74\%)} \\[2mm]

(a) & (b) & (c) & (d) & (e) & (f) \\
\end{tabular}
\end{adjustbox}

\caption{Example 3: Out-of-distribution reconstruction results. Columns (a)--(b) correspond to models trained on circular inclusions and tested on blob-shaped inclusions; columns (c)--(d) correspond to models trained on blob-shaped inclusions and tested on horseshoe-shaped inclusions; columns (e)--(f) correspond to models trained on samples with a random number of 1--3 blobs and tested on samples with 4 blobs. The first row shows the ground-truth conductivity, and the second row shows the reconstructions. The corresponding (RMSE, Rel Err) values are reported for each test case.}
\label{ood}
\end{figure}

\begin{figure}[H]
\centering
\renewcommand{\arraystretch}{1.15}

\begin{tabular}{>{\centering\arraybackslash}m{0.12\textwidth}
                @{\hspace{0.8em}}
                >{\centering\arraybackslash}m{0.18\textwidth}
                @{\hspace{0.8em}}
                >{\centering\arraybackslash}m{0.18\textwidth}
                @{\hspace{0.8em}}
                >{\centering\arraybackslash}m{0.18\textwidth}
                @{\hspace{0.8em}}
                >{\centering\arraybackslash}m{0.18\textwidth}}

& {Case 2.1} & {Case 3.1} & {Case 4.3} & {Case 4.4} \\[2pt]
\hline \\[-6pt]

&
\includegraphics[width=0.13\textwidth]{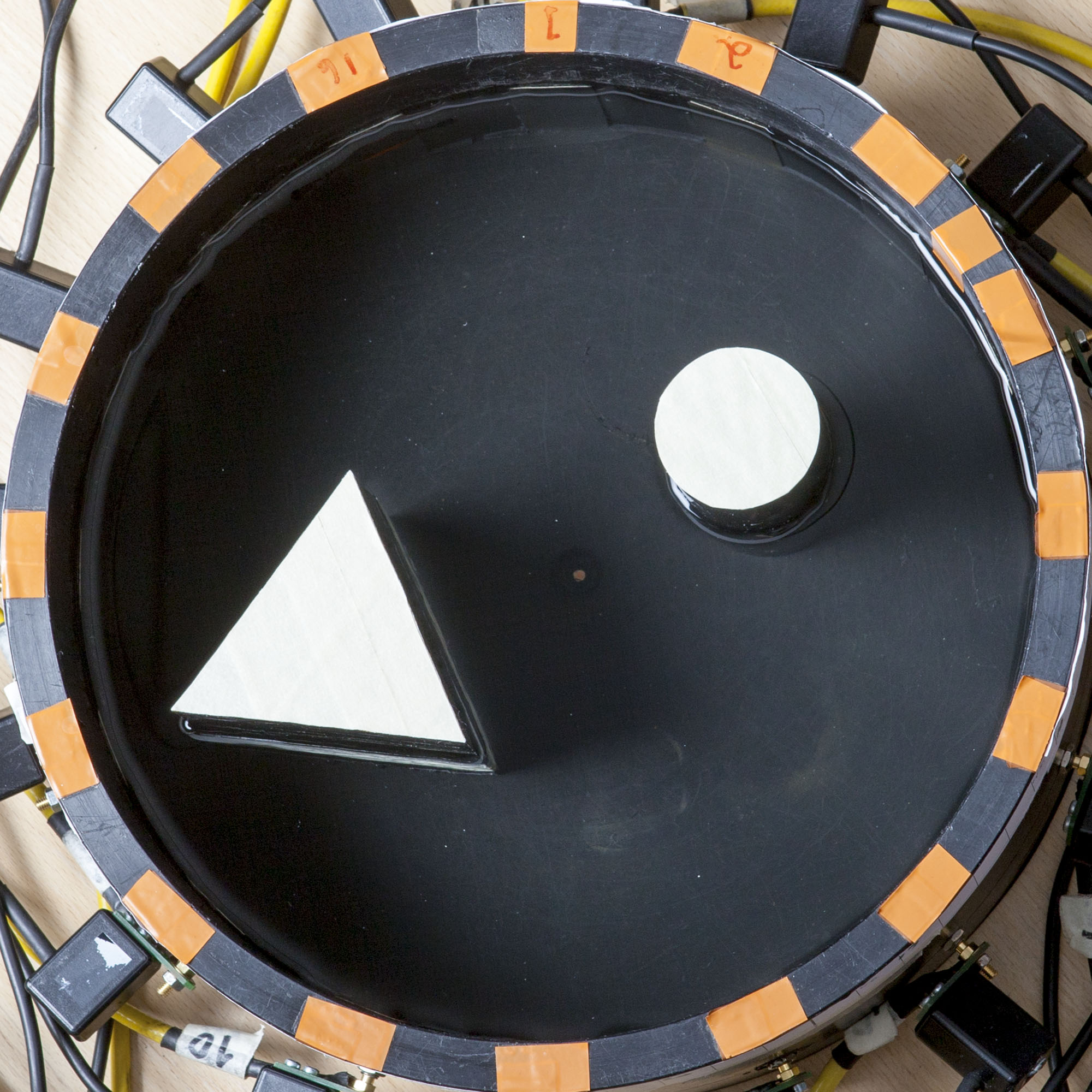} &
\includegraphics[width=0.13\textwidth]{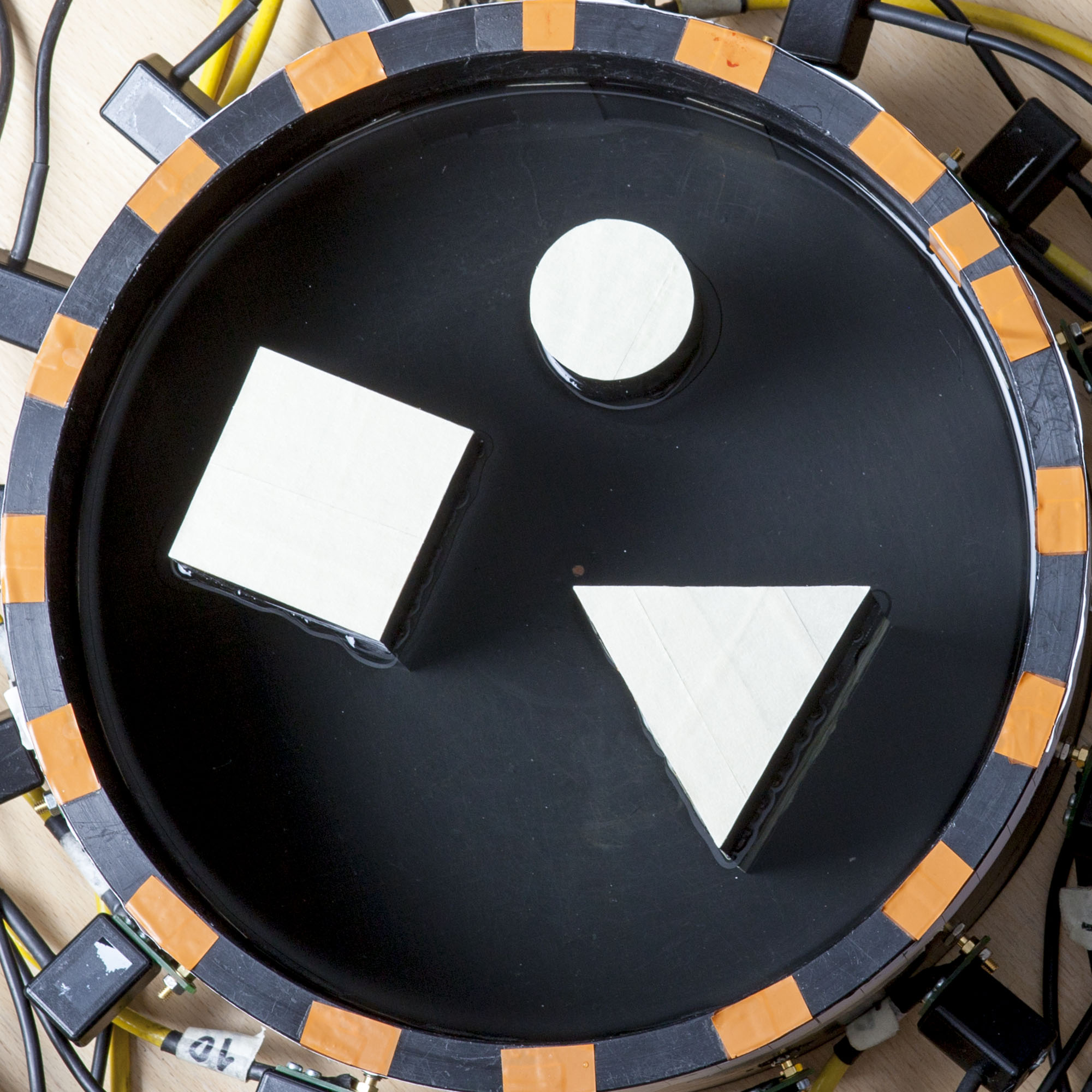} &
\includegraphics[width=0.13\textwidth]{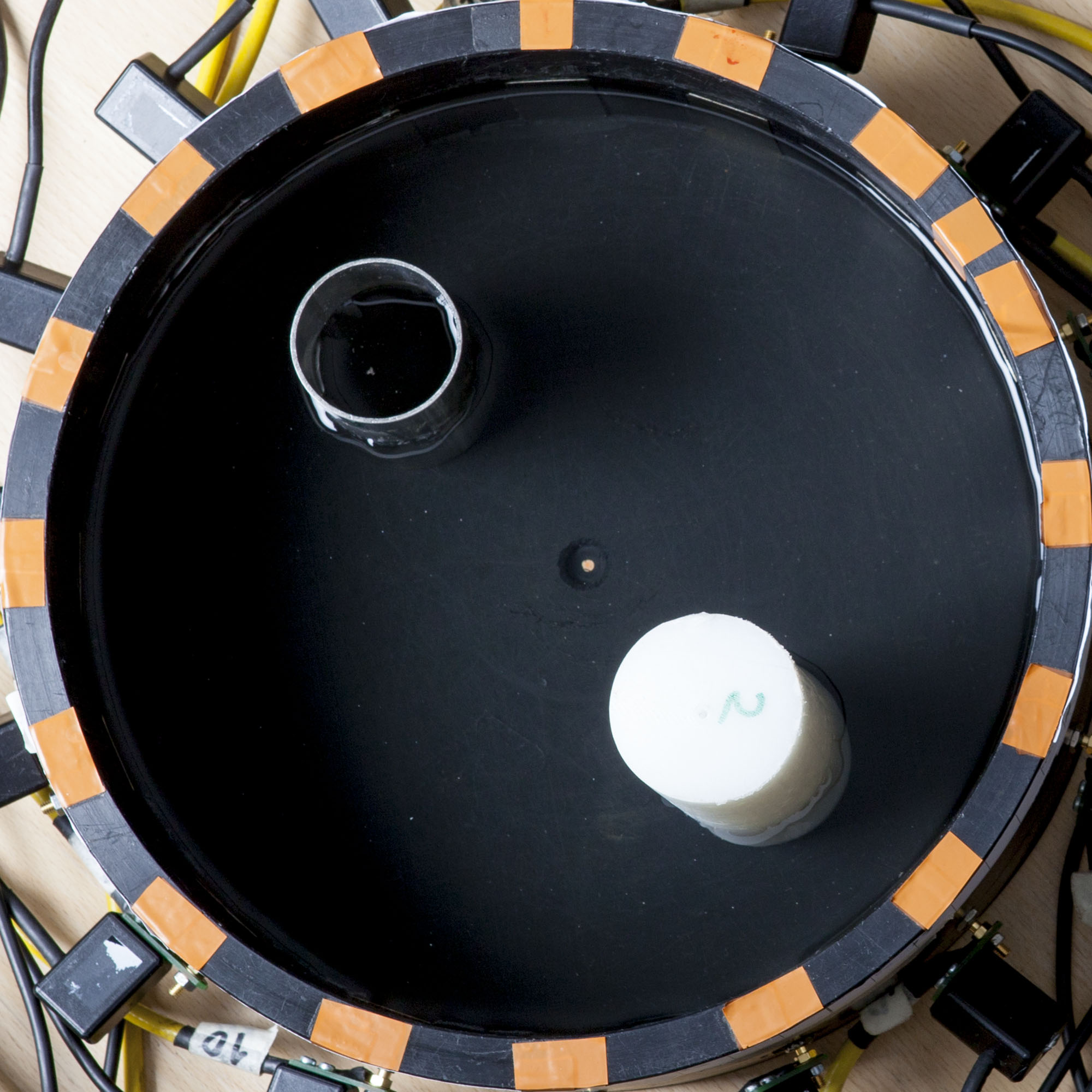} &
\includegraphics[width=0.13\textwidth]{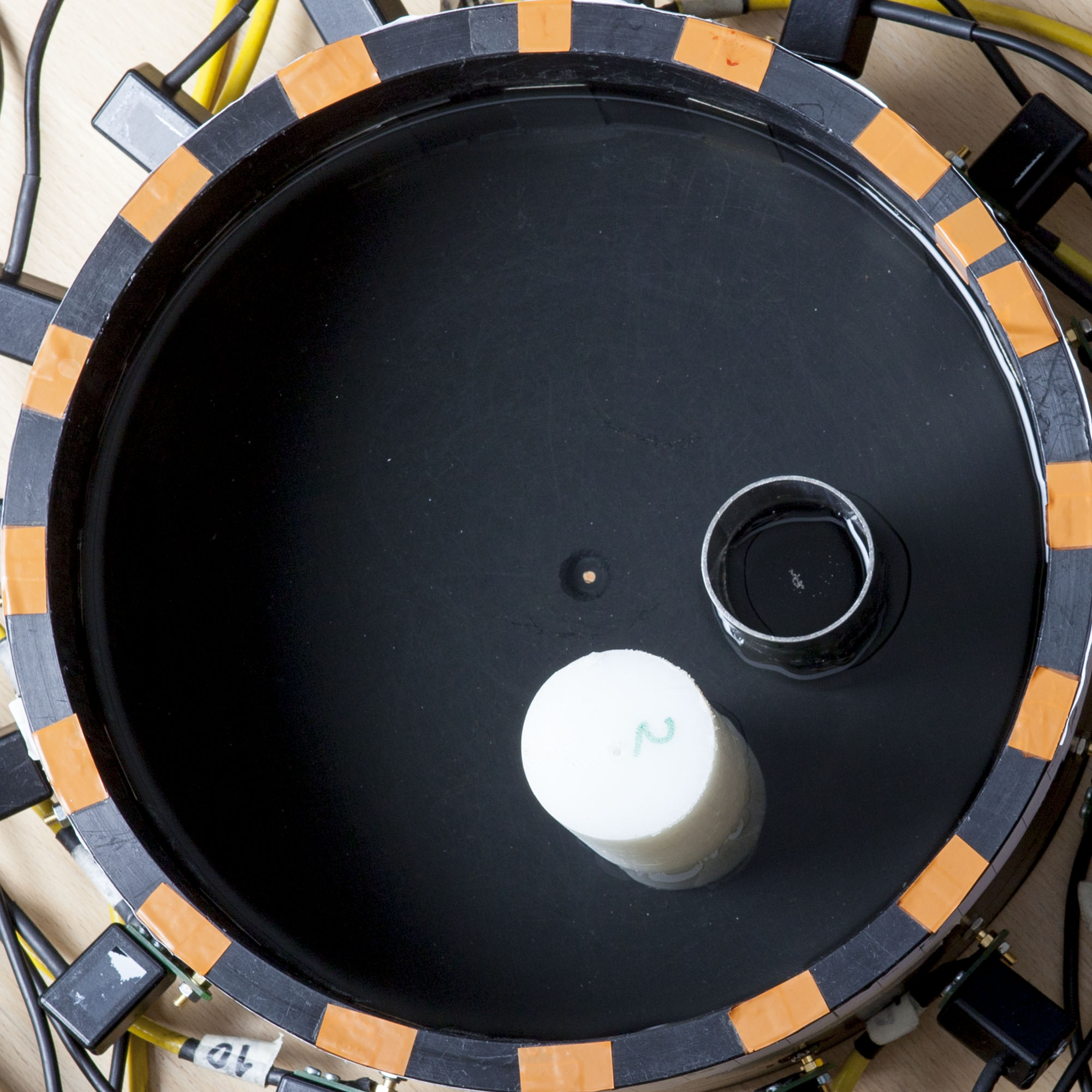} \\[12pt]

{RGN-TV} &
\includegraphics[width=0.13\textwidth]{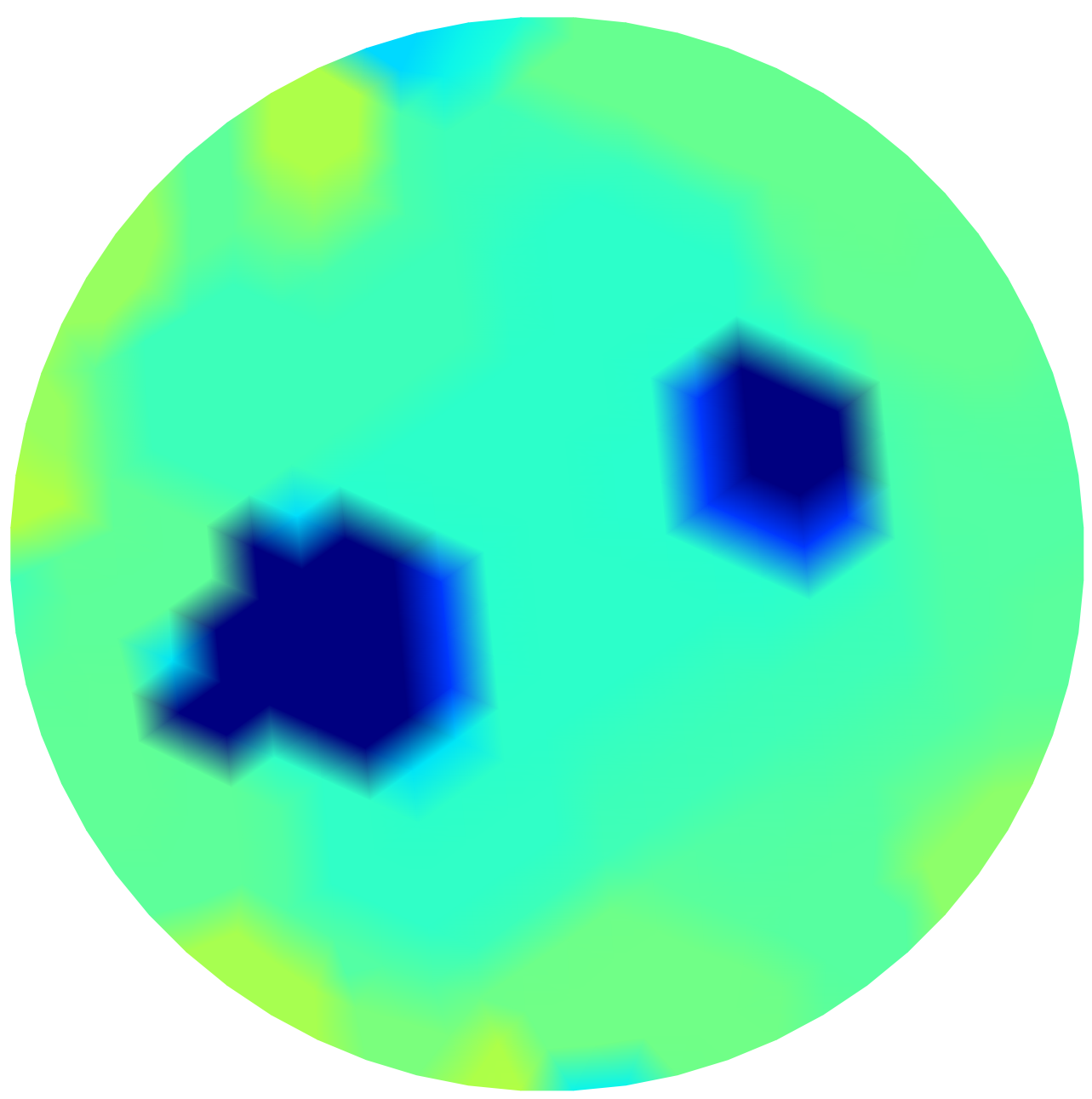} &
\includegraphics[width=0.13\textwidth]{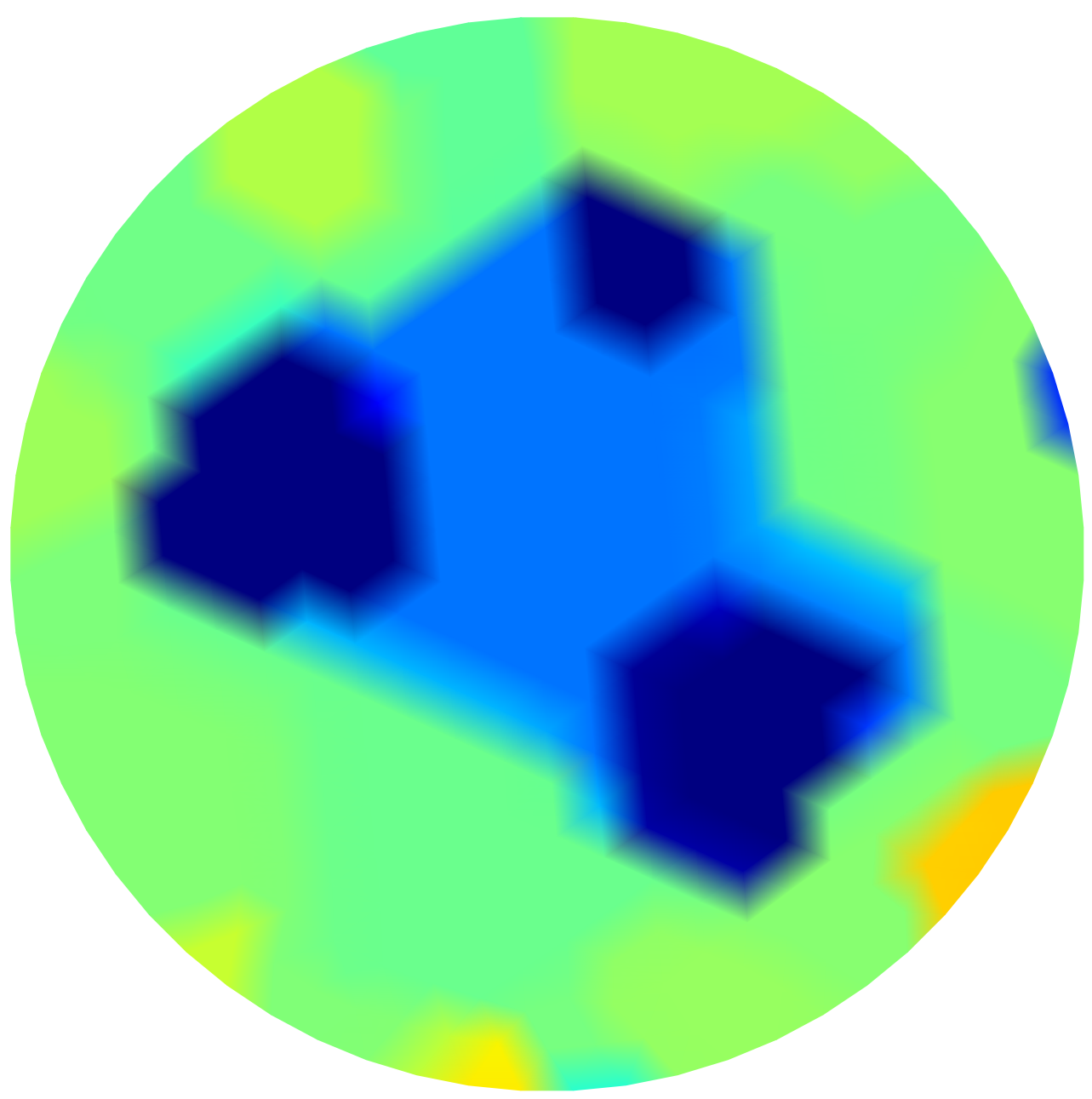} &
\includegraphics[width=0.13\textwidth]{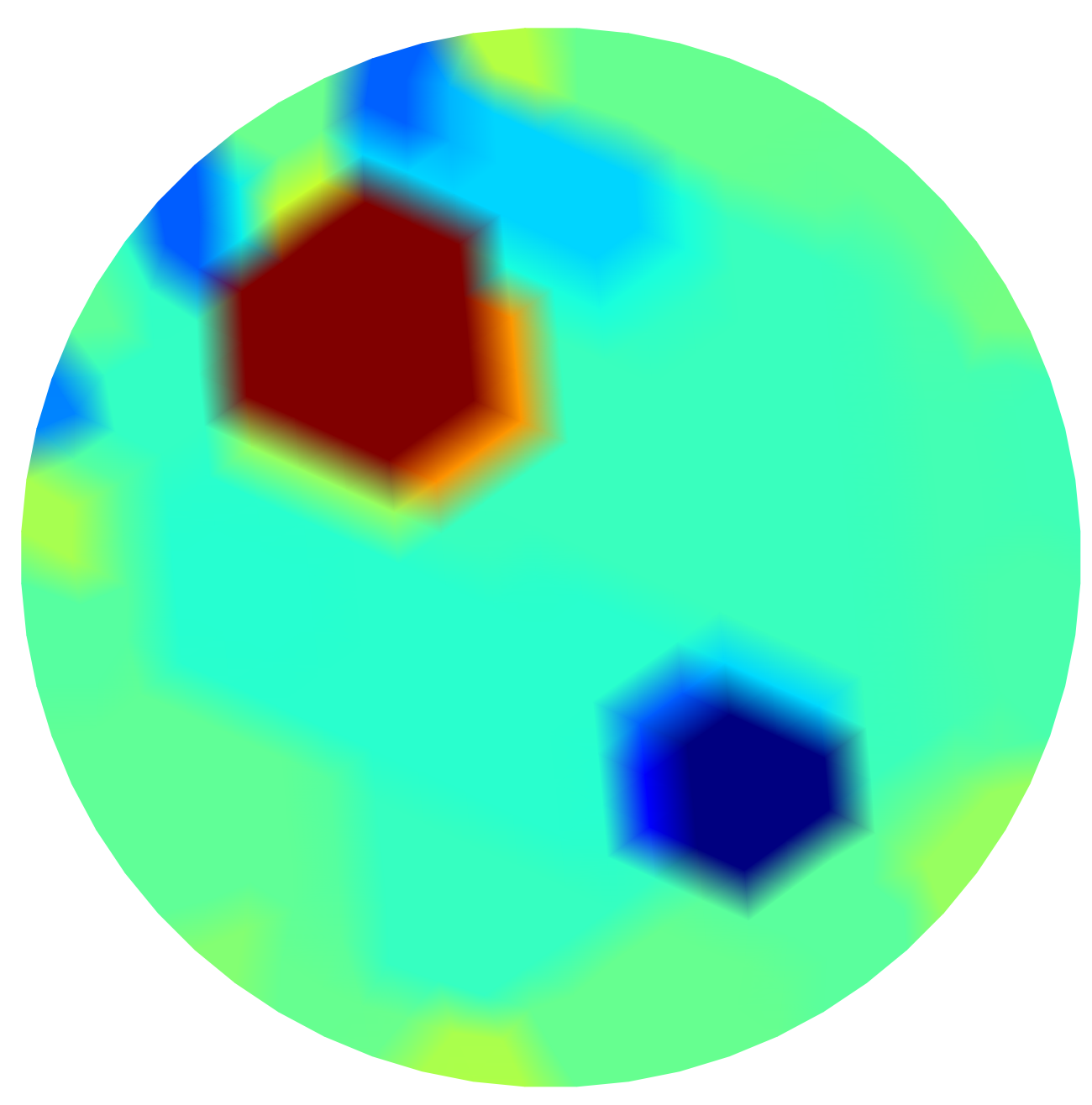} &
\includegraphics[width=0.13\textwidth]{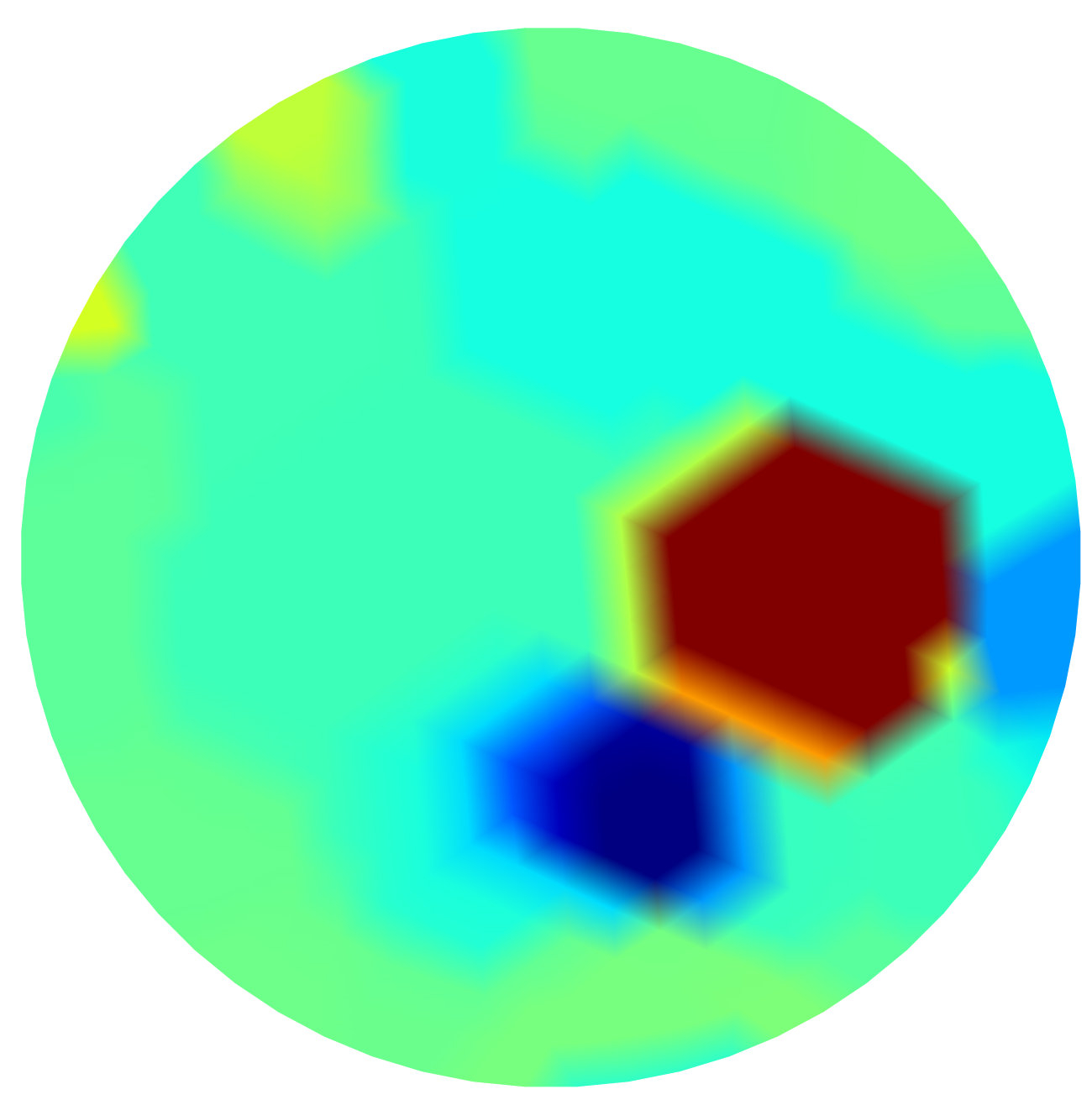} \\[12pt]

{Ours} &
\includegraphics[width=0.13\textwidth]{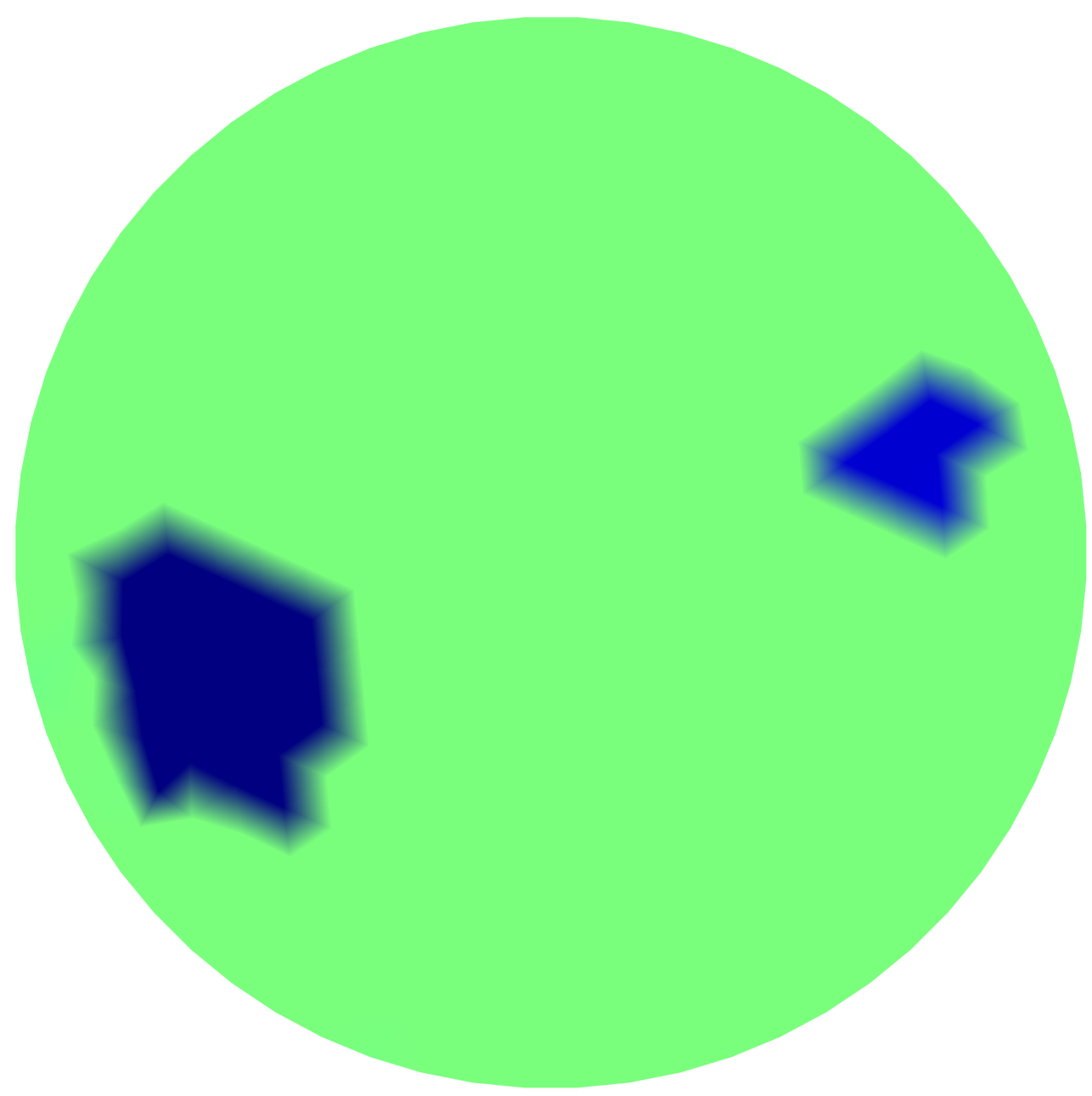} &
\includegraphics[width=0.13\textwidth]{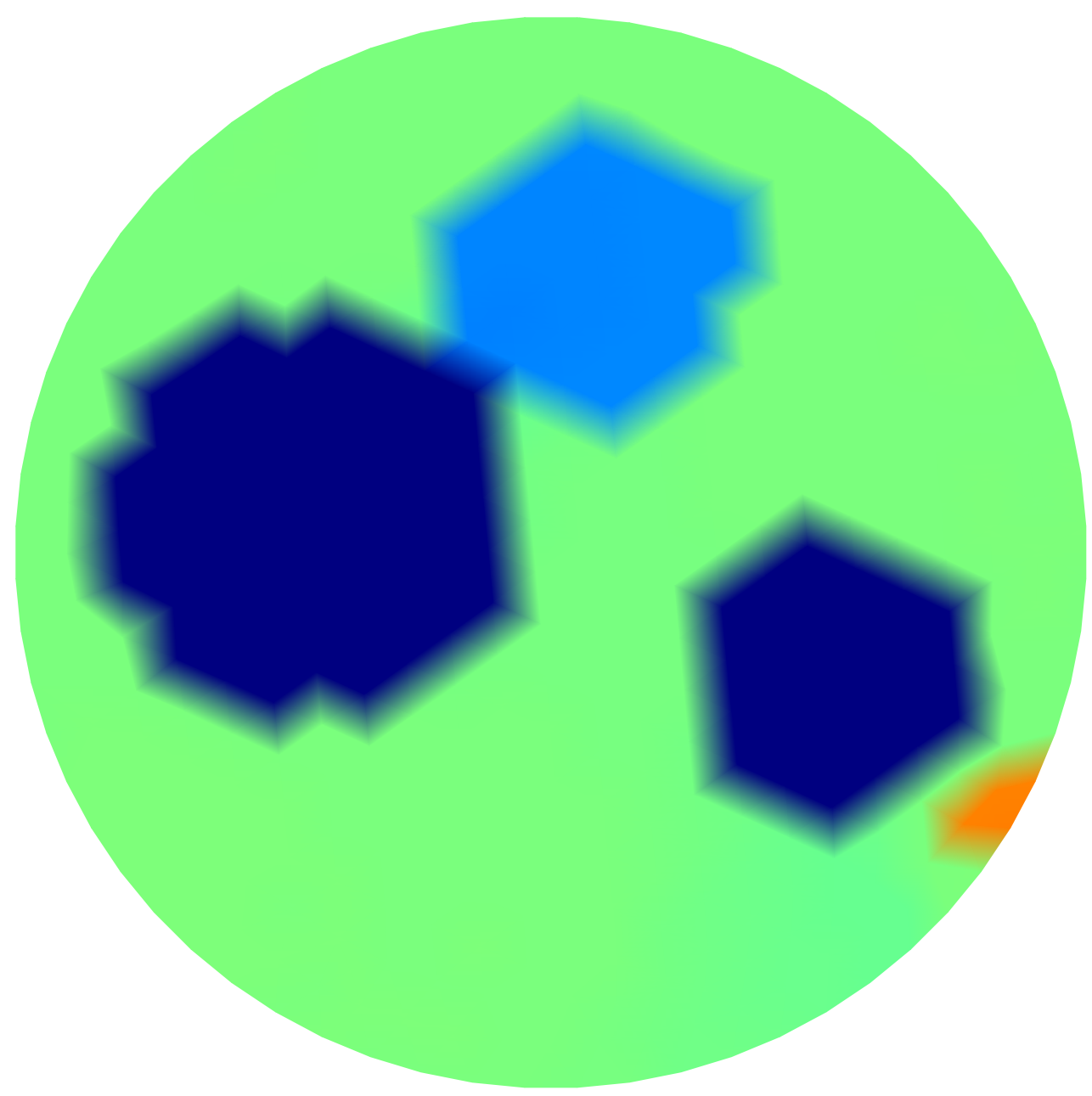} &
\includegraphics[width=0.13\textwidth]{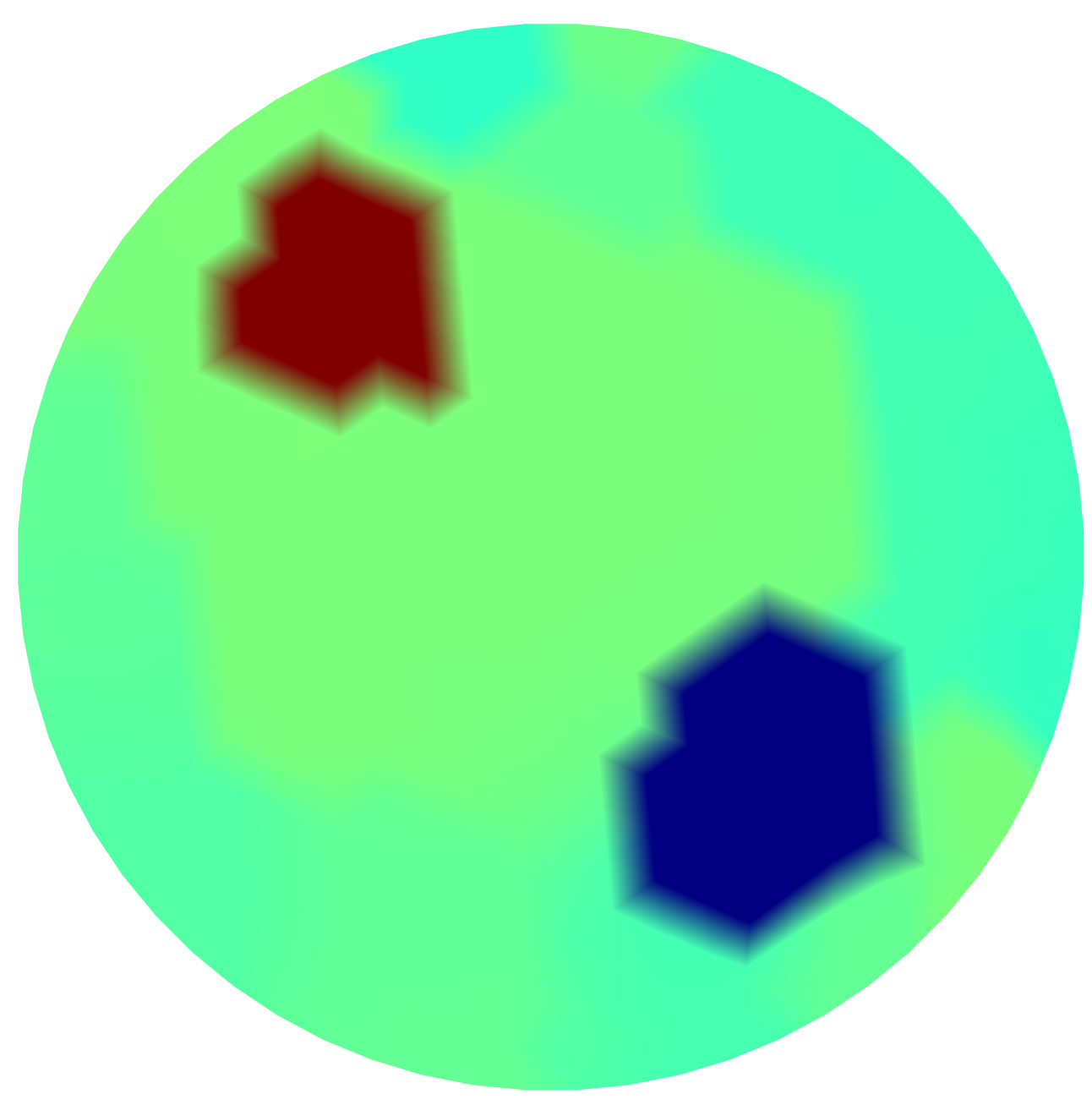} &
\includegraphics[width=0.13\textwidth]{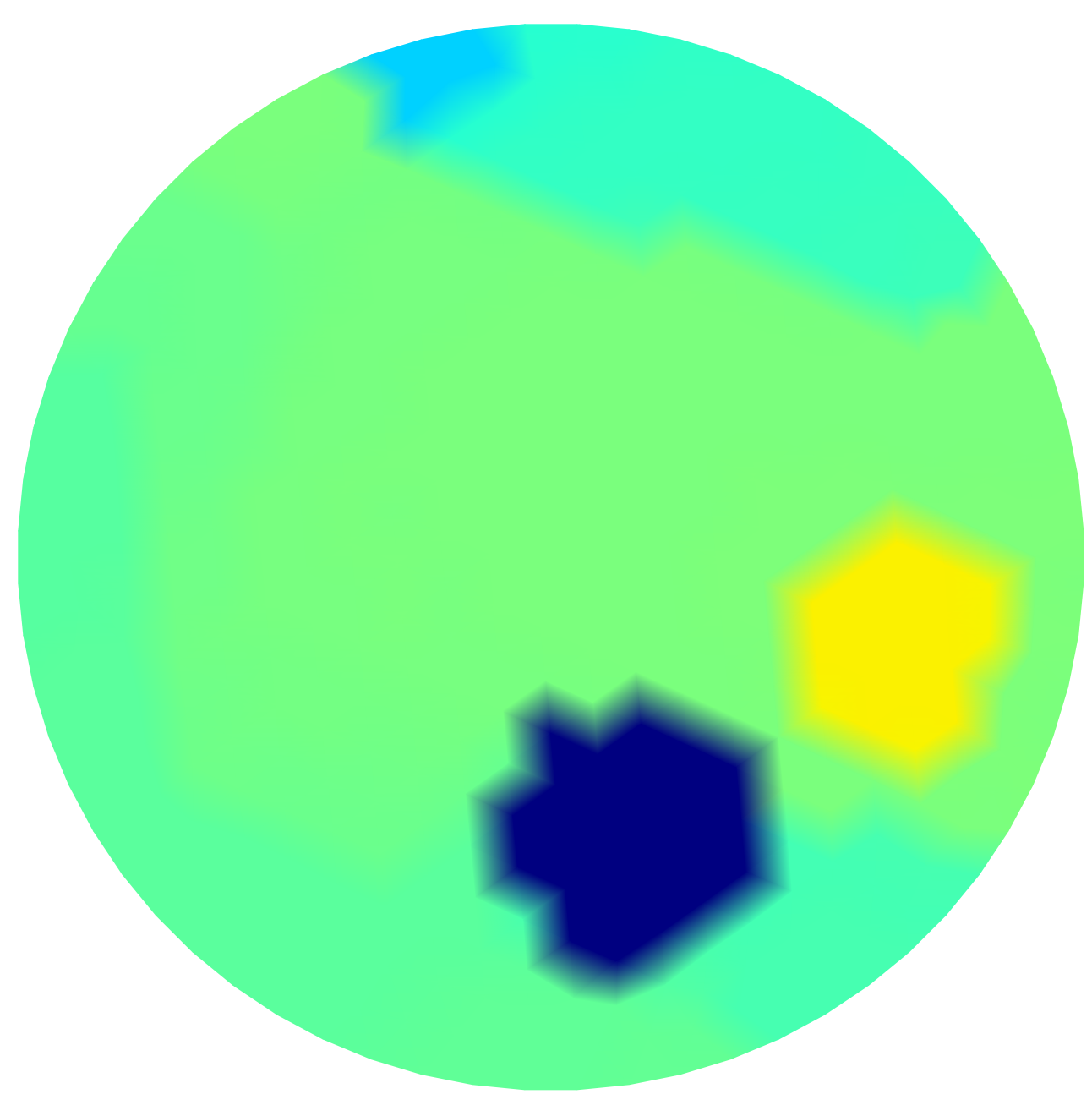} \\

\end{tabular}

\caption{Example 3: Reconstructions on the 2D EIT dataset~\cite{hauptmann2017open}. Each column corresponds to one example case. The first row shows the experimental setup, the second row shows reconstructions obtained with RGN-TV, and the third row shows reconstructions with our method.}
\label{fig:recon_2d_EIT}
\end{figure}

\subsection{Example 4: Comparative Benchmarking}\label{sub:ex4}
We compare our method against four approaches representing distinct paradigms in Bayesian inverse problems and classical regularization. Specifically, we evaluate the regularized Gauss-Newton (RGN-TV) \cite{RGN2022,borsic2009vivo}, the randomize-then-optimize Metropolis-Hastings (RTO-MH) algorithm \cite{bardsley2015randomize}, the Generative Plug-and-Play (GPnP-BM3D) method \cite{bouman2023generative}, and the Deep Posterior Sampling (DP-SGS) framework \cite{ling2025split}.
The codes for RTO-MH, GPnP-BM3D, and DP-SGS were downloaded from GitHub (\url{https://github.com/Ling991028/DP-SGS}).
Since our EIT data are originally defined on a triangular mesh, whereas these methods require regular image-grid inputs (e.g., 128×128 images), we converted the data from the triangular mesh representation to a regular image grid for all comparisons. The RGN-TV code was provided by the authors. To ensure optimal performance, all methods were rigorously trained and tuned on the datasets described in Section \ref{sec:dataset}, using grid interpolation where necessary.
Reconstruction results for the conductivity distributions are presented in Fig. \ref{tab:comp}. We report the Relative Error (RelErr) and the Structural Similarity Index (SSIM) below each case. While the former assesses the accuracy of the inclusion intensity, the latter is employed to evaluate how well the inclusion's shape and structure are preserved.

Regardless of inclusion morphology, ranging from circles and triangles to non-convex blobs, our results appear more sharp, consistently preserving both the geometric shape and the target conductivity values. In contrast, the most closely related method DP-SGS produces significantly noisier reconstructions. Meanwhile, RGN-TV oversimplifies the inclusion geometry, and RTO-MH tends to blur the edges; overall, all baseline methods struggle to accurately recover triangular shapes.

\begin{figure}
\begin{table}[H]
\centering
\small

\begin{adjustbox}{width=\linewidth}
\begin{tabular}{C{2.8cm} C{2.8cm} C{2.8cm} C{2.8cm} C{2.8cm} C{2.8cm}}

{GT} & & & & &\\
\imgonly{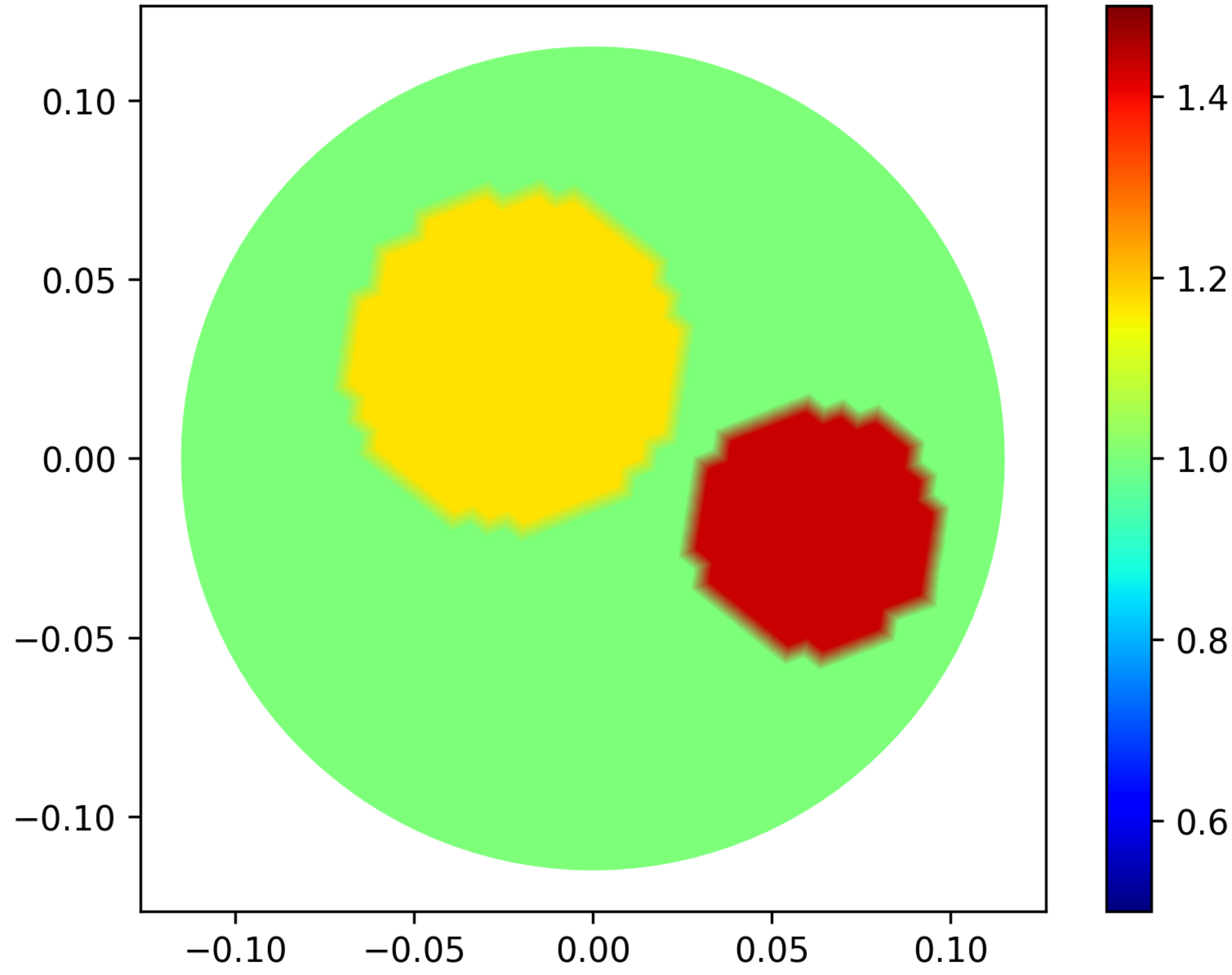} &
\imgonly{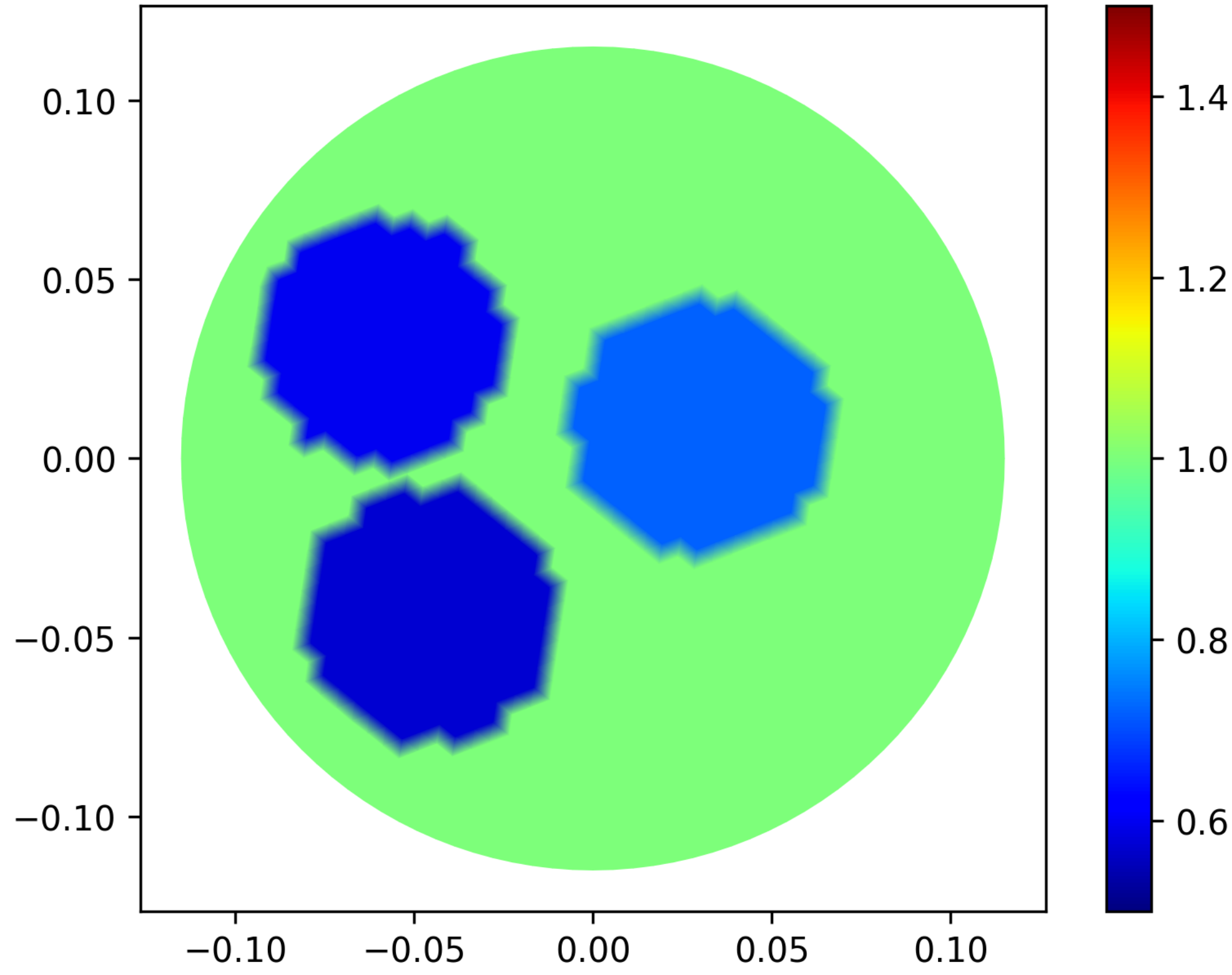} &
\imgonly{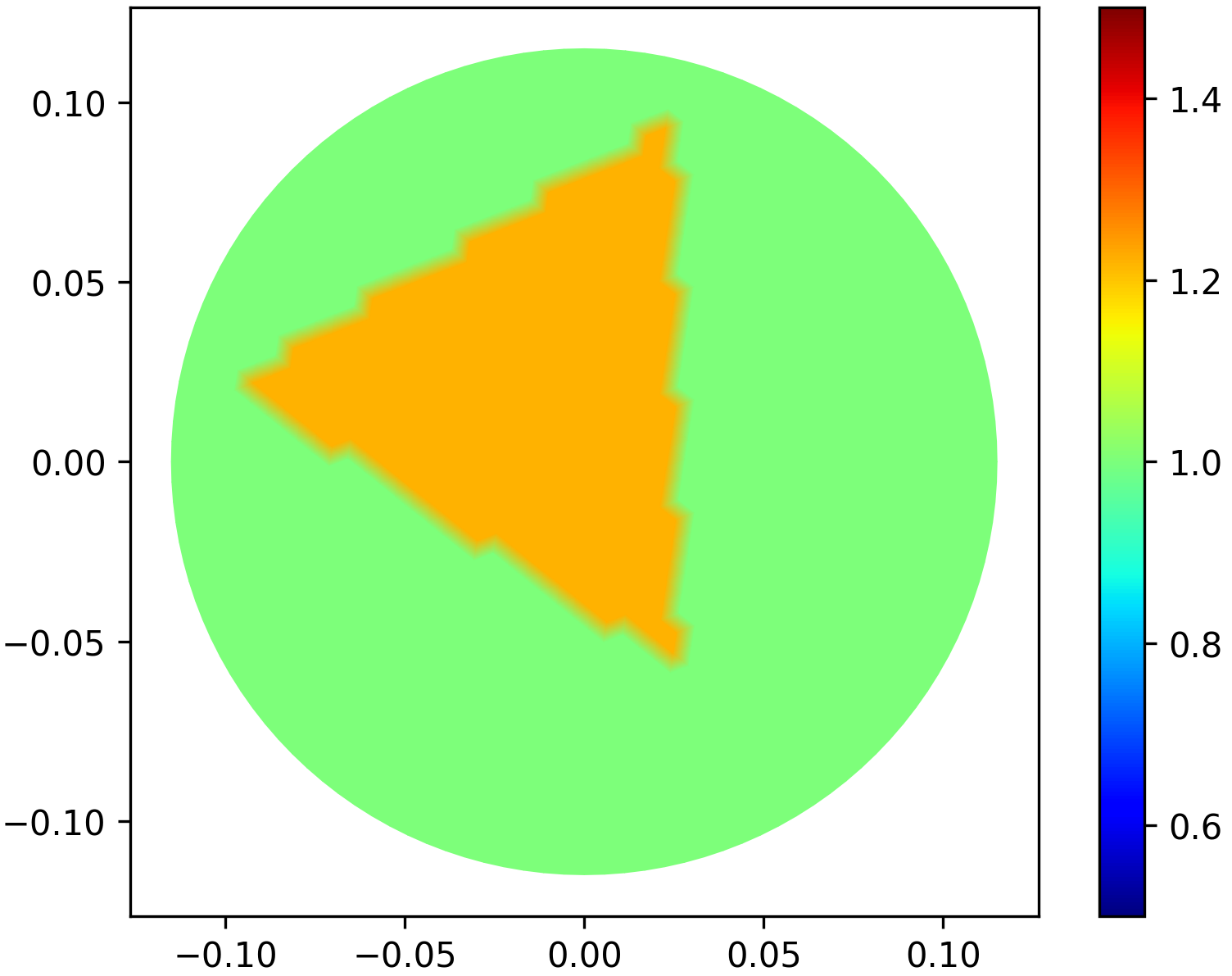} &
\imgonly{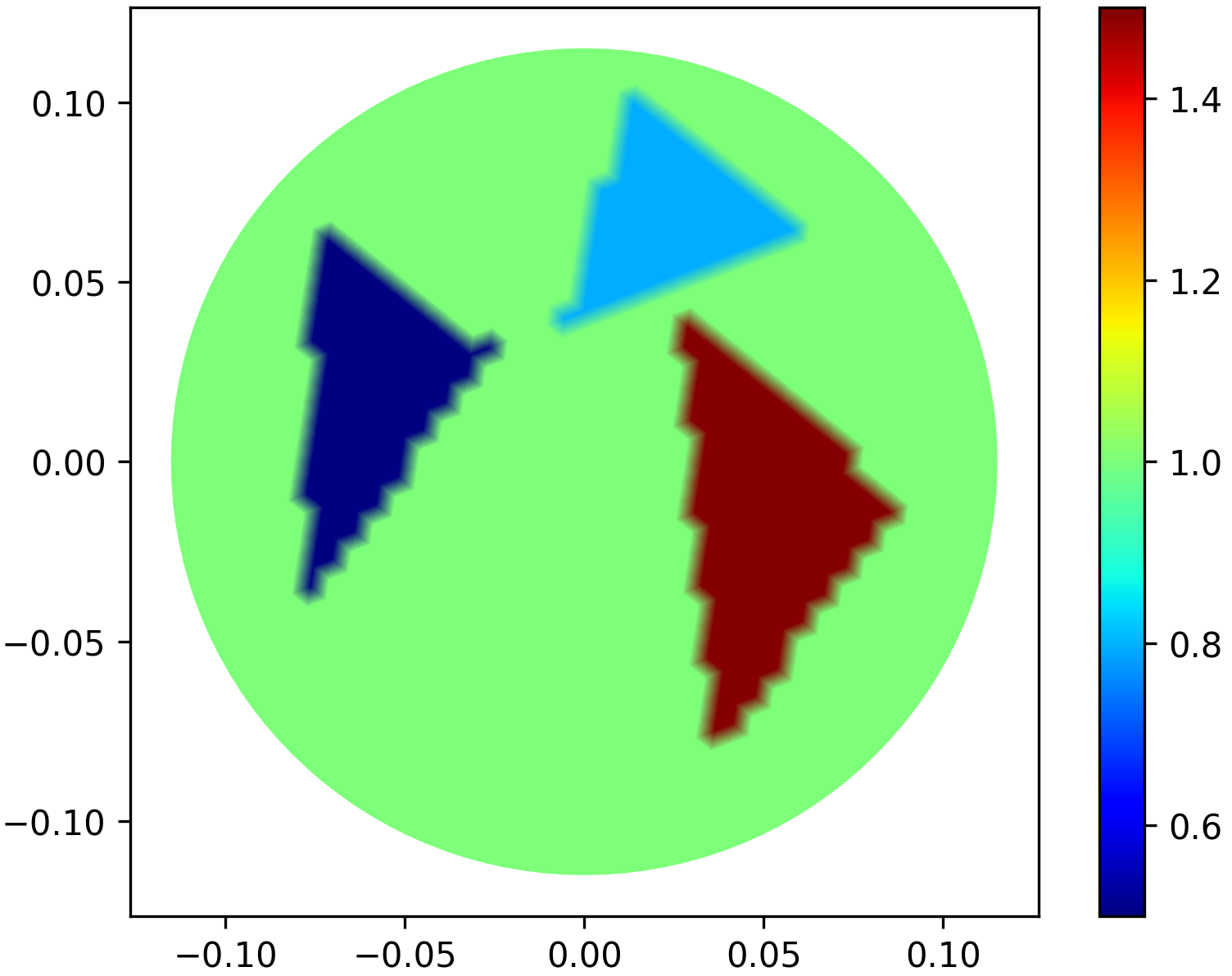} &
\imgonly{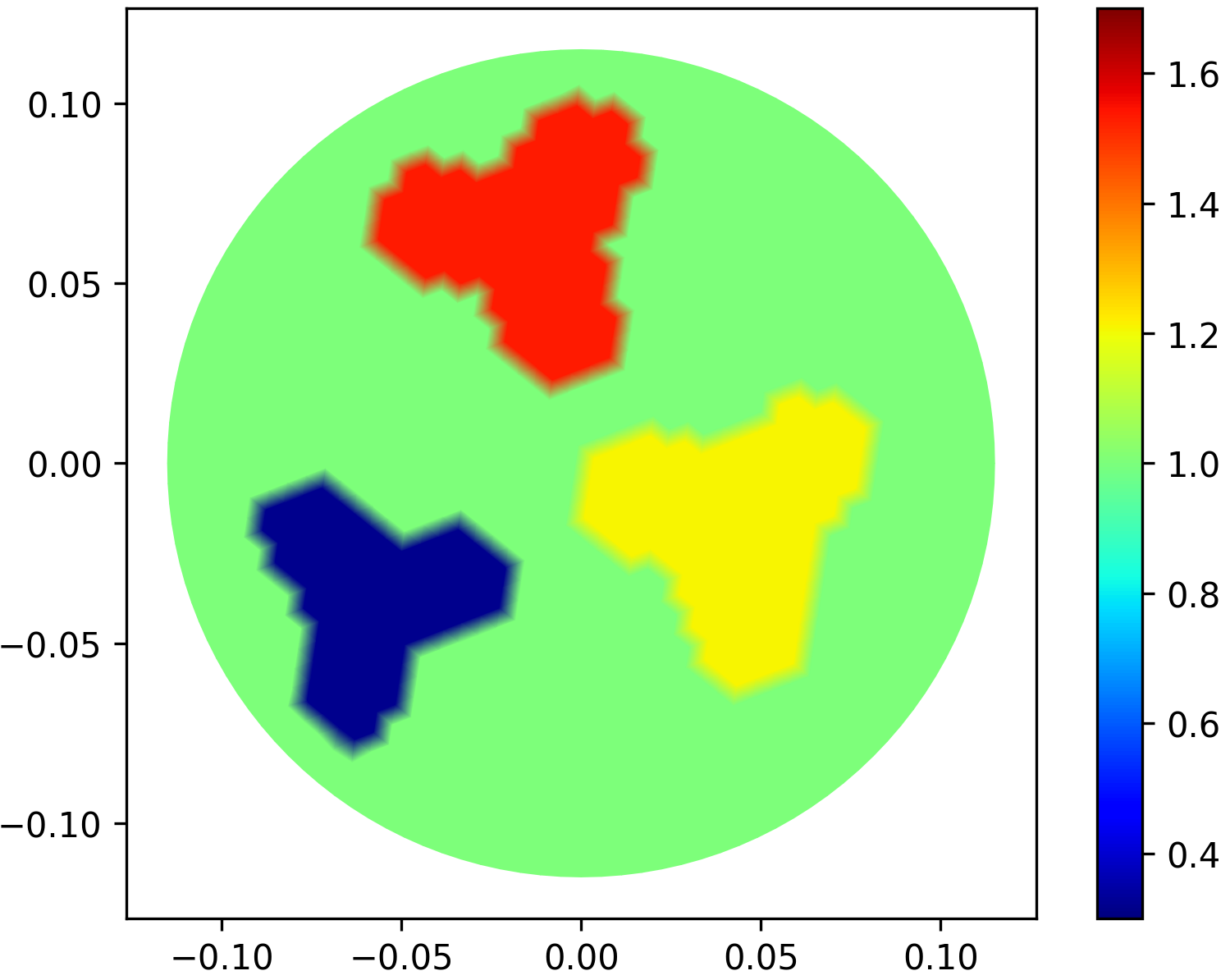} &
\imgonly{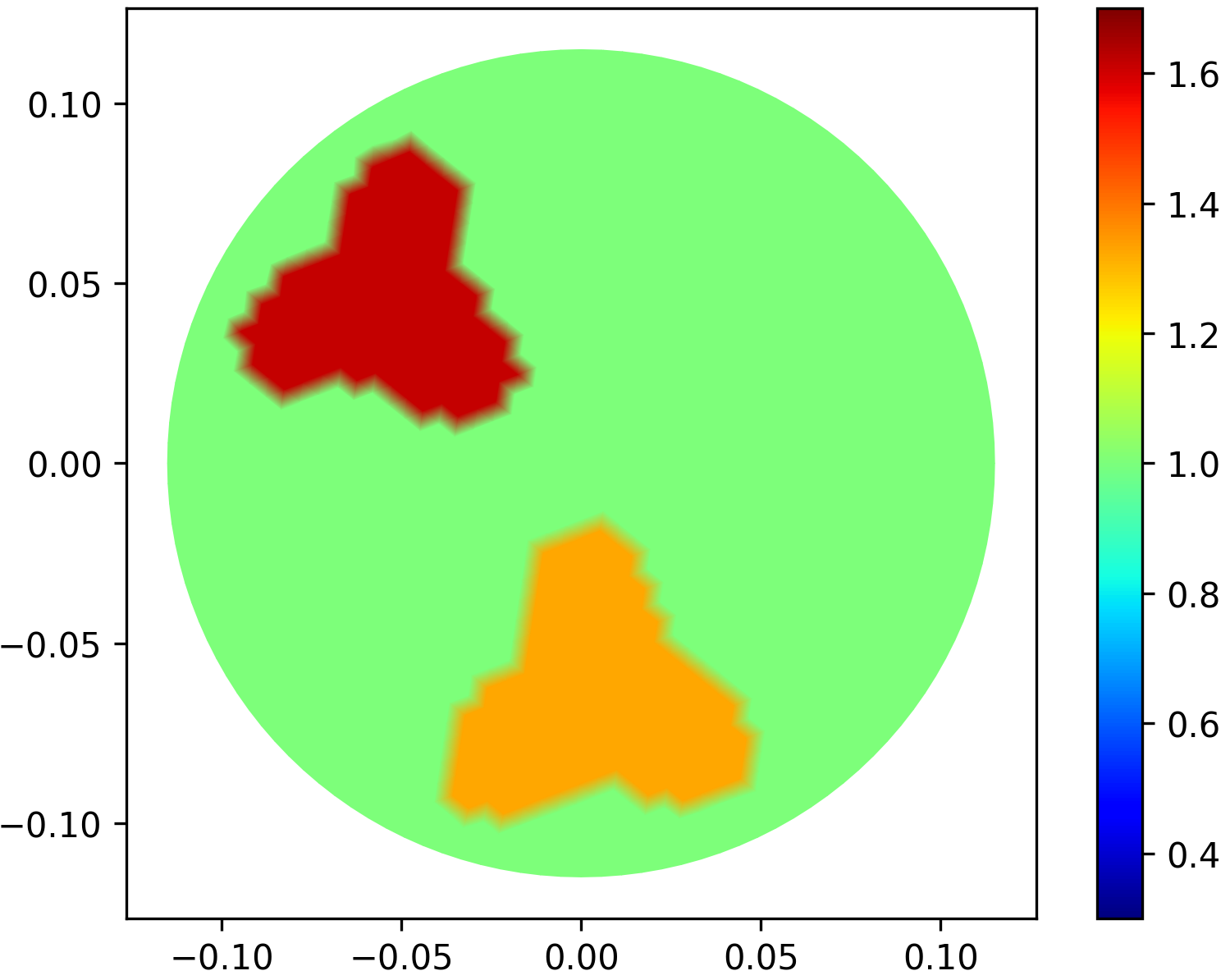} \\[10mm]

\hline

RTO-MH~\cite{bardsley2015randomize} & & & & &\\
\imgmetric{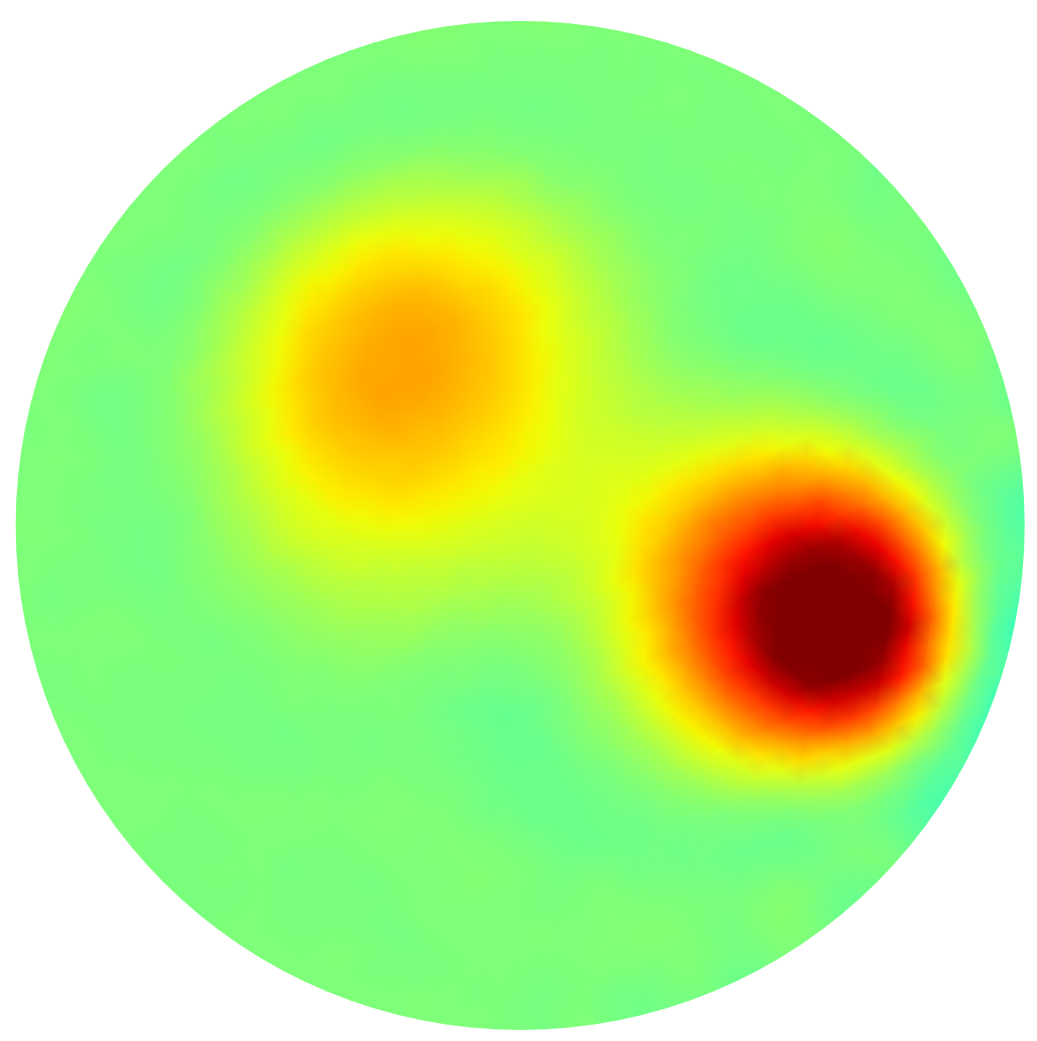}{(0.057, 0.7633)} &
\imgmetric{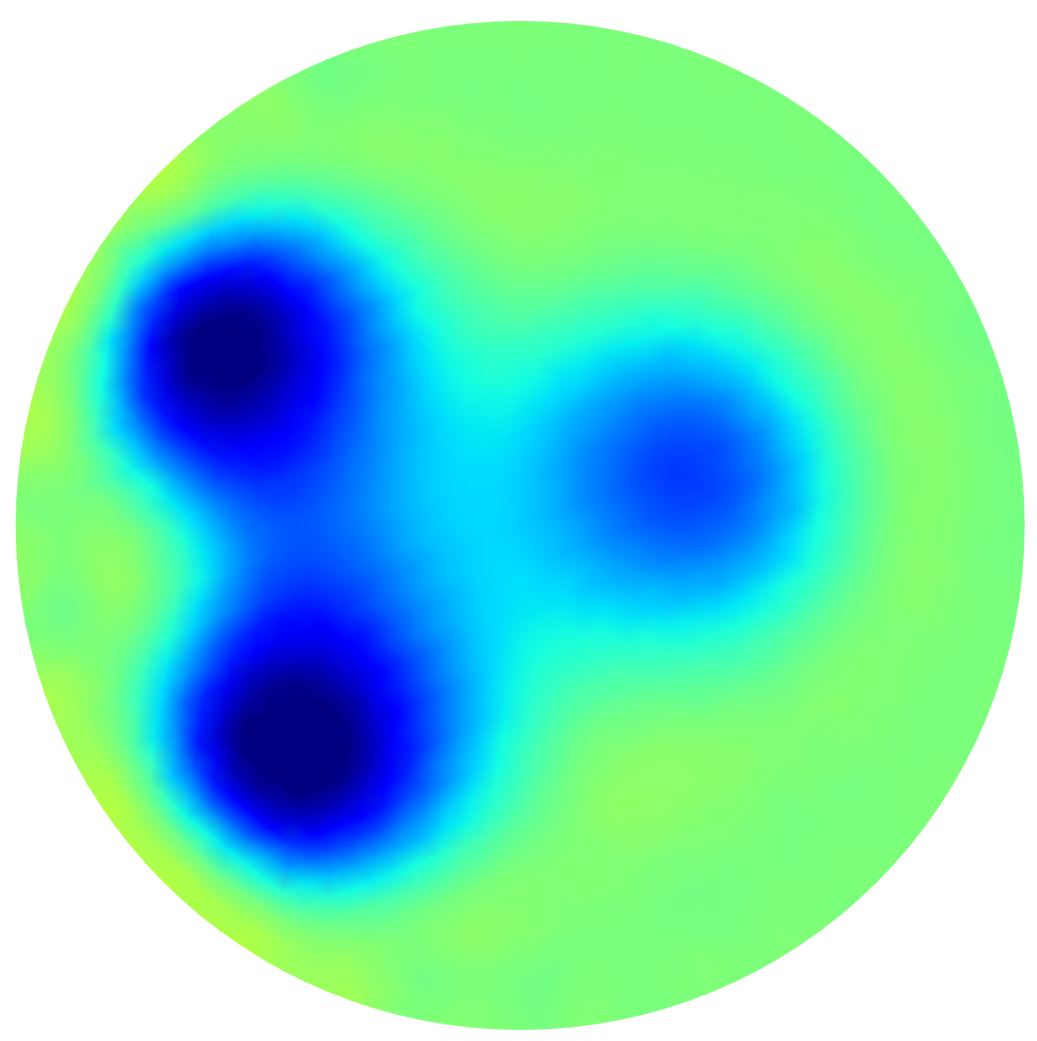}{(0.081, 0.6573)} &
\imgmetric{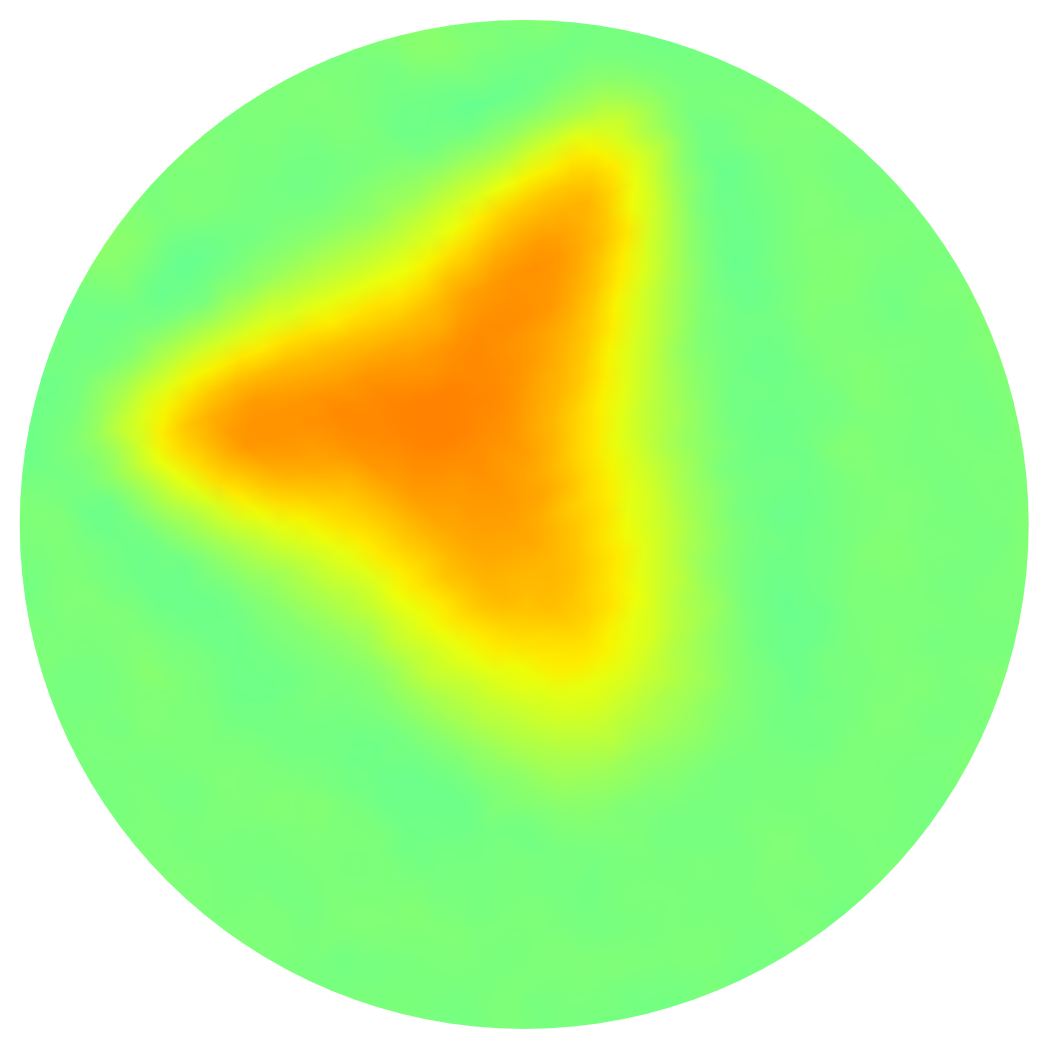}{(\textbf{0.037}, 0.7191)} &
\imgmetric{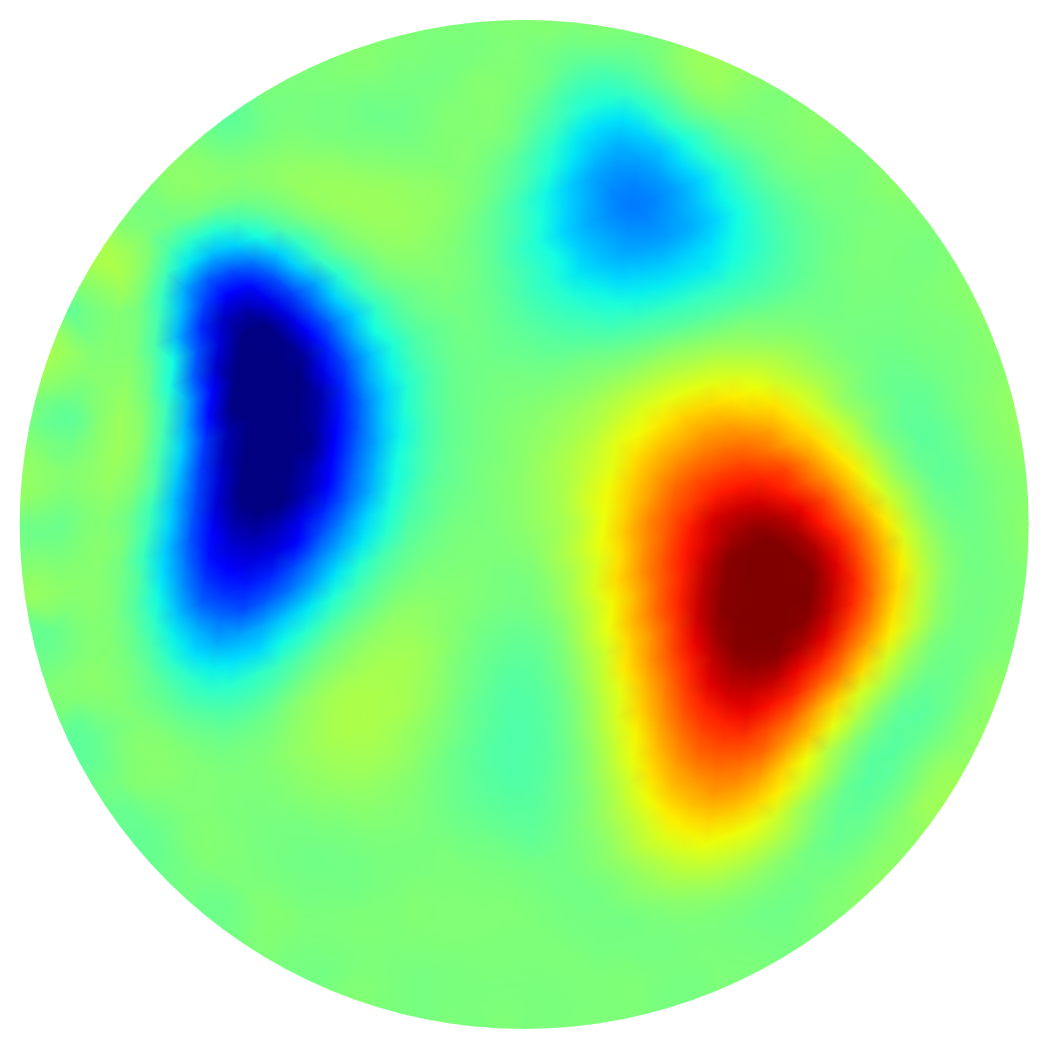}{(\textbf{0.093}, 0.7019)} & \imgmetric{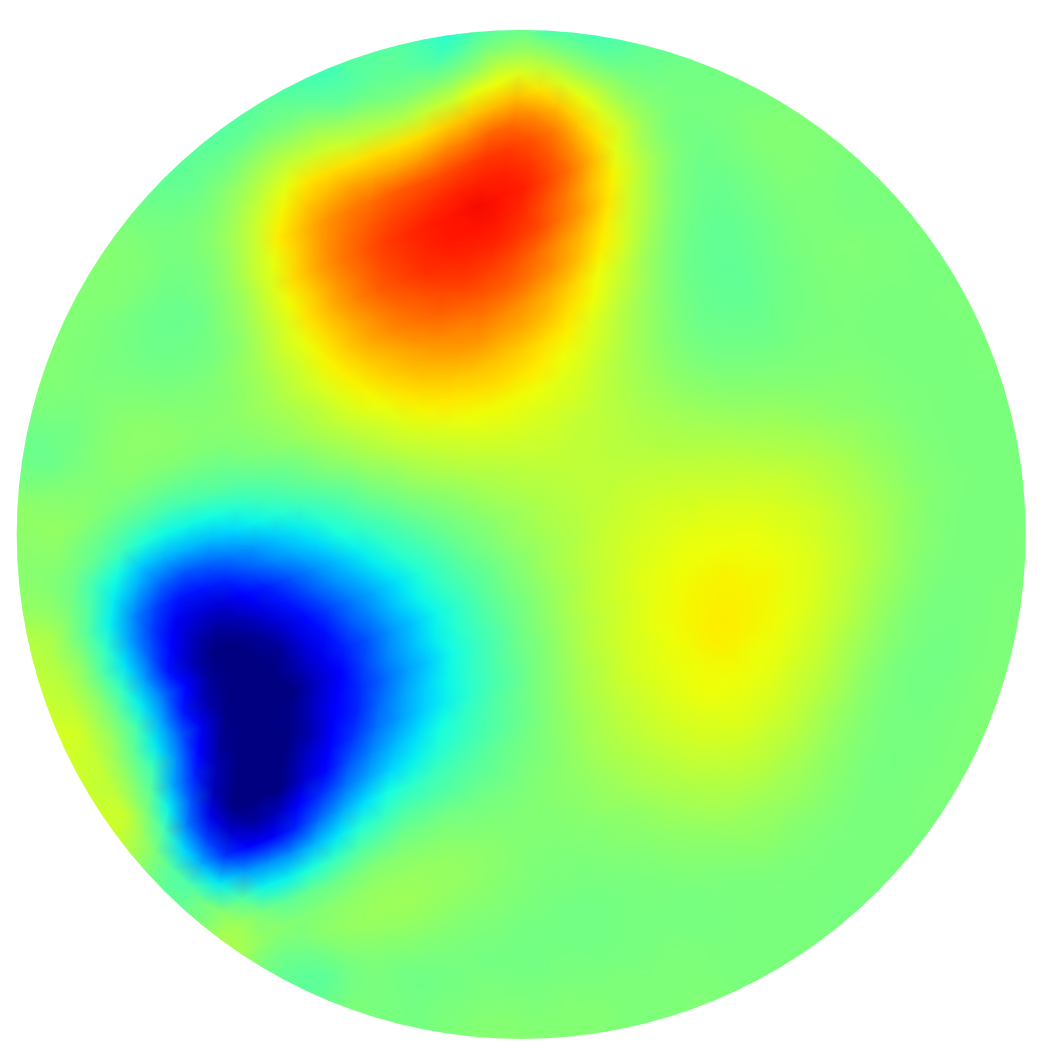}{(0.120, 0.6653)} &
\imgmetric{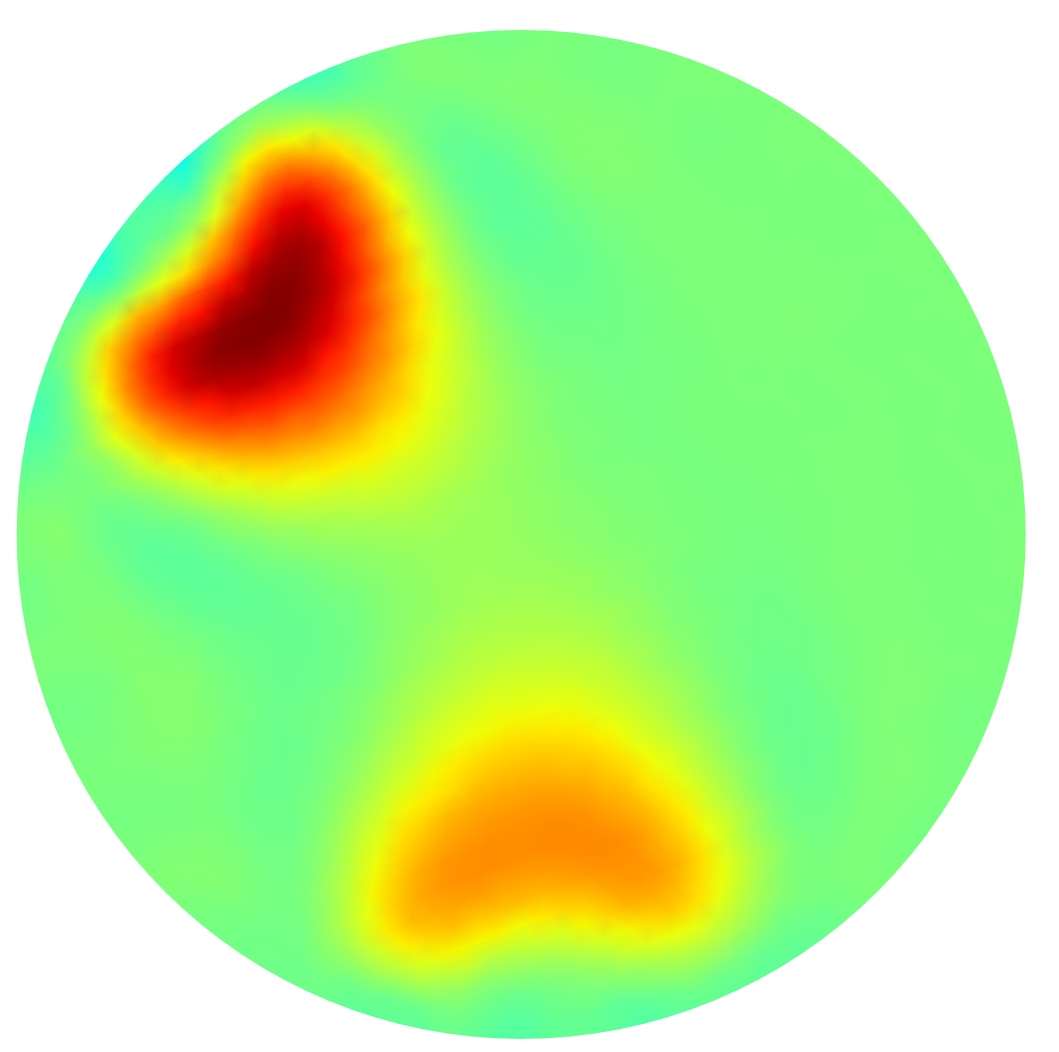}{(0.091, 0.6814)} \\[20mm]

GPnP-BM3D~\cite{bouman2023generative} & & & & &\\
\imgmetric{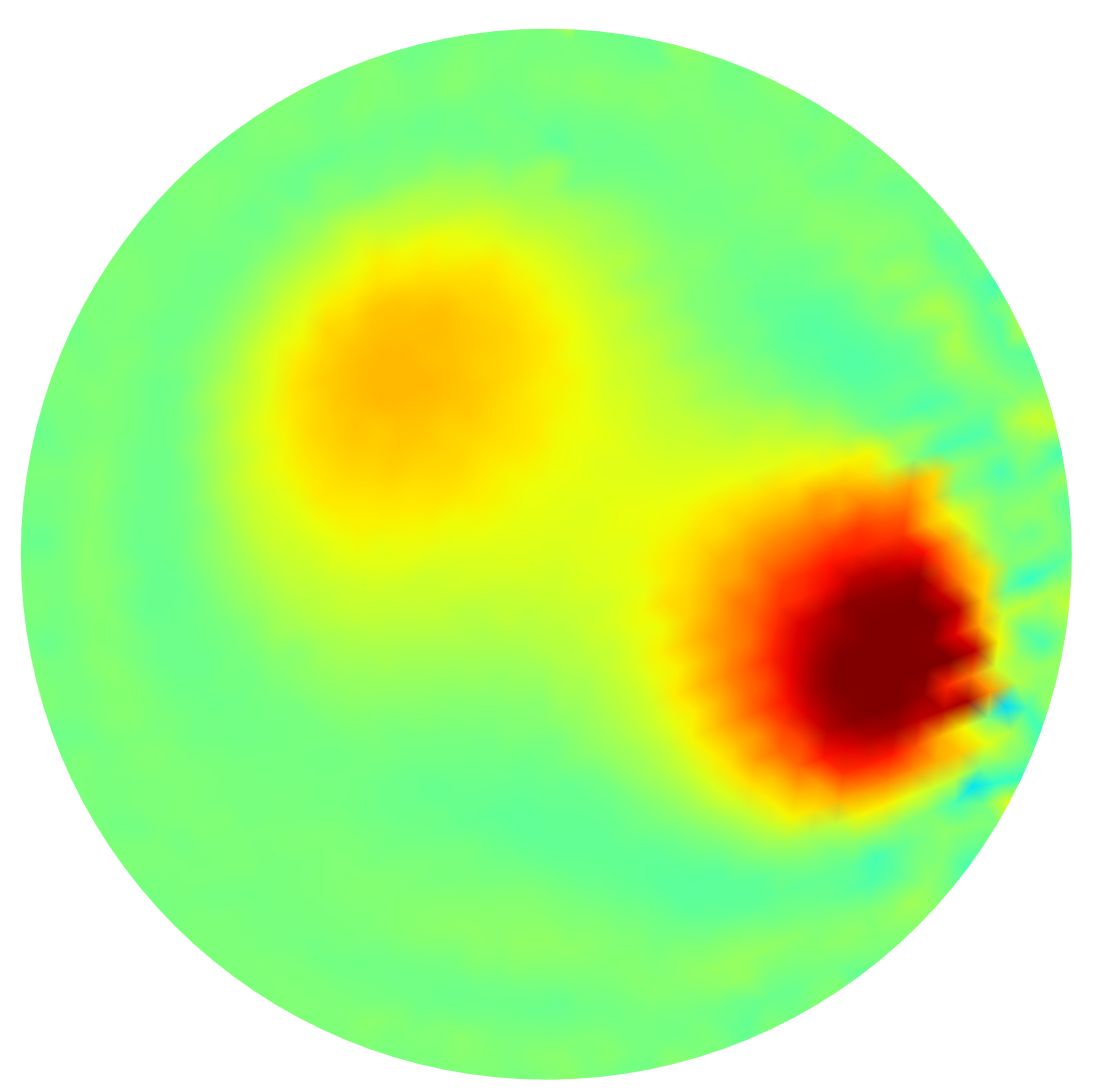}{(0.062, 0.6879)} &
\imgmetric{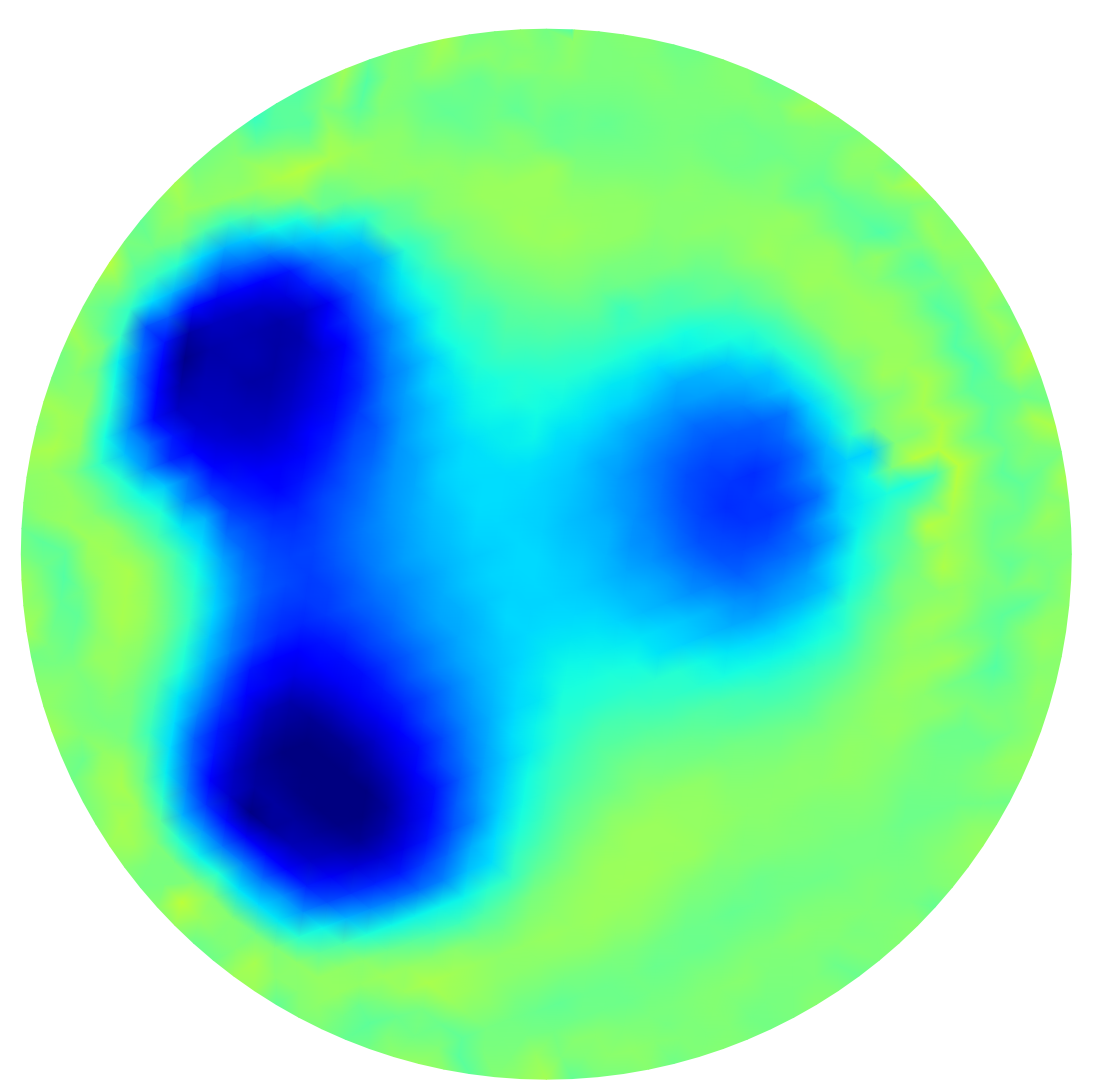}{(0.082, 0.6005)} &
\imgmetric{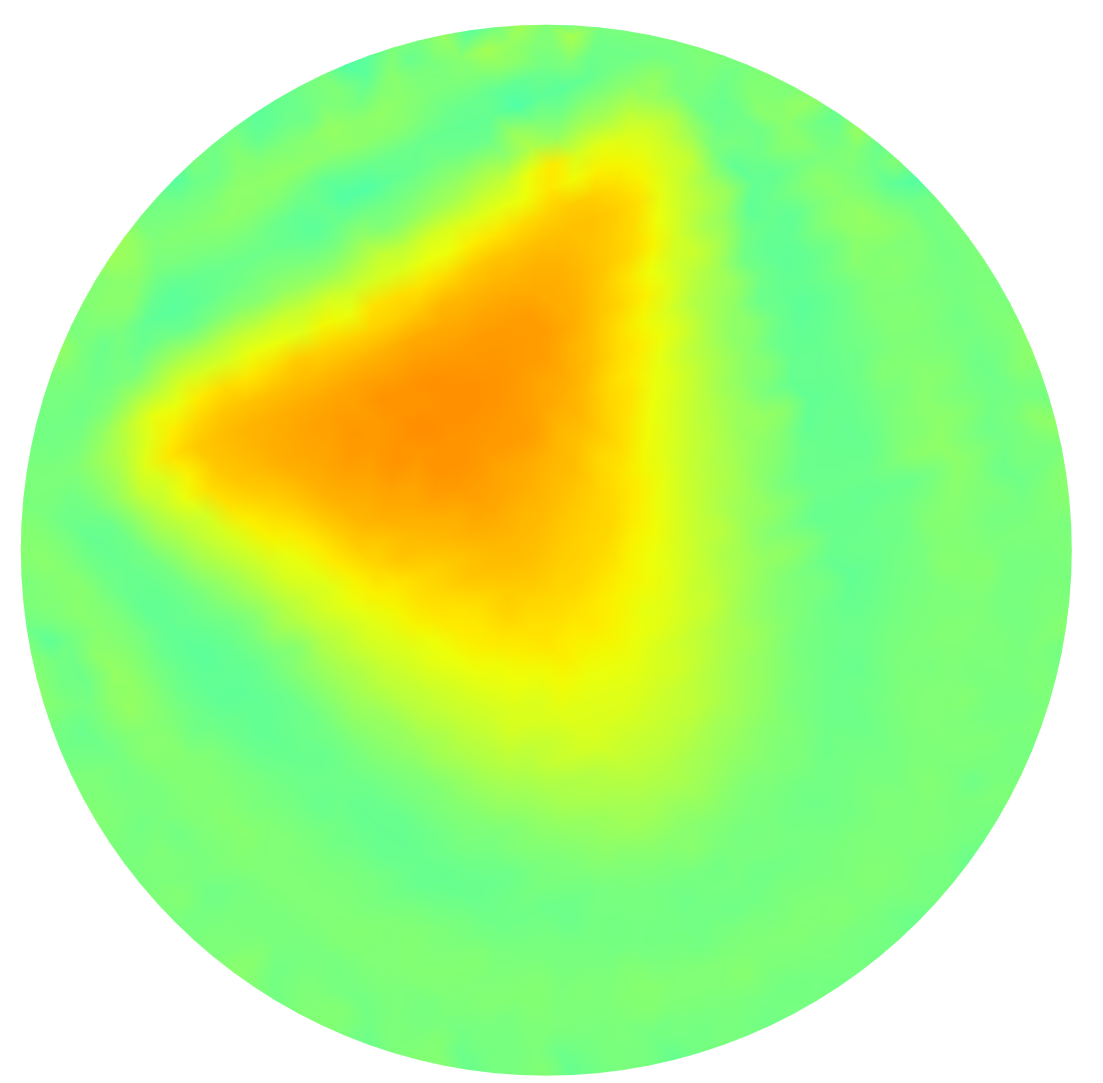}{(0.041, 0.5935)} &
\imgmetric{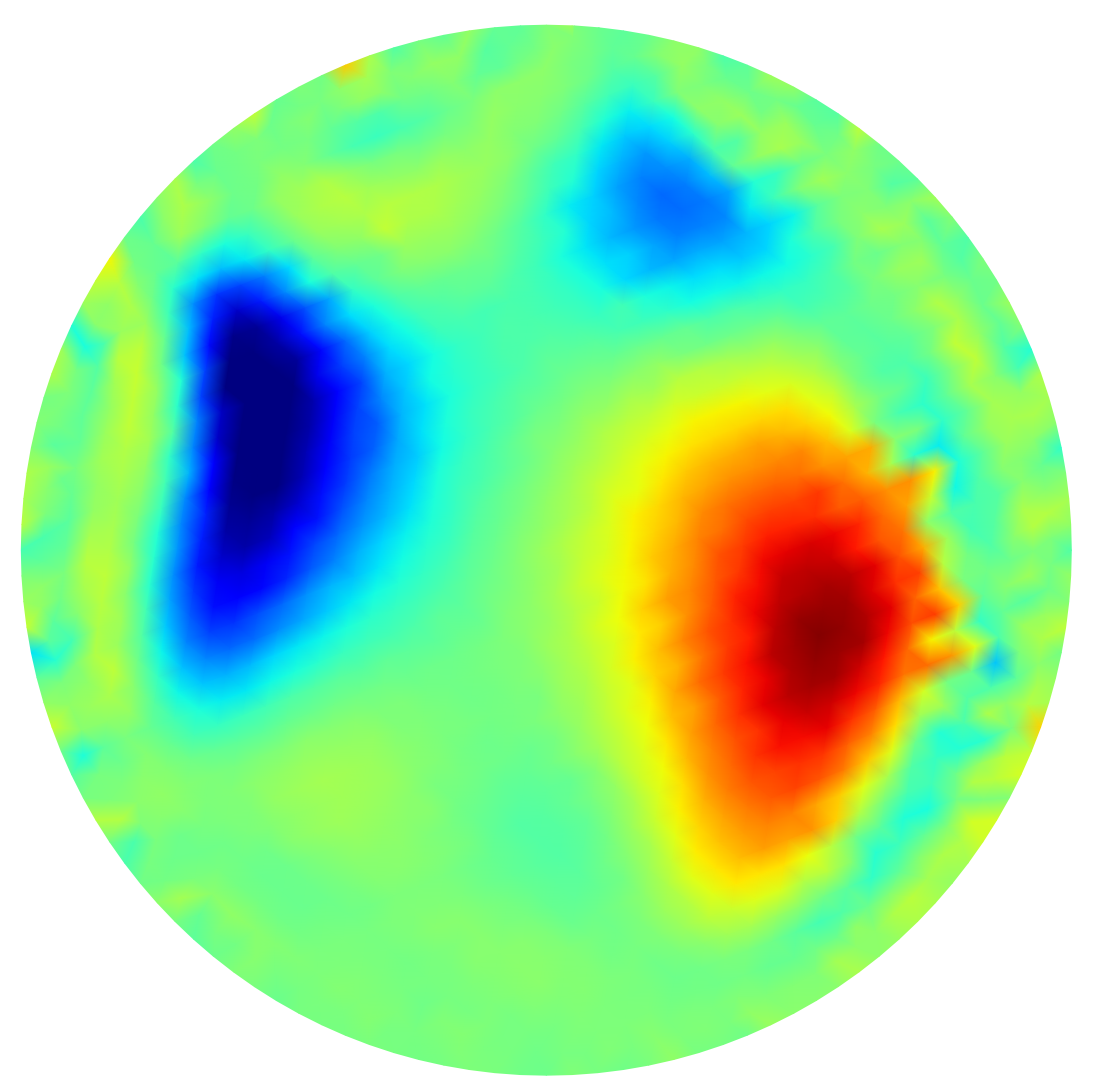}{(0.102, 0.5817)} &
\imgmetric{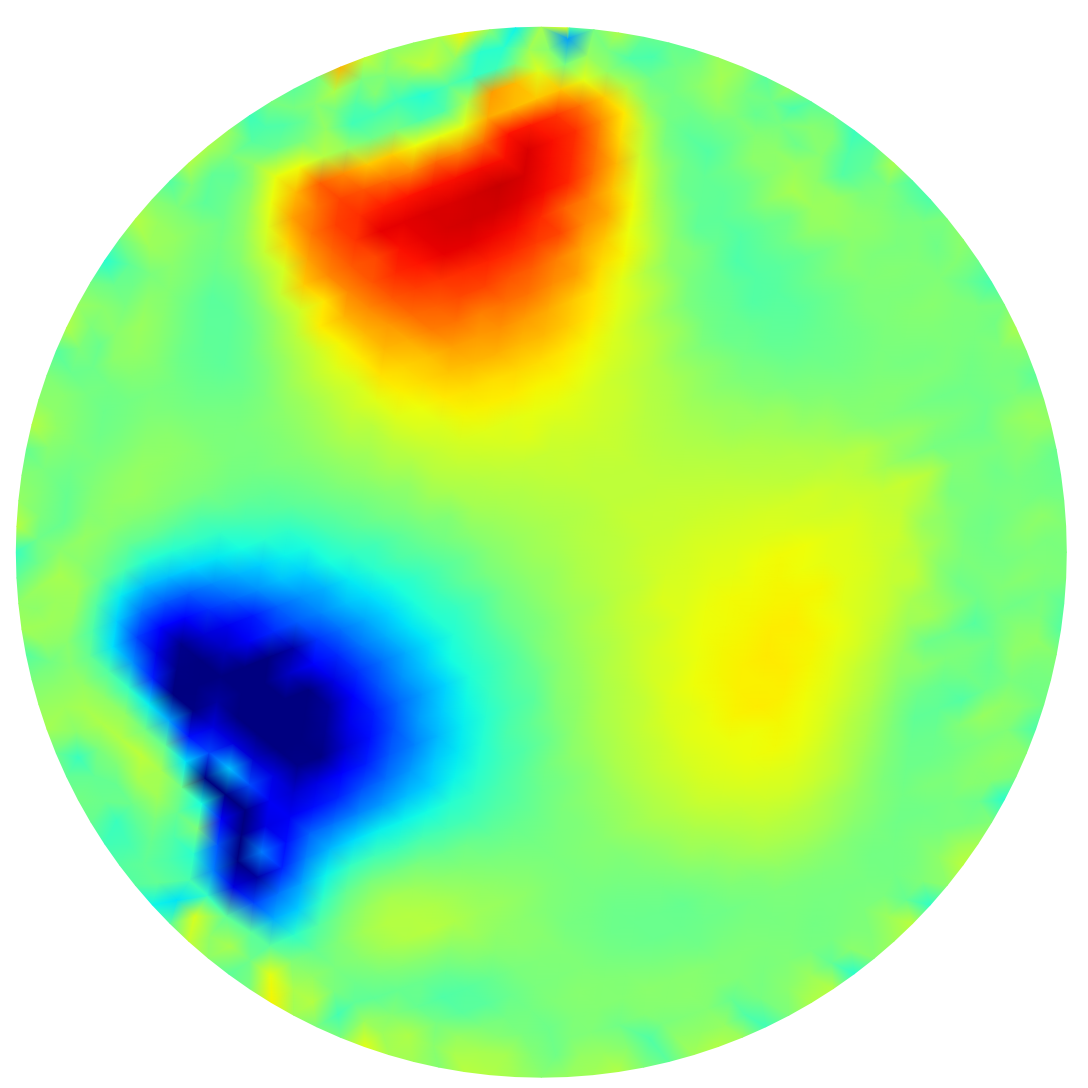}{(0.118, 0.6199)} &
\imgmetric{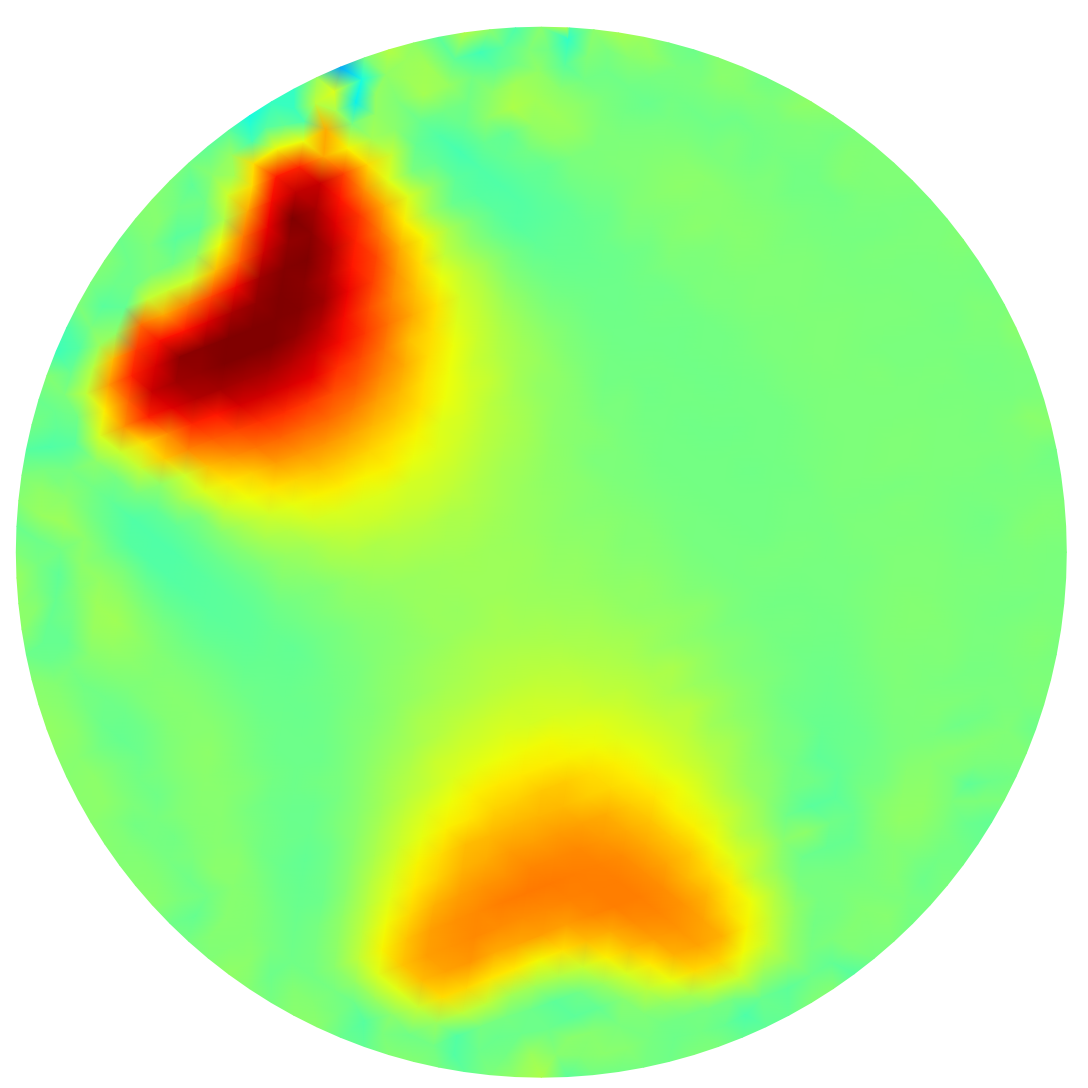}{(0.091, 0.6601)}\\[20mm]

DP-SGS~\cite{ling2025split} & & & & &\\
\imgmetric{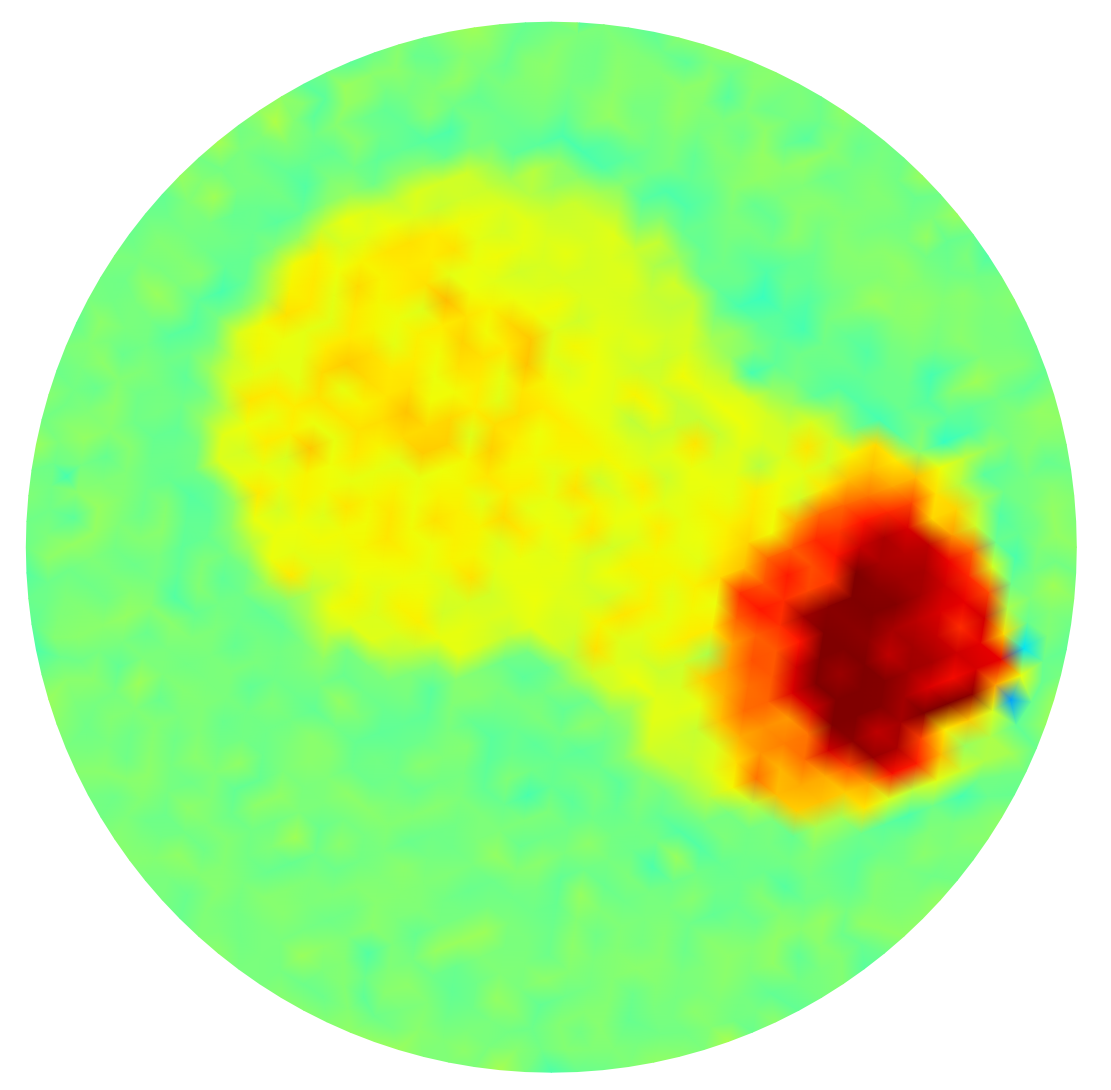}{(0.067, 0.6376)} &
\imgmetric{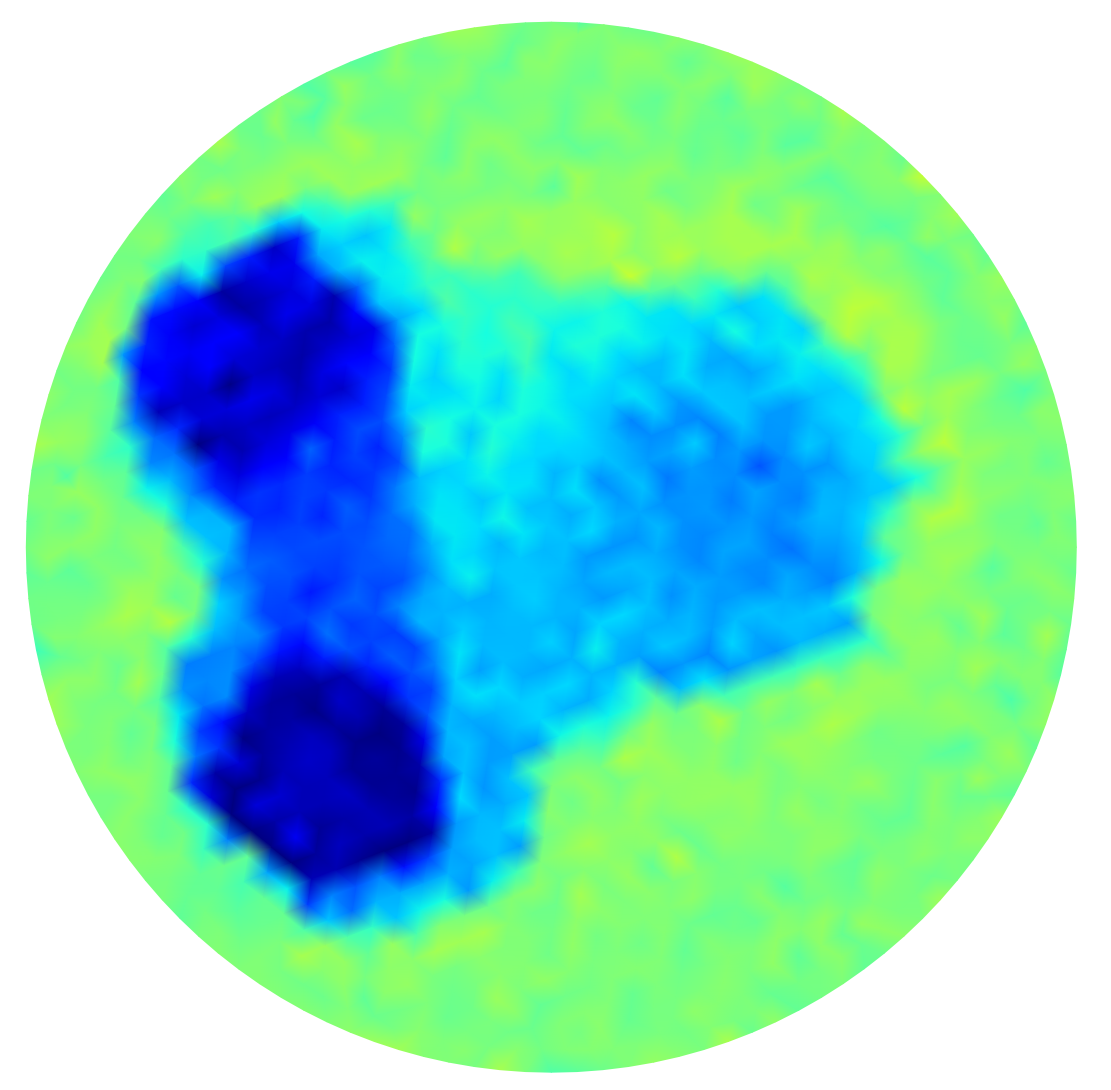}{(0.090, 0.5877)} &
\imgmetric{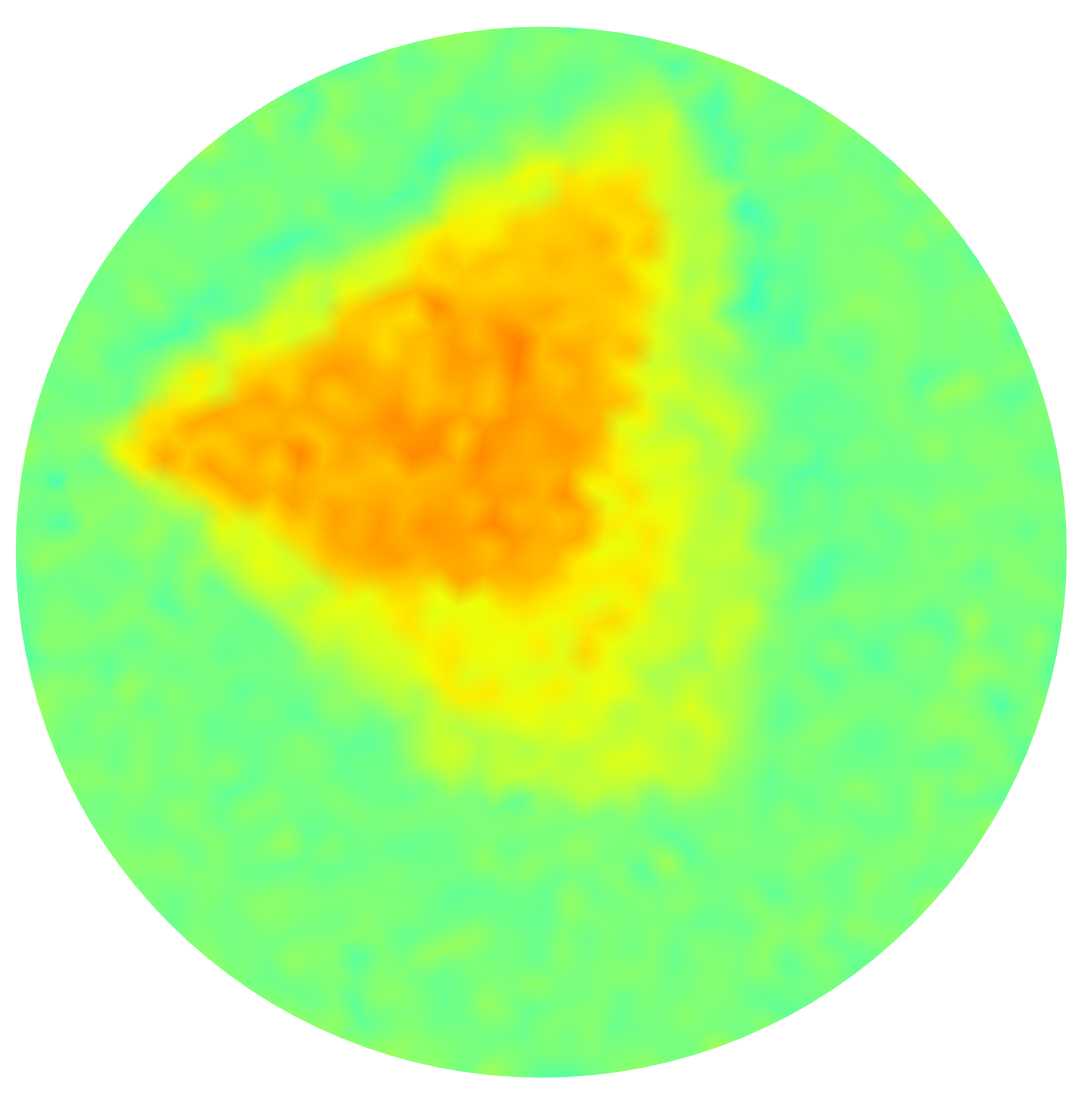}{(0.044, 0.5389)} &
\imgmetric{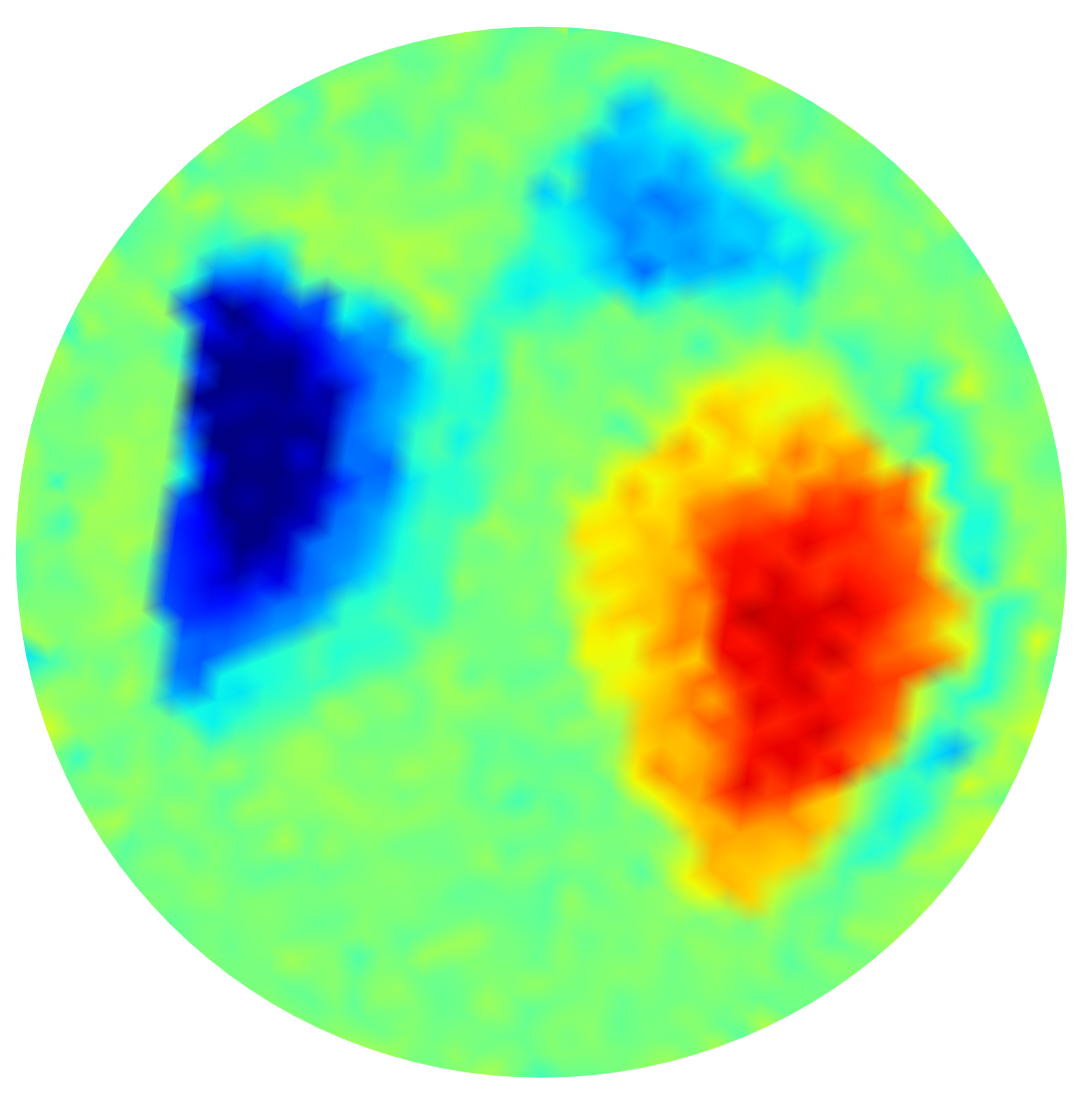}{(0.107, 0.5709)} &
\imgmetric{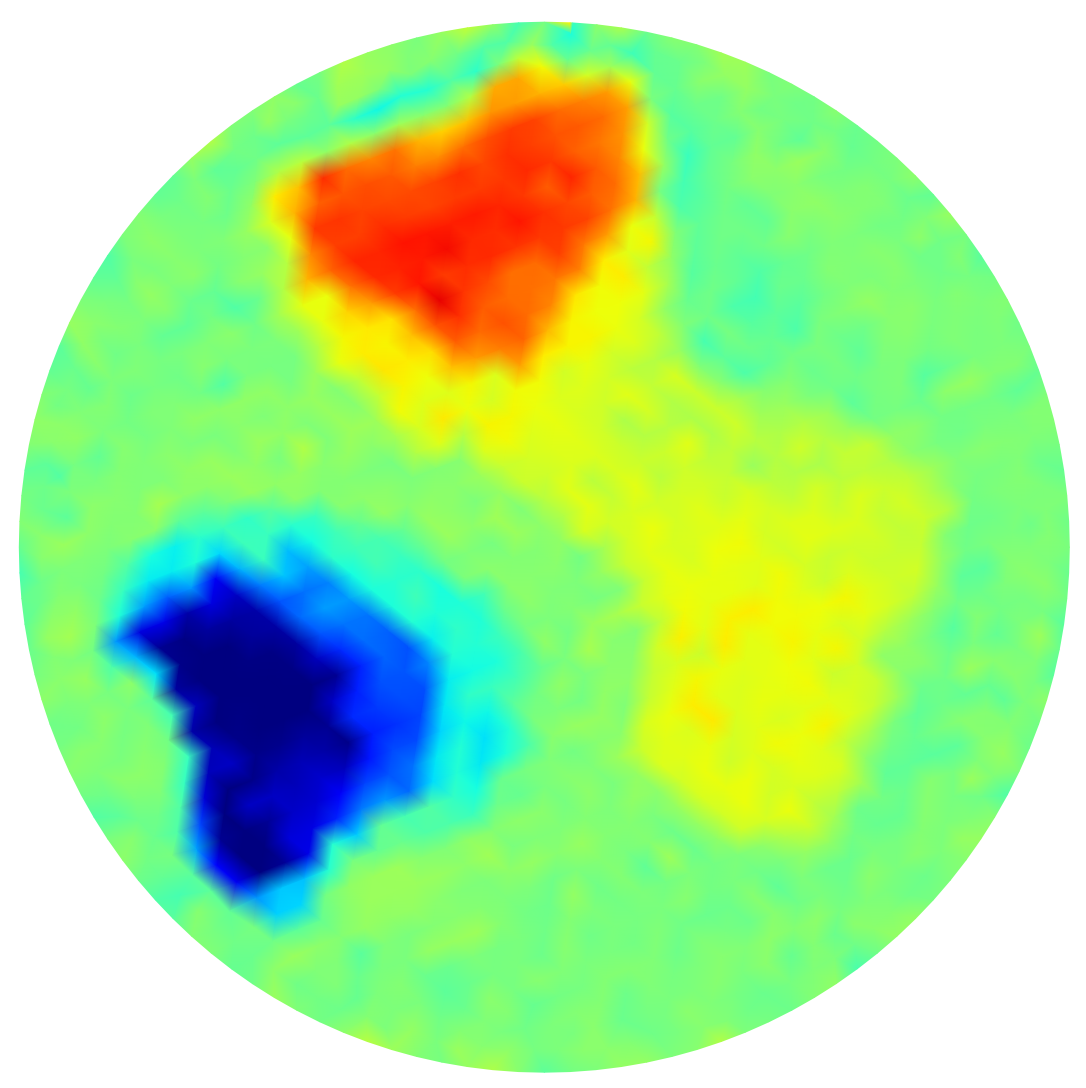}{(0.112, 0.6565)} &
\imgmetric{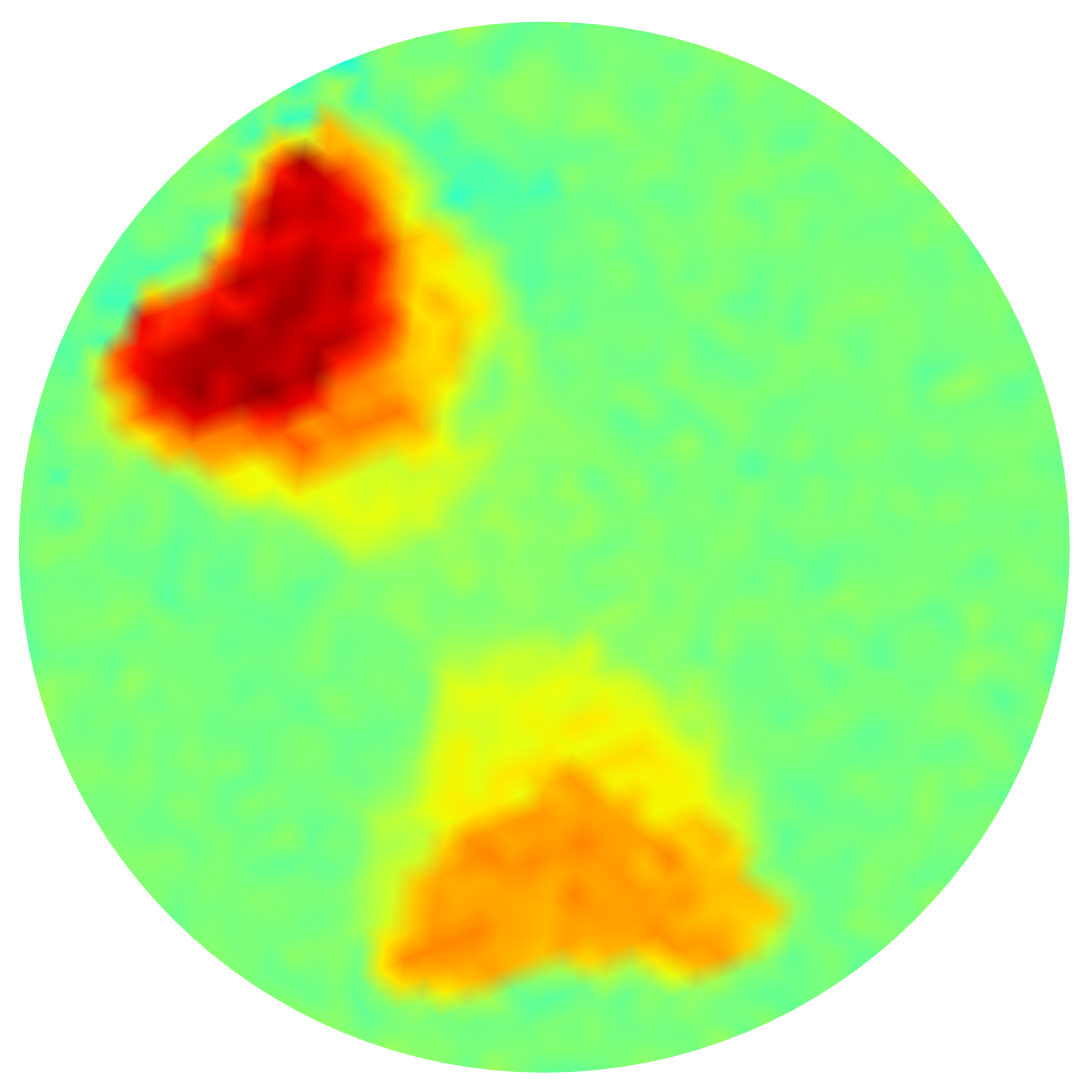}{(0.088, 0.6550)} \\[20mm]

RGN-TV~\cite{borsic2009vivo} & & & & &\\
\imgmetric{results_1602/comparison_crop/RGN_TV_sampling_with_metrics_15}{(0.051, 0.8532)} &
\imgmetric{results_1602/comparison_crop/RGN_TV_sampling_with_metrics_16}{(0.079, 0.8066)} &
\imgmetric{results_1602/comparison_triangle/RGN_TV_sampling_with_metrics_1}{(0.040, 0.8002)} &
\imgmetric{results_1602/comparison_triangle/RGN_TV_sampling_with_metrics_15}{(0.100, 0.7614)}&
\imgmetric{results_1602/comparison_blobs/RGN_TV_sampling_with_metrics_17}{(0.098, 0.8128)} &
\imgmetric{results_1602/comparison_blobs/RGN_TV_sampling_with_metrics_19}{(0.081, 0.8236)}\\[20mm]

Ours & & & & &\\
\imgmetric{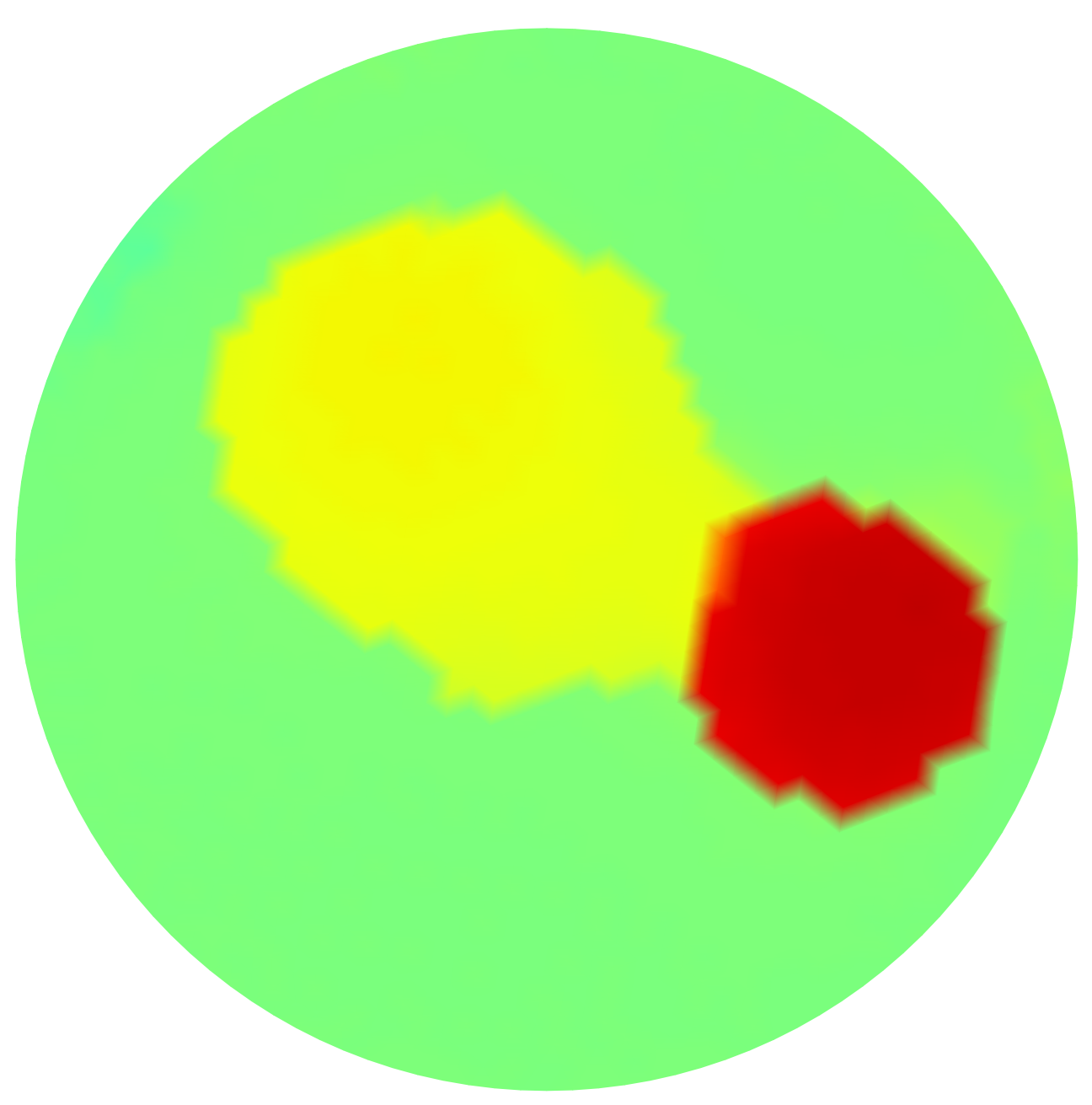}{(\textbf{0.043}, \textbf{0.8609})} &
\imgmetric{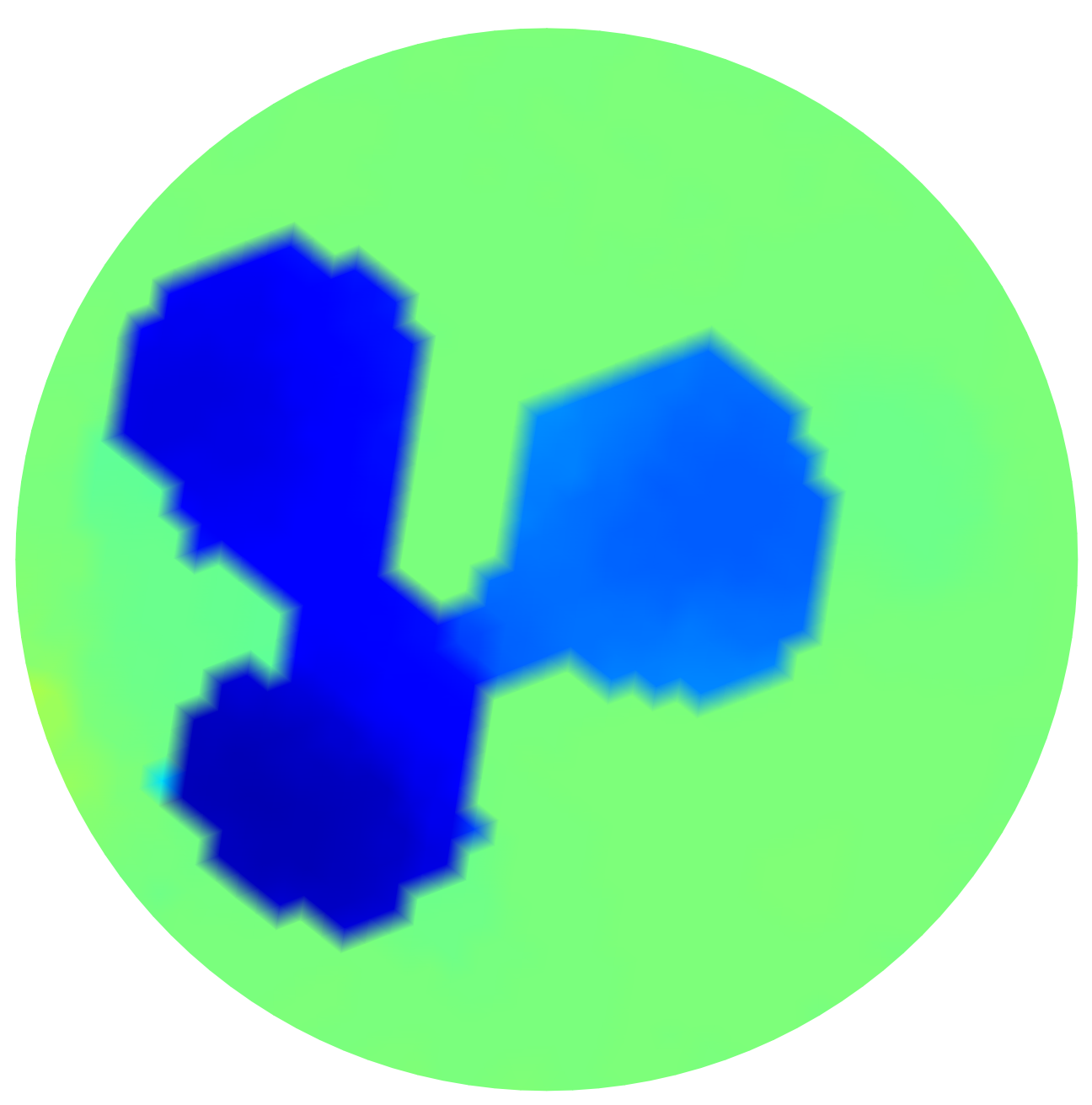}{(\textbf{0.075}, \textbf{0.8553})} &
\imgmetric{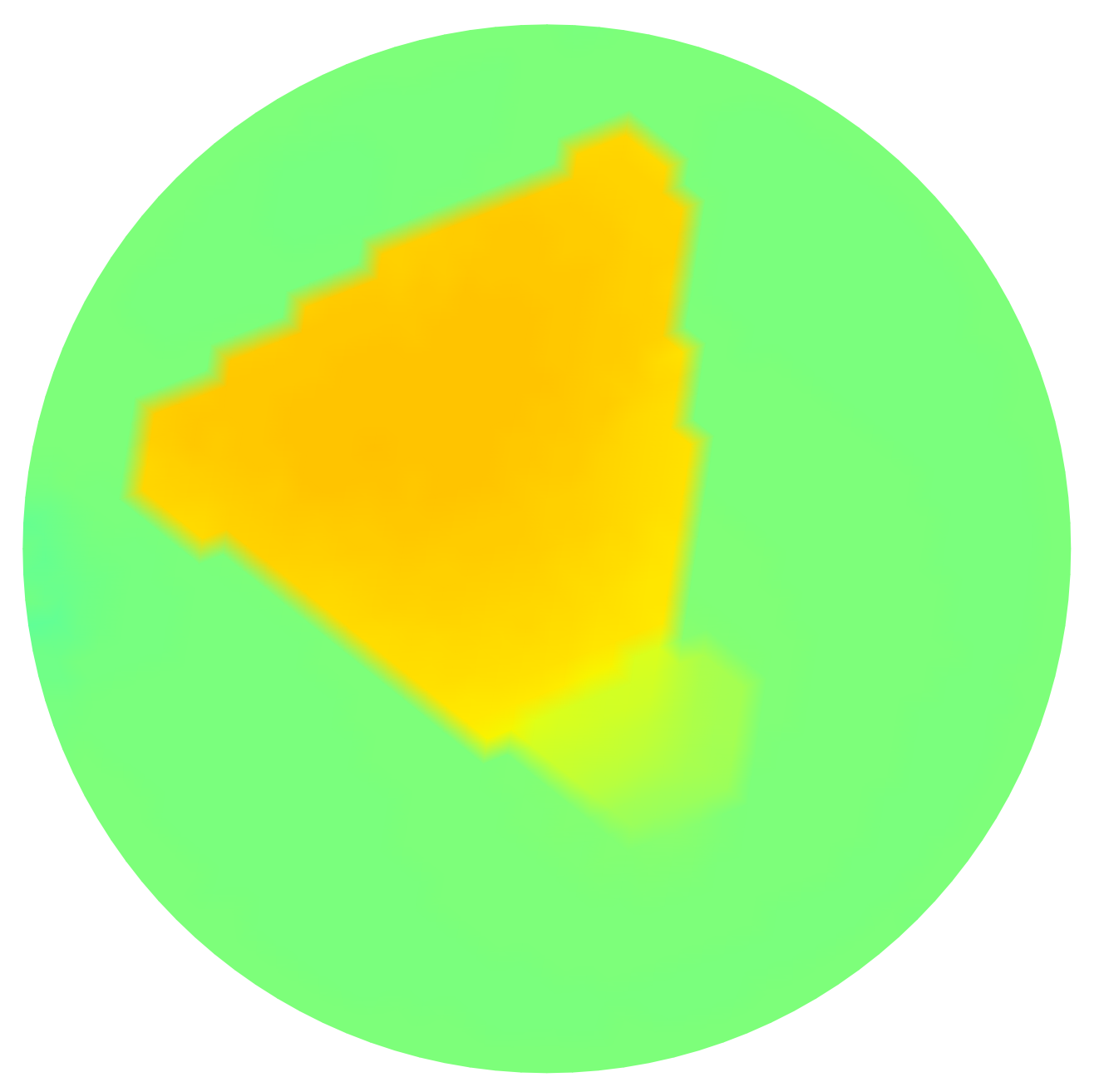}{(0.040, \textbf{0.8131})} &
\imgmetric{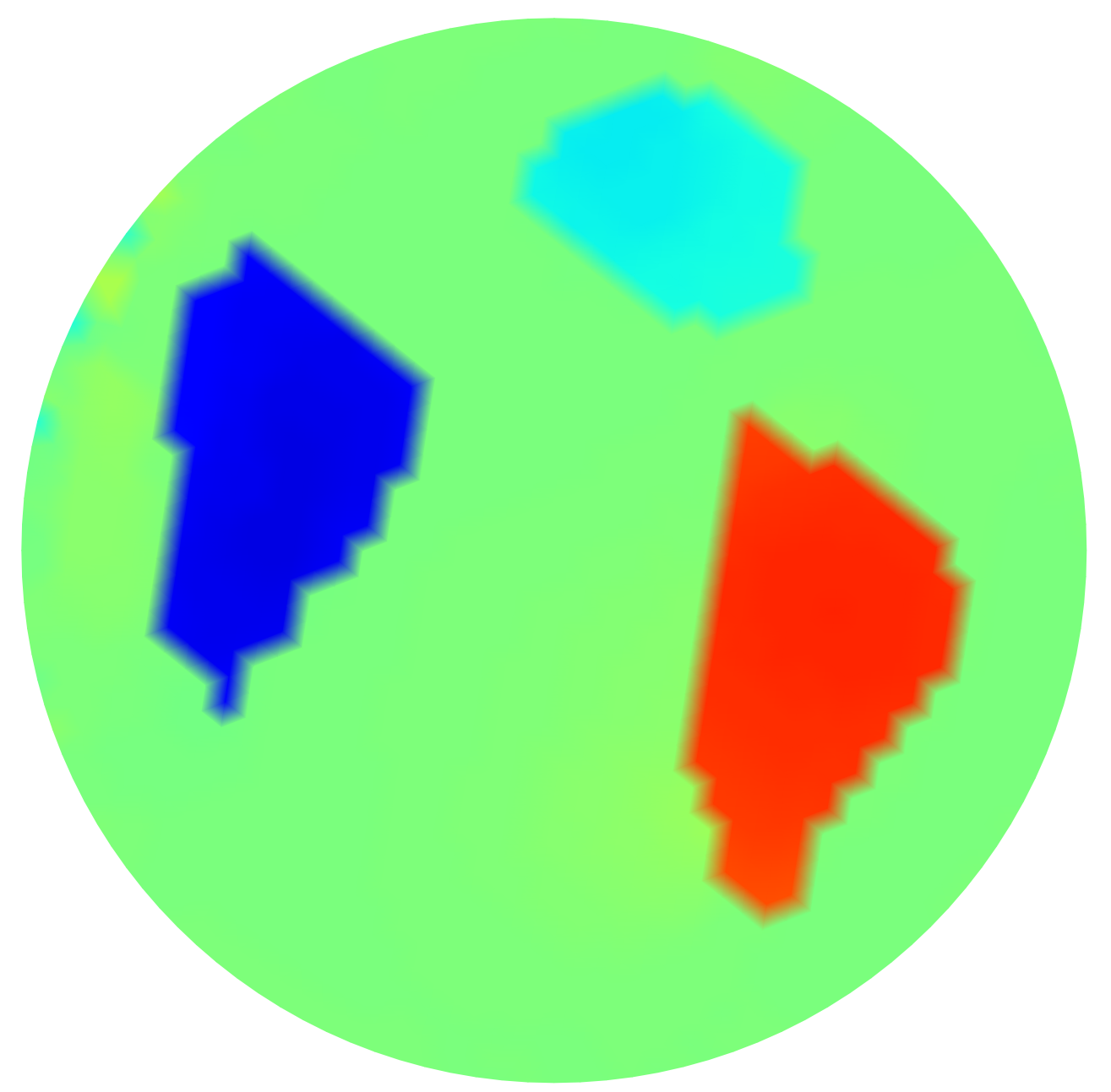}{(0.105, \textbf{0.7678})}&
\imgmetric{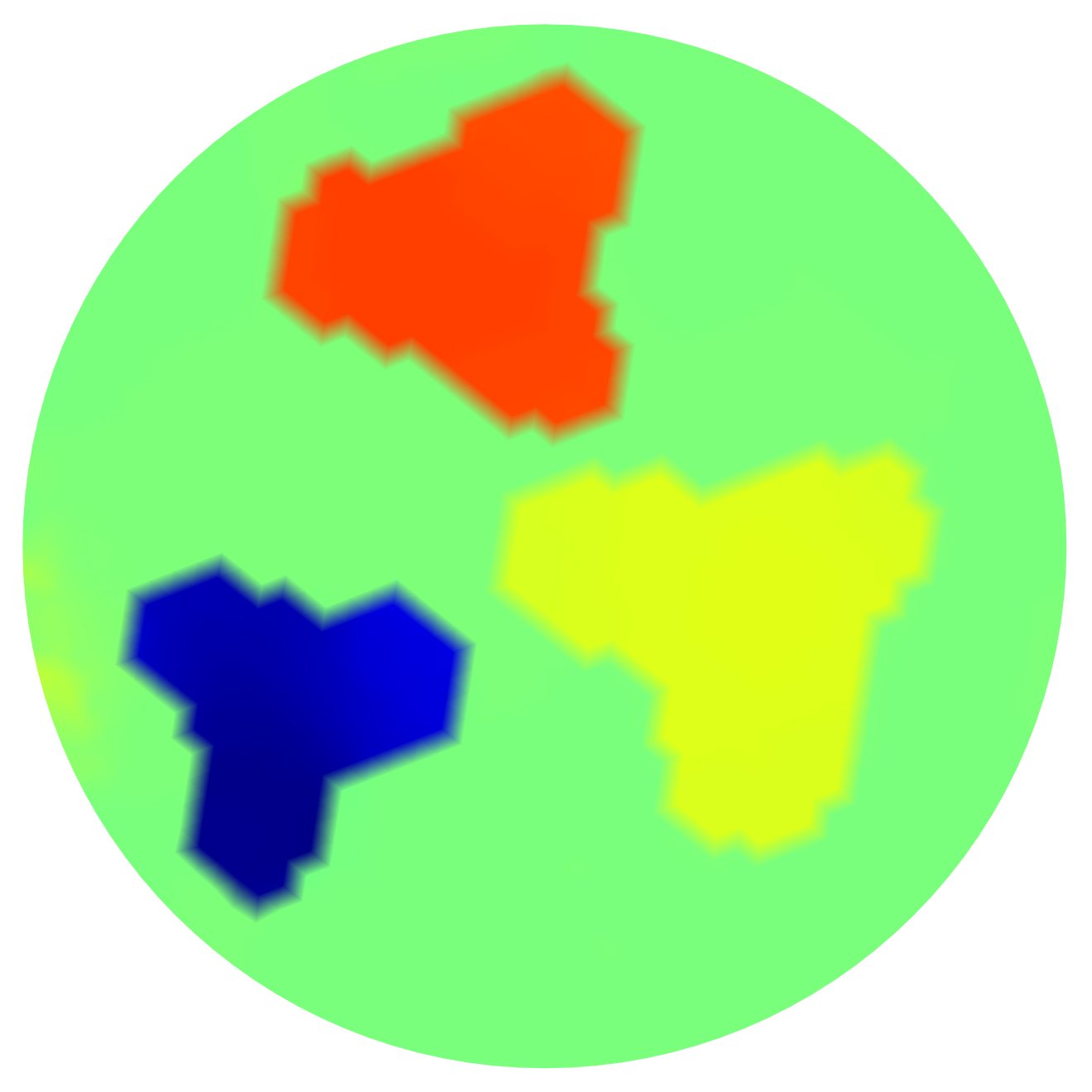}{(\textbf{0.084}, \textbf{0.8761})} &
\imgmetric{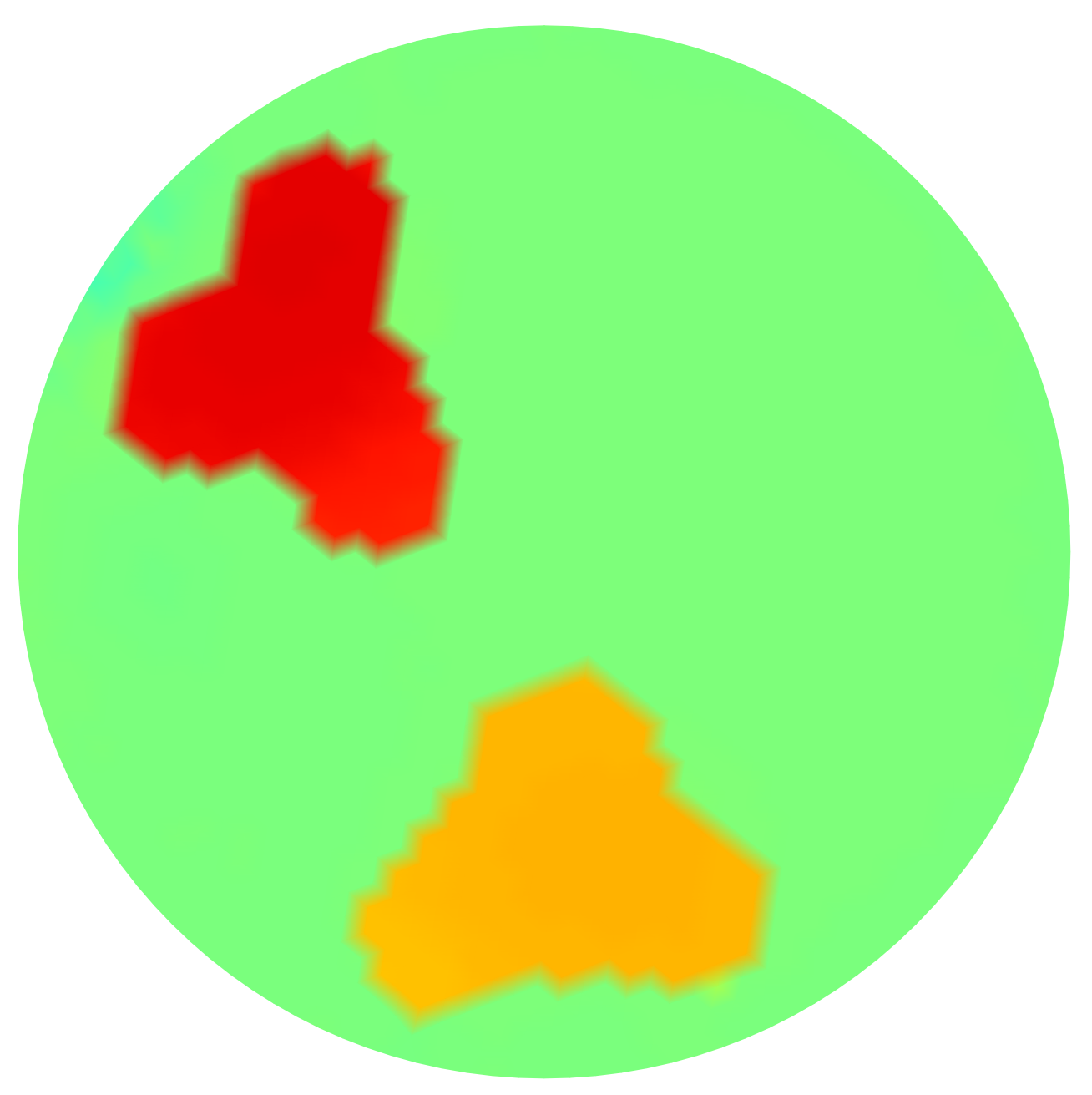}{(\textbf{0.080}, \textbf{0.8791})} \\
\end{tabular}
\end{adjustbox}
\end{table}
\caption{Example 4: Reconstructions obtained by (row-wise from 2-6) RTO-MH, GPnP-BM3D, DP-SGS, RGN-TV, and Ours; GT (first row) is the ground truth conductivity; (RMSE, SSIM) values are reported for each test case (in bold the lowest errors/largest SSIM).}
\label{tab:comp}
\end{figure}

\section{Conclusions}
\label{sec:concl}
 Diffusion posterior sampling represents the most recent and powerful evolution of the pre-trained Plug-and-Play hybrid model concept; it leverages unsupervised diffusion models as generative priors, ensuring superior image quality and total flexibility in solving inverse problems guided exclusively by signal physics. Our proposal introduces a guided target posterior into the DPS Bayesian framework, which successfully overcomes the instabilities caused by the severe ill-posedness of the EIT inverse problem.

Our core contribution is twofold. 
 We introduced
\emph{DDIM-RDPS}, a regularized variant of a standard DPS scheme that augments the learned
implicit diffusion prior with explicit regularization terms, namely total variation and generalized
Tikhonov, to counteract the severe ill-posedness and underdetermination characteristic of EIT. 
From a Bayesian perspective, this regularized formulation amounts to incorporating an additional
structured prior that guides the reverse diffusion trajectory toward physically plausible reconstructions
whenever the measurement data provides insufficient information.

In this framework, we extended the classical diffusion models to handle graph-structured
domains, enabling the method to operate directly on the unstructured triangular meshes that arise
naturally in finite-element discretizations of physical domains, without resorting to lossy grid
interpolations.
 
Extensive numerical experiments on both synthetic and real 2D EIT datasets demonstrated the
effectiveness and versatility of the proposed framework. In particular, DDIM-RDPS proved highly
robust to additive Gaussian and Laplacian measurement noise across multiple noise levels, and
consistently preserved geometric features and conductivity contrast across a wide variety of
inclusion morphologies. 

In the comparative benchmarking against state-of-the-art solvers, including RTO-MH, GPnP-BM3D,
DP-SGS, and RGN-TV, DDIM-RDPS achieved consistently both quantitatively and qualitatively better results across all test configurations.

The proposed framework opens several directions for future work. Extending the approach to
three-dimensional EIT or to other PDE-governed inverse problems, such as  optical diffusion tomography, represents a natural next step. Furthermore,  exploring the integration of more complex physical constraints within the DPS model would further strengthen the reliability of the reconstructions. We noticed that our unconditional graph-based diffusion model had better generation performances on some datasets, so it would be useful to improve its architecture and training to make it more robust and expressive. Another promising line of research is to develop completely mesh-free unconditional diffusion models, that would not depend on specific graphs/meshes.

\section*{Acknowledgments}
Co-funded by the European Union (ERC, SAMPDE, 101041040 – Next Generation EU, Missione 4 Componente 1 CUP D53D23005770006 and CUP D53D23016180001) and by the MIUR Excellence Department Project awarded to Dipartimento di Matematica, Università di Genova, CUP D33C23001110001. Views and opinions expressed are  those of the authors only and do not necessarily reflect those of the European Union or the European Research Council. Neither the European Union nor the granting authority can be held responsible for them. This material is based upon work supported by the Air Force Office of Scientific Research under award number FA8655-23-1-7083.
This work was supported in part by the National Group for Scientific Computation (GNCS-INDAM), Research Projects 2026.

\appendix

\section{Appendix}
\label{appendix}
	\begin{lemma}[Conditional Tweedie's formula]
		If the joint distribution between $x_0, y, x_t$ is given by $p_t(x_0, y, x_t) = p(x_0)p(y|x_0)p_t(x_t|x_0)$ with $p_t(x_t|x_0) = \mathcal{N}(x_t| \alpha_t x_0,  \sigma_t^2 I)$, then 
		\begin{equation}
			\nabla_{x_t} \log p_t(x_t|y) = \frac{1}{ \sigma_t^2}(\alpha_t \mathbb{E}[x_0|x_t, y] - x_t),
		\end{equation}
		where, according to \eqref{eq:forward_diffusion_closed},	$\alpha_t = \sqrt{\bar{\alpha_t}}$ and
		$\sigma_t^2 = 1-{\bar{\alpha_t}}$.
	\end{lemma}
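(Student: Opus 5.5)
The plan is to reproduce the standard derivation of Tweedie's formula, carrying the conditioning on $y$ through every step. The key structural fact is that, under the assumed factorization $p_t(x_0,y,x_t)=p(x_0)p(y|x_0)p_t(x_t|x_0)$, the variable $x_t$ is conditionally independent of $y$ given $x_0$. Hence $p_t(x_t|x_0,y)=p_t(x_t|x_0)=\mathcal{N}(x_t|\alpha_t x_0,\sigma_t^2 I)$. This lets the Gaussian kernel be differentiated exactly as in the unconditional case.

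First, I will write the conditional marginal as a mixture:
\begin{equation}
p_t(x_t|y)=\int p_t(x_t|x_0)\,p(x_0|y)\,dx_0,
\end{equation}
which follows from integrating the joint over $x_0$ and dividing by $p(y)=\int p(x_0)p(y|x_0)\,dx_0$. Second, I will differentiate under the integral sign, using
\begin{equation}
\nabla_{x_t}p_t(x_t|x_0)=p_t(x_t|x_0)\,\frac{\alpha_t x_0-x_t}{\sigma_t^2}.
\end{equation}
This gives
\begin{equation}
\nabla_{x_t}p_t(x_t|y)=\frac{1}{\sigma_t^2}\int(\alpha_t x_0-x_t)\,p_t(x_t|x_0)\,p(x_0|y)\,dx_0 .
\end{equation}
Third, I will divide by $p_t(x_t|y)$. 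By Bayes' rule and the conditional independence above,
\begin{equation}
\frac{p_t(x_t|x_0)\,p(x_0|y)}{p_t(x_t|y)}=p(x_0|x_t,y).
\end{equation}
The integral therefore becomes $\alpha_t\mathbb{E}[x_0|x_t,y]-x_t$. Dividing by $\sigma_t^2$ yields the claimed identity. Substituting $\alpha_t=\sqrt{\bar\alpha_t}$ and $\sigma_t^2=1-\bar\alpha_t$ then recovers the notation of \eqref{eq:forward_diffusion_closed}.

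The main obstacle is justifying the exchange of gradient and integral. I would handle it by dominated convergence. For $x_t$ in a compact set, $|\alpha_t x_0-x_t|\,p_t(x_t|x_0)$ is bounded uniformly in $x_0$, because the Gaussian factor decays faster than the linear factor grows. This bound is integrable against the probability measure $p(x_0|y)$, so no moment assumption on the prior is needed. I also need $p_t(x_t|y)>0$, which holds since it is a Gaussian mixture. One point of care is that $y$ must have positive marginal density $p(y)>0$, so that $p(x_0|y)$ is well defined; I will state this as a standing assumption.

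Finally, to connect with Proposition~\ref{pro:cpm}, I would solve the lemma for $\mathbb{E}[x_0|x_t,y]$. I would then split $\nabla\log p_t(x_t|y)$ via the score decomposition \eqref{eq:posterior_scoredps1} and identify the prior-score part with the unconditional Tweedie formula \eqref{eq:twed_diff}. The extra factor $p_R$ in \eqref{eq:preg} is not covered by the lemma's joint model. There it would be absorbed formally by replacing $p(y|x_0)$ with a reweighted likelihood, but that belongs to the proof of the proposition, not of this lemma.
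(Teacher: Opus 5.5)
Your argument is correct and follows the paper's proof essentially step for step. You use the mixture representation $p_t(x_t|y)=\int p_t(x_t|x_0)\,p(x_0|y)\,dx_0$ via conditional independence, differentiate under the integral, and use Bayes' rule to recognize $p_t(x_0|x_t,y)$, which turns the score into $\mathbb{E}[\nabla_{x_t}\log p_t(x_t|x_0)\mid x_t,y]$ before inserting the Gaussian score. Your dominated-convergence justification and the checks $p_t(x_t|y)>0$, $p(y)>0$ supply rigor that the paper leaves implicit; the bound in fact needs no compactness, since $|u|e^{-|u|^2/(2\sigma_t^2)}\le \sigma_t e^{-1/2}$ for all $u$.
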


	\begin{proof}
    We have
		\begin{align}
			\nabla_{x_t} \log p_t(x_t|y) &= \frac{\nabla_{x_t} p_t(x_t|y)}{p_t(x_t|y)} \\
			&= \frac{1}{p_t(x_t|y)} \nabla_{x_t} \int p_t(x_t|x_0, y) p(x_0|y) dx_0 \\
			&= \frac{1}{p_t(x_t|y)} \nabla_{x_t} \int p_t(x_t|x_0) p(x_0|y) dx_0 \\
			&= \frac{1}{p_t(x_t|y)} \int p(x_0|y) \nabla_{x_t} p_t(x_t|x_0) dx_0 \\
			&= \frac{1}{p_t(x_t|y)} \int p(x_0|y) p_t(x_t|x_0, y) \nabla_{x_t} \log p_t(x_t|x_0) dx_0 \\
			&= \int p_t(x_0|x_t, y) \nabla_{x_t} \log p_t(x_t|x_0) dx_0 \\
			&= \mathbb{E}_{p_t(x_0|x_t, y)}[\nabla_{x_t} \log p_t(x_t|x_0)],
		\end{align}
		where the steps  are due to the conditional independence between $x_t$ and $y$ given $x_0$, such that $p_t(x_t|x_0, y) = p_t(x_t|x_0)$. 
		
		For the Gaussian perturbation kernel $p_t(x_t|x_0) = \mathcal{N}(x_t|\alpha_t x_0, \sigma_t^2 I)$, we have:
		\begin{equation}
			\nabla_{x_t} \log p_t(x_t|x_0) = \frac{1}{ \sigma_t^2}(\alpha_t x_0 - x_t).
		\end{equation}
		Plugging this result into the expectation derived above, we conclude the proof.
	\end{proof}		

We are now able to prove Proposition~\ref{pro:cpm}.
    \begin{proof}[Proof of Proposition~\ref{pro:cpm}]
		Following Lemma 1, we have 
		\begin{equation}
			\label{eq:E}
		\mathbb{E}[x_0|x_t, y] = 	\nabla_{x_t} \log p_t(x_t|y) \frac{\sigma_t^2}{\alpha_t}  + \frac{x_t}{\alpha_t}.
		\end{equation}
		Then, by \eqref{eq:posterior_scoredps1}, we obtain
		\begin{equation}
			\mathbb{E}[x_0|x_t, y] = \frac{x_t}{\alpha_t} + \frac{\sigma_t^2}{\alpha_t}\Big(	\nabla_{x_t} \log p_t(x_t)
			+
			\nabla_{x_t} \log p_t(y \mid x_t)
			+
			\nabla_{x_t}  \log p_R(x_t)\Big) .
		\end{equation}
		According to $\mathbb{E}[x_0 \mid x_t] $ defined in \eqref{eq:twed_diff}, we obtain \eqref{eq:Ec}.
\end{proof}

\bibliographystyle{abbrv}
\bibliography{refs}

\end{document}